\def\cl@chapter{}
\newcommand{\ie}{i.\,e.\xspace}
\newcommand{\degree}{\operatorname{deg}}
\newcommand{\diam}{\operatorname{diam}}
\renewcommand{\O}{\mathcal O}
\newcommand{\preproc}{PreProcessing}
\newcommand{\deltabfs}{{\sc Olh}}
\newcommand{\sampl}{{\sc Ocl}}
\newcommand{\degcut}{{\sc DegCut}}
\newcommand{\degbound}{{\sc DegBound}}
\newcommand{\nbcut}{{\sc NBCut}}
\newcommand{\nbbound}{{\sc NBBound}}
\newcommand{\mvis}{m_{\text{vis}}}
\newcommand{\degr}[1]{\mathrm{deg}(#1)}
\newcommand{\odegr}[1]{\mathrm{outdeg}(#1)}
\newcommand{\Reac}[1]{R(#1)}
\newcommand{\reac}[1]{r(#1)}
\newcommand{\neig}[2]{\Gamma_{#1}(#2)}
\newcommand{\sneig}[2]{\gamma_{#1}(#2)}
\newcommand{\tsneig}[2]{\tilde{\gamma}_{#1}(#2)}
\newcommand{\uneig}[2]{\tilde{\gamma}_{#1}(#2)}
\newcommand{\ball}[2]{N_{#1}(#2)}
\newcommand{\sball}[2]{n_{#1}(#2)}
\newcommand{\farn}[1]{f(#1)}
\newcommand{\lfarns}{L}
\newcommand{\lfarn}[2]{\lfarns(#1,#2)}
\newcommand{\lfarncut}[3]{\lfarns^{\text{CUT}}_{#1}(#2,#3)}
\newcommand{\lfarnnb}[2]{\lfarns^{\text{NB}}(#1,#2)}
\newcommand{\lfarnnbr}[1]{\lfarns^{\text{NB}}(#1)}
\newcommand{\lfarnlb}[3]{\lfarns^{\text{LB}}_{#1}(#2,#3)}
\newcommand{\lsum}[2]{S(#1,#2)}
\newcommand{\lsumcut}[3]{S^{\text{CUT}}_{#1}(#2,#3)}
\newcommand{\lsumnb}[2]{S^{\text{NB}}(#1,#2)}
\newcommand{\lsumlb}[3]{S^{\text{LB}}_{#1}(#2,#3)}
\newcommand{\clos}[1]{c(#1)}
\newcommand{\cb}{\texttt{\upshape computeBounds}}
\newcommand{\ub}{\texttt{\upshape updateBounds}}
\newcommand{\cbnb}{\texttt{\upshape \cb NB}}
\newcommand{\cbdeg}{\texttt{\upshape \cb Deg}}
\newcommand{\ubcut}{\texttt{\upshape \ub BFSCut}}
\newcommand{\ublb}{\texttt{\upshape \ub LB}}
\newcommand{\weight}{w}
\DeclareMathOperator{\ecc}{ecc}
\DeclareMathOperator{\outdeg}{outdeg}
\newcommand{\C}{\mathcal{C}}
\begin{document}

\markboth{E. Bergamini et al.}{Computing top-$k$ Closeness Centrality Faster in Unweighted Graphs}

\title{Computing top-$k$ Closeness Centrality Faster in Unweighted Graphs}
\author{ELISABETTA BERGAMINI
\affil{Karlsruhe Institute of Technology (KIT)}
MICHELE BORASSI
\affil{IMT Institute for Advanced Studies Lucca}
PIERLUIGI CRESCENZI
\affil{Universit\`a di Firenze}
ANDREA MARINO
\affil{Universit\`a di Pisa}
HENNING MEYERHENKE
\affil{Karlsruhe Institute of Technology (KIT), Germany}}

\begin{abstract} 
Given a connected graph $G=(V,E)$, the closeness centrality of a vertex $v$ is defined as $\frac{n-1}{\sum_{w \in V} d(v,w)}$. This measure is widely used in the analysis of real-world complex networks, and the problem of selecting the $k$ most central vertices has been deeply analysed in the last decade. However, this problem is computationally not easy, especially for large networks: in the first part of the paper, we prove that it is not solvable in time $\O(|E|^{2-\epsilon})$ on directed graphs, for any constant $\epsilon>0$, under reasonable complexity assumptions. Furthermore, we propose a new algorithm for selecting the $k$ most central nodes in a graph: we experimentally show that this algorithm improves significantly both the textbook algorithm, which is based on computing the distance between all pairs of vertices, and the state of the art. For example, we are able to compute the top $k$ nodes in few dozens of seconds in real-world networks with millions of nodes and edges. Finally, as a case study, we compute the $10$ most central actors in the IMDB collaboration network, where two actors are linked if they played together in a movie, and in the Wikipedia citation network, which contains a directed edge from a page $p$ to a page $q$ if $p$ contains a link to $q$. 
\end{abstract}

%
%
 \begin{CCSXML}
<ccs2012>
<concept>
<concept_id>10003120.10003130.10003134.10003293</concept_id>
<concept_desc>Human-centered computing~Social network analysis</concept_desc>
<concept_significance>500</concept_significance>
</concept>
<concept>
<concept_id>10002950.10003624.10003633.10010917</concept_id>
<concept_desc>Mathematics of computing~Graph algorithms</concept_desc>
<concept_significance>500</concept_significance>
</concept>
</ccs2012>
\end{CCSXML}

\ccsdesc[500]{Human-centered computing~Social network analysis}
\ccsdesc[500]{Mathematics of computing~Graph algorithms}

%
%


\keywords{Centrality, Closeness, Complex Networks}


\maketitle



\pagestyle{plain}
\pagenumbering{arabic}
\setcounter{page}{0}

\newpage

\section{Introduction}
The problem of identifying the most central nodes in a network is a fundamental question that has been asked many times in a plethora of research areas, such as biology, computer science, sociology, and psychology. Because of the importance of this question, dozens of centrality measures have been introduced in the literature (for a recent survey, see~\cite{Boldi2014}). Among these measures, closeness centrality is certainly one of the oldest and of the most widely used~\cite{Bavelas1950}: almost all books dealing with network analysis discuss it (for example, \cite{Newman2010}), and almost all existing network analysis libraries implement algorithms to compute it. 

In a connected graph, the closeness centrality of a node $v$ is defined as $\clos v = \frac{n-1}{\sum_{w \in V} d(v,w)}$. The idea behind this definition is that a central node should be very efficient in spreading information to all other nodes: for this reason, a node is central if the average number of links needed to reach another node is small. If the graph is not (strongly) connected, researchers have proposed various ways to extend this definition: for the sake of simplicity, we focus on Lin's index, because it coincides with closeness centrality in the connected case and because it is quite established in the literature \cite{Lin1976,Wasserman1994,Boldi2013,Boldi2014,Olsen2014}. However, our algorithms can be adapted very easily to compute other possible generalizations, such as harmonic centrality \cite{Marchiori2000} and exponential centrality \cite{Wang2014} (see Sect. \ref{sec:defs} for more details).

In order to compute the $k$ vertices with largest closeness, the textbook algorithm computes $\clos v$ for each $v$ and returns the $k$ largest found values. The main bottleneck of this approach is the computation of $d(v,w)$ for each pair of vertices $v$ and $w$ (that is, solving the All Pairs Shortest Paths or APSP problem). This can be done in two ways: either by using fast matrix multiplication, in time $\O(n^{2.373}\log n)$ \cite{Zwick2002,Williams2012}, or by performing a breadth-first search (in short, BFS) from each vertex $v \in V$, in time $\O(mn)$, where $n=|V|$ and $m=|E|$. Usually, the BFS approach is preferred because the other approach contains big constants hidden in the $\O$ notation, and because real-world networks are usually sparse, that is, $m$ is not much bigger than $n$. However, also this approach is too time-consuming if the input graph is very big (with millions of nodes and hundreds of millions of edges). 

Our first result proves that, in the worst case, the BFS-based approach cannot be improved, under reasonable complexity assumptions. Indeed, we construct a reduction from the problem of computing the most central vertex (the case $k=1$) to the Orthogonal Vector problem \cite{Abboud2016}. This reduction implies that we cannot compute the most central vertex in $\O(m^{2-\epsilon})$ for any $\epsilon>0$, unless the Orthogonal Vector conjecture \cite{Abboud2016} is false. Note that the Orthogonal Vector conjecture is implied by the well-known Strong Exponential Time Hypothesis (SETH, \cite{Impagliazzo2001}), and hence all our results hold also if we assume SETH. This hypothesis is heavily used in the context of polynomial-time reductions, and, informally, it says that the \textsc{Satisfiability} problem is not solvable in time $\O((2-\epsilon)^N)$ for any $\epsilon>0$, where $N$ is the number of variables. 
Our result still holds if we assume the input graph to be sparse, that is, if we assume that $m=\O(n)$ (the general non-sparse case follows immediately;
of course, if the input graph is not sparse, then the BFS-based approach can be improved using fast matrix multiplication). The proof is provided in Sect.~\ref{sec:comp}.

Knowing that the BFS-based algorithm cannot be improved in the worst case, in the second part of the paper we provide a new exact algorithm that performs much better on real-world networks, making it possible to compute the $k$ most central vertices in networks with millions of nodes and hundreds of millions of edges. The new approach combines the BFS-based algorithm with a pruning technique: during the algorithm, we compute and update upper bounds on the closeness of all the nodes, and we exclude a node $v$ from the computation as soon as its upper bound is ``small enough'', that is, we are sure that $v$ does not belong to the top $k$ nodes. We propose two different strategies to set the initial bounds, and two different strategies to update the bounds during the computation: this means that our algorithm comes in four different variations. The experimental results show that different variations perform well on different kinds of networks, and the best variation of our algorithm drastically outperforms both a probabilistic approach \cite{Okamoto2008}, and the best exact algorithm available until now \cite{Olsen2014}. We have computed for the first time the $10$ most central nodes in networks with millions of nodes and hundreds of millions of edges, 
and do so in very little time. 
A significant example is the \texttt{wiki-Talk} network, which was also used in \cite{Saryuce2013}, where the authors propose an algorithm to update closeness centralities after edge additions or deletions. Our performance is about $30\,000$ times better than the performance of the textbook algorithm: if only the most central node is needed, we can recompute it from scratch more than $150$ times faster than the geometric average update time in \cite{Saryuce2013}. Moreover, our approach is not only very efficient, but it is also very easy to code, making it a very good candidate to be implemented in existing graph libraries. We provide an implementation of it in NetworKit \cite{Staudt2014} and of one of its variations in Sagemath \cite{Csardi2006}. We sketch the main ideas of the algorithm in Sect.~\ref{sec:overview}, and we provide all details in Sect.~\ref{sec:nblb}-\ref{sec:disc}. We experimentally evaluate the efficiency of the new algorithm in Sect.~\ref{sec:exp}. 

Also, our approach can be easily extended to any centrality measure in the form $c(v) = \sum_{w \neq v} f(d(v, w))$, where $f$ is a decreasing function. Apart from Lin's index, almost all the approaches that try to generalize closeness centrality to disconnected graphs fall under this category. The most popular among these measures is \textit{harmonic centrality} \cite{Marchiori2000}, defined as $h(v) = \sum_{w \neq v} \frac{1}{d(v, w)}$. For the sake of completeness, in Sect.~\ref{sec:exp} we show that our algorithm performs well also for this measure.

In the last part of the paper (Sect.~\ref{sec:imdb}, \ref{sec:wiki}), we consider two case studies: the actor collaboration network ($1\,797\,446$ vertices, $72\,880\,156$ edges) and the Wikipedia citation network ($4\,229\,697$ vertices, $102\,165\,832$ edges). In the actor collaboration network, we analyze the evolution of the $10$ most central vertices, considering snapshots taken every 5 years between 1940 and 2014. The computation was performed in little more than $45$ minutes. In the Wikipedia case study, we consider both the standard citation network, that contains a directed edge $(p,q)$ if $p$ contains a link to $q$, and the reversed network, that contains a directed edge $(p,q)$ if $q$ contains a link to $p$. For most of these graphs, we are able to compute the $10$ most central pages in a few minutes, making them available for further analyses.

\subsection{Related Work}

Closeness is a ``traditional'' definition of centrality, and consequently it was not ``designed with scalability in mind'', as stated in \cite{Kang2011}. Also in \cite{Chen2012}, it is said that closeness centrality can ``identify influential nodes'', but it is ``incapable to be applied in large-scale networks due to the computational complexity''. The simplest solution considered was to define different measures that might be related to closeness centrality \cite{Kang2011}.

\paragraph{\textbf{Hardness results}} A different line of research has tried to develop more efficient algorithms, or lower bounds for the complexity of this problem. In particular, in \cite{Borassi2015} it is proved that finding the least closeness-central vertex is not subquadratic-time solvable, unless SETH is false. In the same line, it is proved in \cite{Abboud2016} that finding the most central vertex is not solvable in $\O(m^{2-\epsilon})$, assuming the Hitting Set conjecture. This conjecture is very recent, and there are not strong evidences that it holds, apart from its similarity to the Orthogonal Vector conjecture. Conversely, the Orthogonal Vector conjecture is more established: it is implied both by the Hitting Set conjecture \cite{Abboud2016}, and by SETH \cite{Williams2005}, a widely used assumption in the context of polynomial-time reductions \cite{Impagliazzo2001,Williams2005,Williams2010,Patrascu2010,Roditty2013,Abboud2014,Abboud2014a,Abboud2015,Borassi2015,Abboud2016,Borassi2016}. Similar hardness results were also proved in the dense weighted context \cite{Abboud2015}, by linking the complexity of centrality measures to the complexity of computing the All Pairs Shortest Paths.

\paragraph{\textbf{Approximation algorithms}} In order to deal with the above hardness results, it is possible to design approximation algorithms: the simplest approach samples the distance between a node $v$ and $l$ other nodes $w$, and returns the average of all values $d(v,w)$ found \cite{Eppstein2004}. The time complexity is $\O(lm)$, to obtain an approximation $\tilde c(v)$ of the centrality of each node $v$ such that $\mathbb P\left(\left|\frac{1}{\tilde c(v)}-\frac{1}{\clos{v}}\right| \geq \epsilon D \right) \leq 2e^{-\Omega\left(l\epsilon^2\right)}$, where $D$ is the diameter of the graph (the diameter is the maximum distance between any two connected nodes). A more refined approximation algorithm is provided in \cite{Cohen2014}, which combines the sampling approach with a $3$-approximation algorithm: this algorithm has running time $\O(lm)$, and it provides an estimate $\tilde c(v)$ of the centrality of each node $v$ such that $\mathbb P\left(\left|\frac{1}{\tilde c(v)}-\frac{1}{\clos{v}}\right| \geq \frac{\epsilon}{\clos{v}} \right) \leq 2e^{-\Omega\left(l\epsilon^3\right)}$ (note that, differently from the previous algorithm, this algorithm provides a guarantee on the relative error). The most recent result by Chechik et al.~\cite{DBLP:conf/approx/ChechikCK15} allows to approximate closeness centrality with a coefficient of variation of $\epsilon$ using $\O(\epsilon^{-2})$ single-source shortest path (SSSP) computations. Alternatively, one can make the probability that the maximum relative error exceeds $\epsilon$ polynomially small by using $\O(\epsilon^{-2} \log n)$ SSSP computations.



However, these approximation algorithms have not been specifically designed for ranking nodes according to their closeness centrality, and turning them into a trustable top-$k$ algorithm can be a challenging problem. Indeed, observe that, in many real-world cases, we work with so-called \emph{small-world} networks, having a low diameter. Hence, in a typical graph, the average distance between $v$ and a random node $w$ is between 1 and 10. This implies that most of the $n$ values $\frac{1}{\clos{v}}$ lie in this range, and that, in order to obtain a reliable ranking, we need the error to be close to $\epsilon=\frac{10}{n}$, which might be very small in the case of the vast majority of real-word networks. As an example, performing $\O(\epsilon^{-2})$ SSSPs as in~\cite{DBLP:conf/approx/ChechikCK15} would then require $\O(\frac{m}{\epsilon^2}) = \O(m n^2)$ time in the unweighted case, which is impractical for large graphs. In the absence of theoretical results, it is, however, worth noting that, as a side effect of our new algorithm, we can now quickly certify in practice, even in the case of very large graphs, how good the ranking produced by these approximation algorithms is. For example, if we run this algorithm on our dataset with the same number of iterations as our algorithm, the relative error guaranteed on the centrality of all the nodes is large (usually, above $50\%$ for $k=100$), because the algorithm is not tailored to the top-$k$ computation. However, with our algorithm one can show that the ranking obtained is very close to the correct one (usually, more than $95$ of the $100$ most central nodes according to~\cite{DBLP:conf/approx/ChechikCK15} are actually in the top-$100$).\footnote{Indeed, we obtained a similar experimental result while dealing with the simpler heuristics consisting in choosing as the sample a set of highest degree nodes slightly larger than the sample chosen by the algorithm in~\cite{DBLP:conf/approx/ChechikCK15}.} A theoretical justification of this behavior is, in our opinion, a very interesting open problem.

Finally, an approximation algorithm was proposed in \cite{Okamoto2008}, where the sampling technique developed in \cite{Eppstein2004} was used to actually compute the top $k$ vertices: the result is not exact, but it is exact with high probability. The authors proved that the time complexity of their algorithm is $\O(mn^{\frac{2}{3}}\log n)$, under the rather strong assumption that closeness centralities are uniformly distributed between $0$ and the diameter $D$ (in the worst case, the time complexity of this algorithm is $\O(mn)$).

\paragraph{\textbf{Heuristics}} Other approaches have tried to develop incremental algorithms that might be more suited to real-world networks. For instance, in \cite{Lim2011}, the authors develop heuristics to determine the $k$ most central vertices in a varying environment. Furthermore, in \cite{Saryuce2013}, the authors consider the problem of updating the closeness centrality of all nodes after edge insertions or deletions: in some cases, the time needed for the update could be orders of magnitude smaller than the time needed to recompute all centralities from scratch.

Finally, some works have tried to exploit properties of real-world networks in order to find more efficient algorithms. In \cite{LeMerrer2014}, the authors develop a heuristic to compute the $k$ most central vertices according to different measures. The basic idea is to identify central nodes according to a simple centrality measure (for instance, degree of nodes), and then to inspect a small set of central nodes according to this measure, hoping it contains the top $k$ vertices according to the ``complex'' measure. The last approach \cite{Olsen2014}, proposed by Olsen et al., tries to exploit the properties of real-world networks in order to develop exact algorithms with worst case complexity $\O(mn)$, but performing much better in practice. As far as we know, this is the only exact algorithm that is able to efficiently compute the $k$ most central vertices in networks with up to $1$ million nodes, before this work.

\paragraph{\textbf{Software libraries}} Despite this huge amount of research, graph libraries still use the textbook algorithm: among them, Boost Graph Library \cite{Hagberg2008}, igraph \cite{Stein2005} and NetworkX \cite{Siek2001}. This is due to the fact that efficient available exact algorithms for top-$k$ closeness centrality, like \cite{Olsen2014}, are relatively recent and make use of several other non-trivial routines.
We provide an implementation of the algorithm presented in this paper for
Sagemath \cite{Csardi2006} and NetworKit \cite{Staudt2014}.

\section{Preliminaries}\label{sec:defs}
\begin{table}[!t]
\begin{center}
\begin{footnotesize}
\tbl{Notations used throughout the paper.\label{tbl:notation}}{
\begin{tabular}{|p{2cm}|p{8cm}|}
\hline
\textbf{Symbol} & \textbf{Definition}\\
\hline\hline
\multicolumn{2}{|c|}{\textbf{Graphs}}\\
\hline
 $G=(V,E)$ & Graph with node/vertex set $V$ and edge/arc set $E$\\
\hline
$n$, $m$ & $|V|$, $|E|$\\
\hline
$\mathcal{G}=(\mathcal{V},\mathcal{E},\weight)$ & Weighted directed acyclic graph of strongly connected components (see Sect.~\ref{sec:alphaomega})\\
\hline
$\degr v$ & Degree of a node in an undirected graph\\
\hline
$\odegr v$ & Out-degree of a node in a directed graph\\
\hline
$d(v,w)$ & Number of edges in a shortest path from $v$ to $w$\\
\hline
\hline
\multicolumn{2}{|c|}{\textbf{Reachability set function}}\\ 
\hline
$\Reac v$ & Set of nodes reachable from $v$ (by definition, $v\in \Reac v$)\\
\hline
$\reac v$ & $|\Reac v|$\\
\hline
$\alpha(v)$ & Lower bound on $\reac v$, that is, $\alpha(v) \leq \reac v$ (see Sect.~\ref{sec:alphaomega})\\
\hline
$\omega(v)$ & Upper bound on $\reac v$, that is, $\reac v \leq \omega(v)$ (see Sect.~\ref{sec:alphaomega}) \\
\hline
\hline
\multicolumn{2}{|c|}{\textbf{Neighborhood functions}}\\
\hline
$\neig dv$ & Set of nodes at distance $d$ from $v$: $\{w \in V: d(v,w)=d\}$\\
\hline
$\Gamma(v)$ & Set of neighbors of $v$, that is $\neig 1v$\\
\hline
$\sneig dv$ & Number of nodes at distance $d$ from $v$, that is, $|\neig dv|$\\
\hline
$\tsneig dv$ & Upper bound on $\sneig {d}v$ computed using the neighborhood-based lower bound (see Sect.~\ref{sec:nblb})\\
\hline
$\uneig {d+1}v$ & Upper bound on $\sneig {d+1}v$, defined as $\sum_{u \in \neig dv} \degr u-1$ if the graph is undirected, $\sum_{u \in \neig dv}\odegr u$ otherwise\\
\hline
$\ball dv$ & Set of nodes at distance \textit{at most} $d$ from $v$, that is, $\{w \in V: d(v ,w)\leq d\}$\\
\hline
$\sball dv$ & Number of nodes at distance \textit{at most} $d$ from $v$, that is, $|\ball dv|$\\
\hline
\hline
\multicolumn{2}{|c|}{\textbf{Closeness functions}}\\ 
\hline
$\clos v$ & Closeness of node $v$, that is, $\frac{(\reac v - 1)^2}{(n-1)\sum_{w \in \Reac v} d(v,w)}$\\
\hline
\hline
\multicolumn{2}{|c|}{\textbf{Distance sum functions}}\\ 
\hline
$S(v)$ & Total distance of node $v$, that is $\sum_{w \in \Reac v} d(v,w)$\\
\hline
$\lsumnb vr$ & Lower bound on $S(v)$ if $\reac v=r$, used in the \cbnb\ function (see Prop.~\ref{prop:lbound})\\
\hline
$\lsumcut dvr$ & Lower bound on $S(v)$ if $\reac v=r$, used in the \ubcut\ function (see Lemma~\ref{lem:boundokconn})\\
\hline
$\lsumlb svr$ & Lower bound on $S(v)$ if $\reac v=r$, used in the \ublb\ function (see Eq.~\ref{eq:lbound1},~\ref{eq:lbound1_dir})\\
\hline
\hline
\multicolumn{2}{|c|}{\textbf{Farness functions}}\\ 
\hline
$\farn v$ & Farness of node $v$, that is, $\frac{(n-1)S(v)}{(|\Reac v|-1)^2}$\\
\hline
$\lfarn vr$ & Generic lower bound on $\farn v$, if $\reac v=r$ \\
\hline
$\lfarnnb vr$ & Lower bound on $\farn v$, if $\reac v=r$, defined as $(n-1)\frac{\lsumnb vr}{(r-1)^2}$ \\
\hline
$\lfarncut dvr$ & Lower bound on $\farn v$, if $\reac v=r$, defined as $(n-1)\frac{\lsumcut dvr}{(r-1)^2}$ \\
\hline
$\lfarnlb svr$ & Lower bound on $\farn v$, if $\reac v=r$, defined as $(n-1)\frac{\lsumlb svr}{(r-1)^2}$ \\
\hline
\end{tabular}}
\end{footnotesize}
\end{center}
\end{table}

We assume the reader to be familiar with the basic notions of graph theory (see, for example,~\cite{Cormen2009}).
Our algorithmic results apply both to undirected and directed graphs. We will make clear in the respective context where results
apply to only one of the two. For example, the hardness results in Section~\ref{sec:comp} apply to directed graphs only.
All the notations and definitions used throughout this paper are summarised in Table~\ref{tbl:notation} (in any case, all notations are also defined in the text). 
Here, let us only define precisely the closeness centrality of a vertex $v$. As already said, in a connected graph, the farness of a node $v$ in a graph $G=(V,E)$ is $\farn v=\frac{\sum_{w \in V} d(v,w)}{n-1}$, and the closeness centrality of $v$ is $\frac{1}{\farn v}$. In the disconnected case, the most natural generalization would be $\farn v=\frac{\sum_{w \in \Reac v} d(v,w)}{\reac v-1}$, and $\clos v=\frac{1}{\farn v}$, where $\Reac v$ is the set of vertices reachable from $v$, and $\reac v=|\Reac v|$. However, this definition does not capture our intuitive notion of centrality: indeed, if $v$ has only one neighbor $w$ at distance $1$, and $w$ has out-degree $0$, then $v$ becomes very central according to this measure, even if $v$ is intuitively peripheral. For this reason, we consider the following generalization, which is quite established in the literature \cite{Lin1976,Wasserman1994,Boldi2013,Boldi2014,Olsen2014}:

\begin{equation}
\farn v=\frac{\sum_{w \in \Reac v} d(v,w)}{\reac v-1}\cdot\frac{n-1}{\reac v-1} \quad\quad \clos{v}=\frac{1}{\farn v}
\end{equation}

If a vertex $v$ has (out)degree $0$, the previous fraction becomes $\frac{0}{0}$: in this case, the closeness of $v$ is set to $0$.

Another possibility is to consider a slightly different definition:
\[
\clos{v}=\sum_{w \in V} f(d(v,w)),
\]
for some decreasing function $f$.\footnote{Usually, it is also assumed without loss of generality that $f(+\infty)=0$, that is, we consider only reachable vertices: if this is not the case, it is enough to use a new function defined by $g(d)=f(d)-f(+\infty)$.} One of the most common choices of $f$ is $f(d)=\frac{1}{d}$: this way, we obtain the harmonic centrality \cite{Marchiori2000}.

In this paper, we focus on Lin's index, because it is quite established in the literature, because the previously best exact top-$k$ closeness centrality algorithm uses this definition~\cite{Olsen2014}, and because, when restricted to the connected case, this definition coincides with closeness centrality (from now on, in a disconnected context, we use closeness centrality to indicate Lin's index). However, all our algorithms can be easily adapted to any centrality measure of the form $\clos{v}=\sum_{w \in V} f(d(v,w))$: indeed, in Sect.~\ref{sec:exp}, we show that our algorithm performs very well also with harmonic centrality.


\section{Complexity of Computing the Most Central Vertex} \label{sec:comp}

In this section, we show that, even in the computation of the most central vertex, the \emph{textbook} algorithm is almost optimal in the worst case, assuming the Orthogonal Vector conjecture \cite{Williams2005,Abboud2016}, or the well-known Strong Exponential Time Hypothesis (SETH) \cite{Impagliazzo2001}. The Orthogonal Vector conjecture says that, given $N$ vectors in $\{0,1\}^d$, where $d=\O(\log^k N)$ for some $k$, it is impossible to decide if there are two orthogonal vectors in $\O(N^{2-\epsilon})$, for any $\epsilon>0$ not depending on $k$. The SETH says that the $k$-\textsc{Satisfiablility} problem cannot be solved in time $\O((2-\epsilon)^{N})$, where $N$ is the number of variables and $\epsilon$ is a positive constant not depending on $k$. Our reduction is summarized by the following theorem.



\begin{theorem}
\label{thm:hardness}
On directed graphs, in the worst case, an algorithm computing the most closeness central vertex in time $\O(m^{2-\epsilon})$ for some $\epsilon>0$ would falsify the Orthogonal Vector conjecture. The same result holds even if we restrict the input to sparse graphs, where $m=\O(n)$.
\end{theorem}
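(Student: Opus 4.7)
The plan is a fine-grained reduction from the Orthogonal Vectors (OV) problem, in the spirit of the hardness results in~\cite{Borassi2015,Abboud2016}. Starting from an OV instance with $N$ vectors in $\{0,1\}^d$ where $d=O(\log^k N)$, I would build a sparse directed graph $G$ with $n=\Theta(N)$ vertices and $m=\tilde O(N)$ arcs together with a distinguished vertex $v^\star$, such that the value of the top-$1$ closeness crosses a preset threshold exactly when the OV instance admits an orthogonal pair. Since the construction takes $\tilde O(N)$ time, any algorithm solving top-$1$ closeness in time $O(m^{2-\epsilon})$ would then decide OV in time $O((Nd)^{2-\epsilon}) = O(N^{2-\epsilon}\operatorname{polylog} N)$, contradicting the OV conjecture. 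The ``sparse'' addendum $m=O(n)$ follows because the gadget itself already has $\tilde O(N)$ arcs on $\Theta(N)$ vertices.

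For the core gadget I would use the standard ``coordinate hub'' pattern: introduce an in-port $a_i$ and an out-port $b_i$ for every vector $v_i$, a hub $c_j$ for every coordinate $j \in [d]$, and the arcs $a_i \to c_j$ whenever $v_i[j]=1$ together with $c_j \to b_k$ whenever $v_k[j]=1$. Then the two-hop path $a_i \to c_j \to b_k$ exists in $G$ if and only if $v_i$ and $v_k$ share a $1$-coordinate, i.e., if and only if $v_i \not\perp v_k$. The entire gadget uses $O(Nd)=\tilde O(N)$ arcs, keeping $G$ sparse.

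Next I would ``package'' this gadget to force the closeness of a single designated vertex $v^\star$ to carry the OV answer. A natural packaging is to attach a super-source $s$ with arcs $s \to a_i$ for every $i$, a super-sink $t$ with arcs $b_k \to t$ for every $k$, and a constant number of auxiliary detour chains that provide a longer alternative route between $a_i$ and $b_k$. In the \emph{no orthogonal pair} case, every $(a_i,b_k)$ pair is joined by a short coordinate path, so $r(v^\star)$ and $S(v^\star)$ attain precomputable ``full'' values and $\clos{v^\star}$ hits the target threshold. In the \emph{orthogonal pair exists} case, at least one short path is missing, so either $r(v^\star)$ drops by one or the corresponding contribution to $S(v^\star)$ is inflated through the detour, shifting $\clos{v^\star}$ past the threshold. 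All other vertices of $G$ can be shown (by direct inspection of the gadget) to have closeness bounded away from this threshold, so the top-$1$ answer itself encodes the OV verdict.

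The main obstacle is that Lin's index $\clos{v}=\frac{(r(v)-1)^2}{(n-1)\,S(v)}$ couples the reachability count and the distance sum in a ratio, so one must prevent a lost reachability from being silently ``compensated'' by the accompanying change in the denominator. I would resolve this by padding $G$ with many disjoint copies of a fixed dummy subgraph (for example, a long directed cycle attached only to $v^\star$) so that both $r(v^\star)$ and $S(v^\star)$ are dominated by known, controllable quantities; the effect of a single orthogonal witness then produces an additive shift in $\clos{v^\star}$ whose magnitude is $\Omega(1/\operatorname{poly}(n))$ and therefore resolvable by an exact algorithm. Concluding the argument is then routine: check that the packaging keeps $m=\tilde O(N)$, verify the two-case closeness calculation for $v^\star$ and the upper bounds on the closeness of all competing candidates, and read off the contradiction with the OV conjecture (and hence, via~\cite{Williams2005}, with SETH).
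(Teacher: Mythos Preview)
Your high-level plan (fine-grained reduction from OV, sparse coordinate-hub layer, padding to tame Lin's ratio) is the right framework, but the packaging step has a genuine gap: you cannot encode the OV answer in the closeness of a \emph{single predetermined} vertex $v^\star$ with this gadget. From a super-source $s$ with arcs to all $a_i$, the distance $d(s,b_k)$ equals $3$ whenever $v_k$ shares a $1$-coordinate with \emph{some} vector reachable from $s$---which is always the case (take $i=k$, assuming no zero vectors). The existence of a specific orthogonal pair $(v_i,v_k)$ kills only the path through $a_i$; all paths through $a_{i'}$ with $v_{i'}\not\perp v_k$ survive, so neither $r(v^\star)$ nor $S(v^\star)$ moves. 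Your ``constant number of auxiliary detour chains'' cannot repair this, because distinguishing which of the $N^2$ pairs lost its short path would require pair-indexed gadgetry and hence $\Omega(N^2)$ arcs, destroying sparsity. The underlying mismatch is that a single vertex's distance profile aggregates a \emph{minimum} over paths to each target, whereas OV is an existential question over \emph{pairs}; a fixed source collapses one side of the pair.

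The paper's construction avoids this by letting the top-$1$ operator itself perform the existential search over one side of the pair. It introduces a candidate vertex $C_0$ for \emph{every} set $C\in\C$ (equivalently, every vector), wires the coordinate layer so that $C_0$ reaches exactly $R_C:=|\{C':C\cap C'\neq\emptyset\}|$ vertices in the next copy $\C_1$, and then pads the graph (the sets $Y$, $Z$ of sizes $q$ and $q|\C|$, and the chains $\C_1,\dots,\C_p$, with explicit constants $p=7$, $q=36$) so that (i) the closeness $c(C_0)$ is a strictly \emph{decreasing} function of $R_C$, and (ii) every non-candidate vertex has closeness below $\min_C c(C_0)$. Hence the most central vertex is the $C_0$ minimizing $R_C$, and one checks in $O(1)$ whether its closeness equals the known value $c(|\C|)$. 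Your padding intuition is close in spirit to the paper's $Y$, $Z$, $\C_i$ appendages; what is missing from your plan is precisely the use of $|\C|$ competing candidates rather than one, together with the monotonicity calculation that makes the argmax recover $\min_C R_C$.
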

It is worth mentioning that this result still holds if we restrict our analysis to graphs with small diameter. Indeed, the diameter of the graph obtained from the reduction is $9$. Moreover, it is well known that the Orthogonal Vector conjecture is implied by SETH \cite{Williams2005,Borassi2015,Abboud2016}: consequently, the following corollary holds.

\begin{corollary}
\label{cor:hardness}
On directed graphs, in the worst case, an algorithm computing the most closeness central vertex in time $\O(m^{2-\epsilon})$ for some $\epsilon>0$ would falsify SETH. The same result holds even if we restrict the input to sparse graphs, where $m=\O(n)$.
\end{corollary}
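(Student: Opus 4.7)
The plan is to reduce the Orthogonal Vector (OV) problem on $N$ vectors in $\{0,1\}^d$ with $d=O(\log^k N)$ to computing the most closeness-central vertex in a sparse directed graph with $n=\tilde O(N)$ nodes. An algorithm running in $O(m^{2-\epsilon})$ time on the reduced graph would then solve OV in time $O(N^{2-\epsilon/2})$ after the polylogarithmic factors are absorbed into the exponent, contradicting the OV conjecture.

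For the construction I would use the standard two-set variant of OV with vector sets $A=\{a_1,\dots,a_N\}$ and $B=\{b_1,\dots,b_N\}$, and build a directed layered graph with a constant number of levels: one vertex per vector in each of $A$ and $B$, one coordinate vertex $c_\ell$ for each $\ell\in[d]$, a designated "probe" vertex $v^\star$, a sentinel vertex $u^\star$ used as a reference competitor, and a small collection of auxiliary vertices. The coordinate edges are $a_i\to c_\ell$ whenever $a_i[\ell]=1$ and $c_\ell\to b_j$ whenever $b_j[\ell]=1$; the crucial property is that $a_i$ reaches $b_j$ at distance $2$ iff $\langle a_i,b_j\rangle\ne 0$, and $b_j\notin\Reac{a_i}$ iff the pair is orthogonal. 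The vertex $v^\star$ is attached so that $\Reac{v^\star}\supseteq\{b_1,\dots,b_N\}$ exactly when no orthogonal pair exists, while $u^\star$ is attached so that its reach and distance sum are fixed constants independent of the OV answer.

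The heart of the argument is the arithmetic showing $v^\star$ is the most central vertex iff the OV instance answers "no". Using Lin's normalization $\farn v=\frac{(n-1)\,S(v)}{(\reac v-1)^2}$, I would pad both $v^\star$ and $u^\star$ with $\Theta(N^c)$ pendant "ballast" vertices so that the comparison between them is decided by an $\Theta(1)$ change in $\reac{v^\star}$: if any orthogonal pair $(a_i,b_j)$ exists, then some $b_j$ drops out of $\Reac{v^\star}$, and the quadratic denominator in Lin's farness amplifies this into a gap $\farn{v^\star}-\farn{u^\star}>0$ that exceeds any noise coming from variations in distance sums. Hence the top-$1$ closeness oracle decides OV. Sparsity is nearly immediate since each vector contributes at most $d=O(\mathrm{polylog}(N))$ edges; to achieve $m=O(n)$ exactly, each edge can be subdivided by a constant-length chain of dummy vertices, inflating $n$ but preserving distances up to a constant factor. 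The constant depth of the layering (probe $\to A\to$ coordinates $\to B\to$ sentinels, with small auxiliary attachments) yields the stated diameter bound of $9$, so the hardness persists for small-diameter sparse graphs.

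The main obstacle I expect is precisely the quantitative closeness-gap analysis: one must verify, for the chosen ballast sizes, that a single lost element of $\Reac{v^\star}$ produces a change in $\farn{v^\star}$ that is strictly larger than the maximum difference in distance sums induced across all "no"/"yes" instances, and simultaneously that no third vertex $w$ of the construction ever overtakes the winner of the $v^\star$ vs.\ $u^\star$ race in either case. The remaining ingredients (the reduction from general OV to the two-set variant, the propagation of $d=O(\log^k N)$ through the reduction so that an $O(m^{2-\epsilon})$ algorithm on the resulting graph yields an $O(N^{2-\epsilon'})$ OV algorithm for some $\epsilon'>0$, and the derivation of Corollary~\ref{cor:hardness} via the known SETH-to-OV implication of \cite{Williams2005}) are routine in the fine-grained reduction literature cited in the excerpt.
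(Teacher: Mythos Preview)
Your last paragraph correctly notes that the corollary follows from the theorem via the SETH $\Rightarrow$ OV implication of Williams (2005); that step matches the paper exactly. The problem lies in the reduction you sketch for the theorem itself.

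You posit a single designated probe vertex $v^\star$ with the property that $\Reac{v^\star}\supseteq\{b_1,\dots,b_N\}$ if and only if the OV instance has no orthogonal pair. This cannot be achieved with the coordinate gadget you describe. If $v^\star$ has paths to the $a_i$, then $v^\star$ reaches $b_j$ as soon as \emph{some} $a_i$ is non-orthogonal to $b_j$; consequently $b_j\notin\Reac{v^\star}$ only when $b_j$ is orthogonal to \emph{every} $a_i$. The existence of a single orthogonal pair $(a_i,b_j)$ does not cause $b_j$ to leave $\Reac{v^\star}$, because some other $a_{i'}$ may still reach it. Your claim ``if any orthogonal pair $(a_i,b_j)$ exists, then some $b_j$ drops out of $\Reac{v^\star}$'' is therefore false, and no choice of ballast can rescue the $v^\star$ versus $u^\star$ comparison: the reachability of a single vertex simply cannot encode an existential over \emph{pairs}.

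The paper avoids this by giving each set (equivalently, each vector) its own candidate vertex. For every $C\in\mathcal{C}$ it builds a vertex $C_0$ whose closeness is a strictly decreasing function of $R_C=|\{C'\in\mathcal{C}:C\cap C'\neq\emptyset\}|$, and it arranges that every vertex outside this candidate layer $\mathcal{C}_0$ has strictly smaller closeness. The top-$1$ query then returns the vertex $S_0$ minimizing $R_S$, and one checks whether its closeness equals the value $c(|\mathcal{C}|)$, which happens exactly when no disjoint pair exists. The technical work goes into (i) making $c(R_C)$ monotone decreasing in $R_C$, by appending length-$p$ paths to a second copy $\mathcal{C}_1$, and (ii) forcing every non-candidate vertex below every $C_0$, by attaching $q$ hub-and-spoke fans to each $C_0$; explicit constants $p=7$, $q=36$ are computed. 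In your two-set language, the natural repair is to discard $v^\star$ and let the $a_i$ themselves be the candidates, since you already observed that $\Reac{a_i}$ encodes exactly which $b_j$ are non-orthogonal to $a_i$.
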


The remainder of this section is devoted to the proof of Theorem~\ref{thm:hardness}. We construct a reduction from the $l$-\textsc{TwoDisjointSet} problem, that is, finding two disjoint sets in a collection $\C$ of subsets of a given ground set $X$, where $|X|=\O(\log^l(|\C|))$. For example, $X$ could be the set of numbers between $0$ and $h$, and $\C$ could be the collection of subsets of even numbers between $0$ and $h$ (in this case, the answer is True, since there are two disjoint sets in the collection). It is simple to prove that this problem is equivalent to the Orthogonal Vector problem, by replacing a set $X$ with its characteristic vector in $\{0,1\}^{|X|}$ \cite{Borassi2015}: consequently, an algorithm solving this problem in $\O(|\C|^{2-\epsilon})$ would falsify the Orthogonal Vector conjecture. For a direct reduction between the $l$-\textsc{TwoDisjointSet} problem and SETH, we refer to \cite{Williams2005} (where the \textsc{TwoDisjointSet} problem is named \textsc{CooperativeSubsetQuery}).

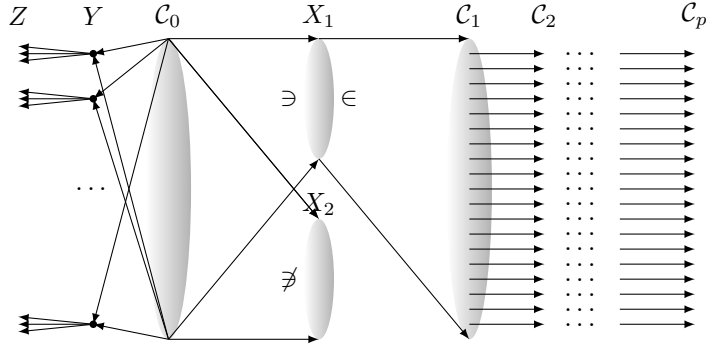
\begin{figure*}[t!b]
\centering
\begin{tikzpicture}

\node at (0,2.3) {$\C_0$};
\node at (2,2.3) {$X_1$};
\node at (2,-0.2) {$X_2$};
\node at (4,2.3) {$\C_1$};
\node at (5,2.3) {$\C_2$};
\node at (7,2.3) {$\C_{p}$};
\node at (-1,2.3) {$Y$};
\node at (-2,2.3) {$Z$};

\draw[draw=none, shading=axis, left color=white, right color=black!30] (0,0) ellipse (.3 and 2);

\draw[draw=none, shading=axis, left color=white, right color=black!30] (2,1.2) ellipse (.2 and .8);

\draw[draw=none, shading=axis, left color=white, right color=black!30] (2,-1.2) ellipse (.2 and .8);

\draw[draw=none, shading=axis, left color=white, right color=black!30] (4,0) ellipse (.3 and 2);

\draw[->, >=latex] (0,2) -- (2,-0.4);
\draw[->, >=latex] (0,-2) -- (2,-2);
\draw[->, >=latex] (0,2) -- (2,-0.4);
\draw[->, >=latex] (0,2) -- (2,2);
\draw[->, >=latex] (0,-2) -- (2,0.4);
\draw[->, >=latex] (2,2) -- (4,2);
\draw[->, >=latex] (2,0.4) -- (4,-2);

\foreach \x in {-1.8,-1.6,...,1.8} {
\draw[->, >=latex] (4,\x) -- (5,\x);
\node[] at (5.5,\x) {$\dots$};
\draw[->, >=latex] (6,\x) -- (7,\x);
}

\foreach \x in {-1.8,1.2,1.8} {
  \node[fill=black,circle,inner sep=0,minimum size=.1cm] at (-1,\x) {};
    \draw[->, >=latex] (0,-2) -- (-1,\x);
    \draw[->, >=latex] (0,2) -- (-1,\x);
    \foreach \y in {-0.1,0,0.1} {
    \draw[->, >=latex] (-1,\x) -- (-2,\x+\y);
    
    }
}

\node at (-1,0) {$\dots$};

\node at (1.6,1.2) {$\ni$};
\node at (1.6,-1.2) {$\not \ni$};
\node at (2.4,1.2) {$\in$};

\end{tikzpicture}
\caption{Reducing the \textsc{TwoDisjointSet} problem to the problem of  finding the most closeness central vertex.}
\label{fig:reduction}
\end{figure*}

Given an instance $(X,\C)$ of the $l$-\textsc{TwoDisjointSet} problem, and given a set $C \in \C$, let $R_C$ be $|\{C' \in \C:C \cap C' \neq \emptyset\}|$. The \textsc{TwoDisjointSet} problem has no solutions if and only if $R_C=|\C|$ for all $C \in \C$; indeed, $R_C=|\C|$ means that $C$ intersects all the sets in $\C$. We construct a directed graph $G=(V,E)$, where $|V|,|E|=\O(|\C||X|)=\O(|\C|\log^l|\C|)$, such that:
\begin{enumerate}
\item $V$ contains a set of vertices $\C_0$ representing the sets in $\C$ (from now on, if $C \in \C$, we denote by $C_0$ the corresponding vertex in $\C_0$);
\item the centrality of $C_0$ is a function $\clos{R_C}$, depending only on $R_C$ (that is, if $R_C=R_{C'}$ then $\clos{C_0}=\clos {C_0'}$); \label{item:dep}
\item the function $\clos{R_C}$ is decreasing with respect to $R_C$; \label{item:inc}
\item the most central vertex is in $\C_0$. \label{item:big}
\end{enumerate}
In such a graph, the vertex with maximum closeness corresponds to the set $S$ minimizing $R_S$: indeed, it is in $\C_0$ by Condition~\ref{item:big}, and it minimizes $R_S$ by Condition~\ref{item:dep}-\ref{item:inc}. Hence, assuming we can find $S_0$ in time $\O(n^{2-\epsilon})$, we can easily check if the closeness of $S_0$ is $\clos{|\C|}$: if it is not, it means that the corresponding \textsc{TwoDisjointSet} instance has a solution of the form $(S,S_1)$ because $R_S \neq \C$. Otherwise, for each $C$, $R_C \geq R_S = |\C|$, because $c(C_0) \leq c(S_0)=c(|\C|)$, and $c$ is decreasing with respect to $R_C$. This means that $R_C=|\C|$ for each $C$, and there are no two disjoints sets. This way, we can solve the $l$-\textsc{TwoDisjointSet} problem in $\O(n^{2-\epsilon})=\O((|\C|\log^l|\C|)^{2-\epsilon})=\O(|\C|^{2-\frac{\epsilon}{2}})$, against the Orthogonal Vector conjecture, and SETH. If we also want the graph to be sparse, we can add $\O(|\C|\log^l|\C|)$ nodes with no outgoing edge.

To construct this graph (see Figure~\ref{fig:reduction}), we start by adding to $V$ the copy $\C_0$ of $\C$, another copy $\C_1$ of $\C$ and a copy $X_1$ of $X$. These vertices are connected as follows: for each element $x \in X$ and set $C \in \C$, we add an edge $(C_0,x)$ and $(x,C_1)$, where $C_0$ is the copy of $C$ in $\C_0$, and $C_1$ is the copy of $C$ in $\C_1$. Moreover, we add a copy $X_2$ of $X$ and we connect all pairs $(C_0,x)$ with $C \in \C$, $x \in X$ and $x \notin C$. This way, the closeness centrality of a vertex $C_0 \in \C_0$ is $\frac{(|X|+R_C)^2}{(n-1)(|X|+2R_C)}$ (which only depends on $R_C$).
To enforce Conditions~\ref{item:inc}-\ref{item:big}, we add a path of length $p$ leaving each vertex in $\C_1$, and $q$ vertices linked to each vertex in $\C_0$, each of which has out-degree $|\C|$: we show that by setting $p=7$ and $q=36$, all required conditions are satisfied.

More formally, we have constructed the following graph $G=(V,E)$:
\begin{itemize}
\item $V=Z \cup Y \cup \C_0 \cup X_1 \cup X_2 \cup \C_1 \cup \dots \cup \C_p$, where $Z$ is a set of cardinality $q|\C|$, $Y$ a set of cardinality $q$, the $\C_i$s are copies of $\C$ and the $X_i$s are copies of $X$;
\item each vertex in $Y$ has $|\C|$ neighbors in $Z$, and these neighbors are disjoint;
\item for each $x \in C$, there are edges from $C_0 \in \C_0$ to $x \in X_1$, and from $x \in X_1$ to $C_1 \in \C_1$;
\item for each $x \notin C$, there is an edge from $C_0 \in \C_0$ to $x \in X_2$;
\item each $C_i \in \C_i$, $1 \leq i \leq p$, is connected to the same set $C_{i+1} \in \C_{i+1}$;
\item no other edge is present in the graph.
\end{itemize}
Note that the number of edges in this graph is $\O(|\C||X|)=\O(|\C|\log^l(|\C|))$, because $|X|<\log^l(|\C|)$, 

\begin{lemma}
Assuming $|\C|>1$, all vertices outside $\C_0$ have closeness centrality at most $\frac{2|\C|}{n-1}$, where $n$ is the number of vertices.
\end{lemma}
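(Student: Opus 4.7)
The plan is to partition $V \setminus \C_0$ into the four groups $Z \cup X_2 \cup \C_p$ (sinks), $Y$, the intermediate layers $\C_1 \cup \cdots \cup \C_{p-1}$, and $X_1$, and to bound $\clos v$ separately on each group using the identity $\clos v = (\reac v - 1)^{2}/((n-1)\, S(v))$, where $S(v) = \sum_{w \in \Reac v} d(v,w)$.

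First I would dispose of the trivial case: by inspection of the edge list, no edge leaves $X_2$, $Z$, or $\C_p$, so every vertex in these three sets has out-degree $0$ and, by the convention that $\clos v = 0$ when $v$ has out-degree $0$, its closeness is $0 \leq 2|\C|/(n-1)$.

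Next, for $y \in Y$, the only out-neighbors are $|\C|$ pairwise-disjoint sinks in $Z$, so $\reac y = |\C|+1$, $S(y) = |\C|$, and $\clos y = |\C|/(n-1)$. For $C_i \in \C_i$ with $1 \leq i \leq p-1$, the out-edges of the layers $\C_1, \ldots, \C_p$ form disjoint directed paths, so $\Reac{C_i} = \{C_i, C_{i+1}, \ldots, C_p\}$ at consecutive distances, yielding $\reac{C_i} = p-i+1$ and $S(C_i) = (p-i)(p-i+1)/2$; substitution gives $\clos{C_i} = 2(p-i)/((n-1)(p-i+1)) < 2/(n-1)$, which is at most $2|\C|/(n-1)$ because $|\C| > 1$.

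The only computationally nontrivial case is $x \in X_1$. Writing $k_x = |\{C \in \C : x \in C\}|$, the out-neighbors of $x$ are exactly the $k_x$ vertices in $\C_1$ corresponding to the sets containing $x$, and from each of these the chain structure above carries $x$ to exactly one vertex in each subsequent layer $\C_2, \ldots, \C_p$. Hence $\reac x = 1 + p k_x$ and $S(x) = k_x \cdot p(p+1)/2$, which gives $\clos x = 2 p k_x / ((n-1)(p+1))$; since $k_x \leq |\C|$ and $p/(p+1) < 1$, this is at most $2|\C|/(n-1)$. The main care goes into confirming from the construction that no vertex outside $\C_0$ has an out-edge into $\C_0$, $X_1$, $X_2$, or $Y$, and that the inter-layer structure of the $\C_i$s really is a union of disjoint directed paths; once these facts are pinned down, each of the four bounds reduces to an elementary arithmetic comparison.
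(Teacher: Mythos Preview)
Your proposal is correct and follows essentially the same approach as the paper: the same four-part case split over $Z\cup X_2\cup\C_p$, $Y$, the intermediate layers $\C_i$, and $X_1$, with the same reachability and distance-sum computations leading to the same closed-form bounds. The only cosmetic difference is that the paper writes $N_x$ for your $k_x$ and does not explicitly spell out the structural checks on the out-edges that you mention at the end.
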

\begin{proof}
If a vertex is in $Z, X_2$, or $\C_p$, its closeness centrality is not defined, because it has out-degree $0$. 

A vertex $y \in Y$ reaches $|\C|$ vertices in $1$ step, and hence its closeness centrality is $\frac{|\C|^2}{|\C|(n-1)}=\frac{|\C|}{n-1}$.

A vertex in $\C_i$ reaches $p-i$ other vertices, and their distance is $1,\dots, p-i$: consequently, its closeness centrality is $\frac{(p-i)^2}{\frac{(p-i)(p-i+1)}{2}(n-1)}=\frac{2(p-i)}{(n-1)(p-i+1)} \leq \frac{2}{n-1}$. 

Finally, for a vertex $x \in X_1$ contained in $N_x$ sets, for each $1 \leq i \leq p$, $x$ reaches $N_x$ vertices in $\C_i$, and these vertices are at distance $i$. Hence, the closeness of $x$ is $\frac{(pN_x)^2}{\frac{p(p+1)}{2}N_x(n-1)}=\frac{2pN_x}{(n-1)(p+1)}\leq \frac{2N_x}{n-1}\leq \frac{2|\C|}{n-1}$. This concludes the proof.
\qed 
\end{proof}

Let us now compute the closeness centrality of a vertex $C \in \C_0$. The reachable vertices are:
\begin{itemize}
\item all $q$ vertices in $Y$, at distance $1$;
\item all $|\C|q$ vertices in $Z$, at distance $2$;
\item $|X|$ vertices in $X_1$ or $X_2$, at distance $1$;
\item $R_C$ vertices in $\C_i$ for each $i$, at distance $i+1$ (the sum of the distances of these vertices is $\sum_{i=1}^p i+1=-1+\sum_{i=1}^{p+1} i=\frac{(p+2)(p+1)}{2}-1$).
\end{itemize}
Hence, the closeness centrality of $C$ is:
\begin{align*}
c(R_C)&=\frac{(q(1+|\C|)+|X|+pR_C)^2}{\left(q(1+2|\C|)+|X|+\left(\frac{(p+1)(p+2)}{2}-1\right)R_C\right)(n-1)} \\
&=\frac{(q(1+|\C|)+|X|+pR_C)^2}{\left(q(1+2|\C|)+|X|+g(p)R_C\right)(n-1)}
\end{align*}
where $g(p)=\frac{(p+1)(p+2)}{2}-1$. We want to choose $p$ and $q$ verifying:
\begin{enumerate}[label=\alph*.]
\item the closeness of vertices in $\C_0$ is bigger than $\frac{2|\C|}{n-1}$ (and hence bigger than the closeness of all other vertices); \label{item:bigger}
\item $c(R_C)$ is a decreasing function of $R_C$ for $0\leq R_C \leq |\C|$. \label{item:decr}
\end{enumerate}

In order to satisfy Condition~\ref{item:decr}, the derivative $c'(R_C)$ of $c$ is
\noindent
\begin{small}
$
(q(1+|\C|)+|X|+pR_C)\frac{[pg(p)R_c+2p\left(q(1+2|\C|)+|X|\right)-g(p)(q(1+|\C|)+|X|)]}{\left(q(1+2|\C|)+|X|+g(p)R_C\right)^2(n-1)}.
$
\end{small}

This latter value is negative if and only if $pg(p)R_c+2p\left(q(1+2|\C|)+|X|\right)-g(p)(q(1+|\C|)+|X|)<0$. Assuming $g(p) \geq 5p$ and $R_C<|\C|$, this value is:
\begin{small}
\begin{align*}
&pg(p)R_C+2p\left(q(1+2|\C|)+|X|\right)-g(p)(q(1+|\C|)+|X|) \\
&\leq pg(p)|\C|+2pq+4pq|\C|+2p|X|-g(p)(q-|C|-|X|) \\
&\leq pg(p)|\C|+4pq|\C|-g(p)q|\C| \\
&\leq pg(p)|\C|-pq|\C|.
\end{align*}
\end{small}
Assuming $q>g(p)$, we conclude that $c'(R_C)<0$ for $0 \leq R_C \leq |\C|$, and we verify Condition~\ref{item:decr}. In order to verify Condition~\ref{item:bigger}, we want $c(R_C)\geq \frac{2|\C|}{n+1}$ (since $c(R_C)$ is decreasing, it is enough $c(|\C|)\geq \frac{2|\C|}{n+1}$). Under the assumptions $q>g(p)$, $0<|X| \leq |\C|$ (which trivially holds for $|\C|$ big enough, because $|X| \leq \log^p|\C|$),
\begin{align*}
c(|\C|)&=\frac{(q(1+|\C|)+|X|+pR_C)^2}{\left(q(1+2|\C|)+|X|+g(p)R_C\right)(n-1)} \\
&\geq \frac{q^2|\C|^2}{(q(3|\C|)+|\C|+|\C|)(n-1)} \\
& \geq \frac{q|\C|}{5(n-1)} > \frac{2|\C|}{n-1}
\end{align*}
if $q>10$. 

To fulfill all required conditions, it is enough to choose $p=7$, $g(p)=35$, and $q=36$.

\section{Overview of the Algorithm}
\label{sec:overview}
In this section, we describe our new approach for computing the $k$ nodes with maximum closeness (equivalently, the $k$ nodes with minimum farness, where the farness $\farn v$ of a vertex $v$ is $\frac{1}{\clos v}=\frac{(n-1)\sum_{w \in \Reac v} d(v,w)}{(\reac v-1)^2}$, as in Table~\ref{tbl:notation}). If we have more than one node with the same score, we output all nodes having a centrality bigger than or equal to the centrality of the $k$-th node. 

In the previous section, we have shown that the trivial algorithm cannot be improved in the worst case: here, we describe an algorithm that is much more efficient when tested on real-world graphs. The basic idea is to keep track of a lower bound on the farness of each node, and to skip the analysis of a vertex $v$ if this lower bound implies that $v$ is not in the top $k$.

More formally, let us assume that we know the farness of some vertices $v_1,\dots,v_l$, and a lower bound $\lfarns(w)$ on the farness of any other vertex $w$. Furthermore, assume that there are $k$ vertices among $v_1,\dots,v_l$ verifying $\farn {v_i} > \lfarns(w) ~ \forall w \in V-\{v_1,\dots,v_l\}$, and hence $\farn{w} \leq \lfarns(w) < \farn w ~ \forall w \in V-\{v_1,\dots,v_l\}$. Then, we can safely skip the exact computation of $\farn w$ for all remaining nodes $w$, because the $k$ vertices with smallest farness are among $v_1,\dots,v_l$.

This idea is implemented in Algorithm~\ref{alg:main}: we use a list \texttt{Top} containing all ``analysed'' vertices $v_1,\dots,v_l$ in increasing order of farness, and a priority queue \texttt{Q} containing all vertices ``not analysed, yet'', in increasing order of lower bound $\lfarns$ (this way, the head of \texttt{Q} always has the smallest value of $\lfarns$ among all vertices in \texttt{Q}). At the beginning, using the function $\cb()$, we compute a first bound $\lfarns(v)$ for each vertex $v$, and we fill the queue \texttt{Q} according to this bound. Then, at each step, we extract the first element $v$ of \texttt{Q}: if $\lfarns(v)$ is smaller than the $k$-th biggest farness computed until now (that is, the farness of the $k$-th vertex in variable \texttt{Top}), we can safely stop, because for each $x \in \texttt{Q}$, $\farn x \leq \lfarns(x) \leq \lfarns(v) < \farn{\texttt{Top[}k\texttt{]}}$, and $x$ is not in the top $k$. Otherwise, we run the function $\ub(v)$, which performs a BFS from $v$, returns the farness of $v$, and improves the bounds $\lfarns$ of all other vertices. Finally, we insert $v$ into \texttt{Top} in the right position, and we update \texttt{Q} if the lower bounds have changed.

\begin{algorithm2e}
\begin{footnotesize}
\LinesNumbered
\SetKwFunction{BFS}{updateBounds}
\SetKwFunction{computeBounds}{computeBounds}
\SetKwFunction{extractMin}{extractMin}
\SetKwData{farn}{Farn}
\SetKwData{Q}{Q}
\SetKwData{L}{L}
\SetKwData{Top}{Top}
 \SetKwInOut{Input}{Input}
 \SetKwInOut{Output}{Output}
\Input{A graph $G=(V,E)$}
\Output{Top $k$ nodes with highest closeness and their closeness values $c(v)$}
global $\L,\Q \gets \computeBounds{G}$\; \label{line:lbounds}
global $\Top \leftarrow [\ ]$\;
global $\farn$\;
\lFor{$v \in V$}{$\farn[v]=+\infty$}
\While{$\Q$ is not empty} {
    $v \gets \Q.\extractMin()$\;
    \lIf{$|\Top|\geq k$ and $\L[v] > \Top[k]$}{\Return{$\Top$}}
    $\farn[v] \gets \BFS(v)$; // This function might also modify $\L$ \\
    add $v$ to \Top, and sort \Top according to \farn\;
    update $\Q$ according to the new bounds\;
}
\end{footnotesize}
\caption{Pseudocode of our algorithm for top-$k$ closeness centralities.}
\label{alg:main}
\end{algorithm2e}

The crucial point of the algorithm is the definition of the lower bounds, that is, the definition of the functions \cb\ and \ub. We propose two alternative strategies for each of these these two functions: in both cases, one strategy is conservative, that is, it tries to perform as few operations as possible, while the other strategy is aggressive, that is, it needs many operations, but at the same time it improves many lower bounds.

Let us analyze the possible choices of the function \cb. The conservative strategy \cbdeg\ needs time $\O(n)$: it simply sets $\lfarns(v)=0$ for each $v$, and it fills \texttt{Q} by inserting nodes in decreasing order of degree (the idea is that vertices with high degree have small farness, and they should be analysed as early as possible, so that the values in \texttt{Top} are correct as soon as possible). Note that the vertices can be sorted in time $\O(n)$ using counting sort.

The aggressive strategy \cbnb\ needs time $\O(mD)$, where $D$ is the diameter of the graph: it computes the neighborhood-based lower bound $\lfarnnbr v$ for each vertex $v$ (we will explain shortly afterwards how it works), it sets $\lfarns(v)=\lfarnnbr v$, and it fills \texttt{Q} by adding vertices in decreasing order of $\lfarns$. The idea behind the neighborhood-based lower bound is to count the number of paths of length $l$ starting from a given vertex $v$, which is also an upper bound $U_l$ 
on the number of vertices at distance $l$ from $v$. From $U_l$, it is possible to define a lower bound on $\sum_{x \in V}d(v,x)$ by ``summing $U_l$ times the distance $l$'', until we have summed $n$ distances: this bound yields the desired lower bound on the farness of $v$. The detailed explanation of this function is provided in Sect.~\ref{sec:nblb}.

For the function \ub$(w)$, the conservative strategy \ubcut$(w)$ does not improve $\lfarns$, and it cuts the BFS as soon as it is sure that the farness of $w$ is smaller than the $k$-th biggest farness found until now, that is, $\texttt{Farn}[\texttt{Top}[k]]$. If the BFS is cut, the function returns $+\infty$, otherwise, at the end of the BFS we have computed the farness of $v$, and we can return it. The running time of this procedure is $\O(m)$ in the worst case, but it can be much better in practice. It remains to define how the procedure can be sure that the farness of $v$ is at least $x$: to this purpose, during the BFS, we update a lower bound on the farness of $v$. The idea behind this bound is that, if we have already visited all nodes up to distance $d$, we can upper bound the closeness centrality of $v$ by setting distance $d+1$ to a number of vertices equal to the number of edges ``leaving'' level $d$, and distance $d+2$ to all the remaining vertices. The details of this procedure are provided in Sect.~\ref{sec:bfscut}.

The aggressive strategy \ublb$(v)$ performs a complete BFS from $v$, and it bounds the farness of each node $w$ using the level-based lower bound. The running time is $\O(m)$ for the BFS, and $\O(n)$ to compute the bounds. The idea behind the level-based lower bound is that $d(w,x) \geq |d(v,w)-d(v,x)|$, and consequently $\sum_{x \in V} d(w,x) \geq \sum_{x \in V} |d(v,w)-d(v,x)|$. The latter sum can be computed in time $\O(n)$ for each $w$, because it depends only on the level $d$ of $w$ in the BFS tree, and because it is possible to compute in $\O(1)$ the sum for a vertex at level $d+1$, if we know the sum for a vertex at level $d$. The details are provided in Sect.~\ref{sec:lblb}.

Finally, in order to transform these lower bounds on $\sum_{x \in V} d(v,x)$ into bounds on $\farn v$, we need to know the number of vertices reachable from a given vertex $v$. In Sect.~\ref{sec:nblb}, \ref{sec:bfscut}, \ref{sec:lblb}, we assume that these values are known: this assumption is true in undirected graphs, where we can compute the number of reachable vertices in linear time at the beginning of the algorithm, and in strongly connected directed graphs, where the number of reachable vertices is $n$. The only remaining case is when the graph is directed and not strongly connected: in this case, we need some additional machinery, which are presented in Sect.~\ref{sec:disc}.

\section{Neighborhood-Based Lower Bound}
\label{sec:nblb}
In this section, we propose a lower bound $\lsumnb{v}{\reac v}$ on the total sum $S(v)=\sum_{w \in \Reac v} d(v,w)$ of an undirected or strongly-connected graph. If we know the number $\reac v$ of vertices reachable from $v$, this bound translates into a lower bound on the farness of $v$, simply multiplying by $(n-1)/(r(v)-1)^2$. The basic idea is to find an upper bound $\tsneig{i}{v}$ on the number of nodes $\sneig{i}{v}$ at distance $i$ from $v$. Then, intuitively, if we assume that the number of nodes at distance $i$ is greater than its actual value and ``stop counting'' when we have $\reac v$ nodes, we get something that is smaller than the actual total distance. This is because we are assuming that the distances of some nodes are smaller than their actual values. This argument is formalized in Prop.~\ref{prop:lbound}.
\begin{proposition}
\label{prop:lbound}
If $\tsneig{i}{v}$ is an upper bound on $\sneig{i}{v}$, for $i = 0, ..., \diam(G)$ and $\ecc(v) := \max_{w \in \reac v} d(v,w)$, then $\lsumnb{v}{\reac v} := \sum_{k=1}^{\ecc(v)} k \cdot \min \left \lbrace \tsneig{k}{v},\  \max \left \lbrace \reac v - \sum_{i=0}^{k-1} \tsneig{i}{v},\ 0 \right \rbrace \right \rbrace$ is a lower bound on $S(v)$.
\end{proposition}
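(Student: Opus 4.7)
The plan is to interpret the stated formula as the greedy left-most allocation of $\reac v$ vertices to distance layers, subject to the capacity $\tsneig{k}{v}$ at layer $k$, and then show that this greedy allocation minimizes the weighted sum $\sum_k k\, \sneig{k}{v} = S(v)$. Writing
\[
b_k := \min\bigl\{\tsneig{k}{v},\ \max\{\reac v - \textstyle\sum_{i=0}^{k-1}\tsneig{i}{v},\ 0\}\bigr\}
\]
for $k\geq 1$ (so that $\lsumnb{v}{\reac v}=\sum_{k\geq 1} k\, b_k$), and taking $\tsneig{0}{v}=1$ as the tight bound on $\sneig{0}{v}$, I would first identify $K^*$ as the smallest $K\geq 0$ with $\sum_{i=0}^{K}\tsneig{i}{v}\geq \reac v$. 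Such $K^*$ exists and is at most $\ecc(v)$ because $\tsneig{i}{v}\geq \sneig{i}{v}$ and $\sum_{i=0}^{\ecc(v)}\sneig{i}{v}=\reac v$. A brief case analysis then gives the explicit form $b_k=\tsneig{k}{v}$ for $1\leq k<K^*$, $b_{K^*}=\reac v - \sum_{i=0}^{K^*-1}\tsneig{i}{v}$, and $b_k=0$ for $k>K^*$.

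Next I would compare $(b_k)$ to the true layer sizes $a_k:=\sneig{k}{v}$ via cumulative sums. Both sequences have total mass $\reac v - 1$: for $(a_k)$ this follows from the definition of $\reac v$, and for $(b_k)$ from the characterization above. For every $K$ the partial sums of $(b_k)$ dominate those of $(a_k)$: if $K<K^*$ then $\sum_{k=1}^K b_k=\sum_{k=1}^K \tsneig{k}{v}\geq \sum_{k=1}^K \sneig{k}{v}$ by the assumed neighborhood bound, and if $K\geq K^*$ then $\sum_{k=1}^K b_k$ already equals the full mass $\reac v -1 \geq \sum_{k=1}^K a_k$. In other words, $(b_k)$ stacks the same total mass at smaller indices than $(a_k)$ does.

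The final step is summation by parts. Setting $c_k:=a_k-b_k$, we have $\sum_{k=1}^{\ecc(v)} c_k=0$ and $T_K:=\sum_{k=1}^K c_k\leq 0$ for every $K$. Swapping the order of summation,
\[
\sum_{k=1}^{\ecc(v)} k\, c_k \;=\; \sum_{j=1}^{\ecc(v)}\sum_{k=j}^{\ecc(v)} c_k \;=\; -\sum_{j=1}^{\ecc(v)} T_{j-1} \;\geq\; 0,
\]
which rearranges to $S(v)=\sum_{k\geq 1} k\, a_k\geq \sum_{k\geq 1} k\, b_k=\lsumnb{v}{\reac v}$. The only mild obstacle is bookkeeping at the boundary $k=0$: the outer sum in the proposition begins at $k=1$ whereas the inner $\sum_{i=0}^{k-1}\tsneig{i}{v}$ refers to $\tsneig{0}{v}$, which must be read as $1$ so that the totals of $(a_k)$ and $(b_k)$ agree; once this is pinned down, everything else is the standard \emph{shifting mass toward smaller indices decreases a weighted sum with increasing weights} principle.
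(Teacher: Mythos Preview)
Your proof is correct and follows essentially the same route as the paper's: both identify the threshold index (your $K^*$ is the paper's $\ecc'$) below which the allocation equals $\tsneig{k}{v}$, show $K^*\leq\ecc(v)$, and then compare the greedy allocation to the true layer sizes. The only difference is in how the final inequality is extracted. The paper writes $\tsneig{k}{v}=\sneig{k}{v}+\epsilon_k$ with $\epsilon_k\geq 0$, derives the mass identity $\sum_{k<\ecc'}\epsilon_k+\alpha=\sum_{k\geq\ecc'}\sneig{k}{v}$, and then bounds directly via $\sum_{k<\ecc'} k\,\epsilon_k+\ecc'\alpha\leq \ecc'\bigl(\sum_{k<\ecc'}\epsilon_k+\alpha\bigr)=\ecc'\sum_{k\geq\ecc'}\sneig{k}{v}\leq\sum_{k\geq\ecc'}k\,\sneig{k}{v}$; you instead package the same idea as partial-sum dominance plus Abel summation. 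Your version is the cleaner and more reusable formulation (it is the standard majorization argument), while the paper's is a slightly more ad~hoc computation of the same thing. Your caveat about reading $\tsneig{0}{v}=1$ is harmless in any case: increasing $\tsneig{0}{v}$ can only decrease each term of $\lsumnb{v}{\reac v}$, so taking it equal to $1$ is without loss of generality, and the paper's algorithm does set it to $1$ implicitly.
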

\begin{proof}
First, we notice that $S(v) = \sum_{k=0}^{\ecc(v)} k \cdot \sneig{k}{v}$ and $\reac v = \sum_{k=0}^{\ecc(v)} \sneig{k}{v}$.

Let us assume that $\tsneig{0}{v} < \reac v$. In fact, if $\tsneig{0}{v} \geq \reac v$, the statement is trivially satisfied. Then, there must be a number $\ecc'>0$ such that for $k<\ecc'$ the quantity $\min \left \lbrace \tsneig{k}{v},\  \max \left \lbrace \reac v - \sum_{i=0}^{k-1} \tsneig{i}{v},\ 0 \right \rbrace \right \rbrace$ is equal to $\tsneig{k}{v}$, for $k= \ecc'$, the quantity is equal to $\alpha := \reac v - \sum_{k=0}^{\ecc'-1} \tsneig{k}{v} > 0$ and, for $k > \ecc'$, it is equal to 0. Therefore we can write $\lsumnb{v}{\reac v}$ as $\sum_{k=1}^{\ecc'-1}k\cdot \tsneig{k}{v}+ \ecc'\cdot \alpha$.

We show that $\ecc' \leq \ecc(v)$. In fact, we know that $\sum_{k=0}^{\ecc'-1}\tsneig{k}{v} <\reac v = \sum_{k=0}^{\ecc(v)} \sneig{k}{v} \leq \sum_{k=0}^{\ecc(v)}\tsneig{k}{v}$. Therefore $\ecc'-1 < \ecc(v)$, which implies $\ecc' \leq \ecc(v)$.

For each $i$, we can write $\tsneig{i}{v} = \sneig{i}{v} + \epsilon_i$, $\epsilon_i \geq 0$. Therefore, we can write $\sum_{k=0}^{\ecc'-1} \epsilon_i + \alpha = \reac v - \sum_{k=0}^{\ecc'-1} \sneig{k}{v} = \sum_{k=\ecc'}^{\ecc(v)} \sneig{k}{v}$. Then, $\lsumnb{v}{\reac v} = \sum_{k=0}^{\ecc'-1}k \cdot \sneig{k}{v} + \sum_{k=0}^{\ecc'-1} k \cdot \epsilon_i + \ecc'\cdot \alpha \leq \sum_{k=0}^{\ecc'-1} k \cdot \sneig{k}{v} + \ecc'(\alpha + \sum_{k=0}^{\ecc'-1}\epsilon_i) = \sum_{k=0}^{\ecc'-1}k \cdot \sneig{k}{v} + \ecc'(\sum_{k=\ecc'}^{\ecc(v)} \sneig{k}{v}) \leq \sum_{k=0}^{\ecc(v)} k \cdot \sneig{k}{v} = S(v)$.
\qed
\end{proof}

In the following paragraphs, we propose upper bounds $\tsneig{i}{v}$ for trees, undirected graphs and directed strongly-connected graphs. In case of trees, the bound $\tsneig{i}{v}$ is actually equal to $\sneig{i}{v}$, which means that the algorithm can be used to compute closeness of all nodes in a tree exactly.

\paragraph{Computing closeness on trees} 
Let us consider a node $s$ for which we want to compute the total distance $S(s)$ (notice that in a tree $\clos{s} = (n-1)/S(s)$). The number of nodes at distance 1 in the BFS tree from $s$ is clearly the degree of $s$. What about distance 2? Since there are no cycles, all the neighbors of the nodes in $\neig 1s$ are nodes at distance 2 from $s$, with the only exception of $s$ itself. Therefore, naming $\neig k s$ the set  of nodes at distance $k$ from $s$ and $\sneig k s$ the number of these nodes, we can write $ \sneig 2 s = \sum_{w \in \neig 1s}  \sneig 1 w - \degree(s)$. In general, we can always relate the number of nodes at each distance $k$ of $s$ to the number of nodes at distance $k-1$ in the BFS trees of the neighbors of $s$.
Let us now consider $ \sneig k s$, for $k>2$.
\begin{figure*}[h!tb]
\begin{center}
\includegraphics[width = 0.7\textwidth]{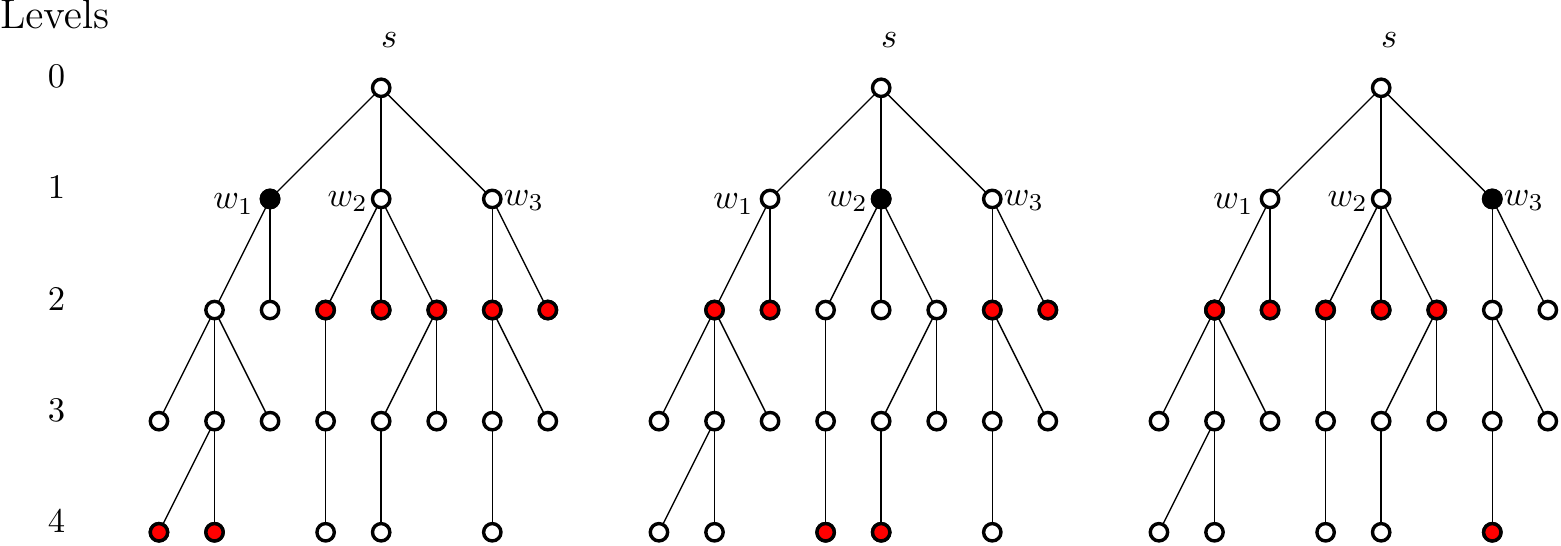}
\caption{Relation between nodes at distance 4 for $s$ and the neighbors of $s$. The red nodes represent the nodes at distance 3 for $w_1$ (left), for $w_2$ (center) and for $w_3$ (right).}
\label{fig:tree}
\end{center}
\end{figure*}
Figure~\ref{fig:tree} shows an example where $s$ has three neighbors $w_1$, $w_2$ and $w_3$. Suppose we want to compute $\neig 4s$ using information from $w_1$, $w_2$ and $w_3$. Clearly, $ \neig 4 s \subset \neig 3 {w_1} \cup \neig 3 {w_2} \cup \neig 3 {w_3}$; however, there are also other nodes in the union that are not in  $\neig 4 s$. Furthermore, the nodes in $\neig 3 {w_1}$ (red nodes in the leftmost tree) are of two types: nodes in $\neig 4 {s}$ (the ones in the subtree of $w_1$) and nodes in $\neig 2 {s}$ (the ones in the subtrees of $w_2$ and $w_3$). An analogous behavior can be observed for $w_2$ and $w_3$ (central and rightmost trees). If we simply sum all the nodes in $\sneig 3 {w_1}$, $\sneig 3 {w_2}$ and $\sneig 3 {w_3}$, we would be counting each node at level 2 twice, \ie once for each node in $\neig 1s$ minus one. Hence, for each $k>2$, we can write 
\begin{equation}
\label{eq:neighbors}
\sneig k {s}= \sum_{w \in \neig 1s} \sneig {k-1} {w} - \sneig {k-2} {s}\cdot (\degree(s)-1).
\end{equation}

\begin{algorithm2e}
 \begin{footnotesize}
\LinesNumbered
\SetKwFunction{BFSfrom}{BFSfrom}
\SetKwFunction{updateLowerBounds}{updateLowerBounds}
\SetKwFunction{enqueue}{enqueue}
\SetKwFunction{extractMin}{extractMin}
\SetKwData{neighborss}{$\sneig{k-1}{s}$}
\SetKwData{newNeighborss}{$\sneig{k}{s}$}
\SetKwData{neighborsw}{$\sneig{k-1}{w}$}
\SetKwData{newNeighborsw}{$\sneig{k}{w}$}
\SetKwData{level}{k}
\SetKwData{nFinished}{nFinished}
\SetKwData{oldNeighborss}{$\sneig{k-2}{s}$}
\SetKwData{oldNeighborsw}{$\sneig{k-2}{w}$}
\SetKwData{S}{$S$}
 \SetKwInOut{Input}{Input}
 \SetKwInOut{Output}{Output}
\Input{A tree $T=(V,E)$}
\Output{Closeness centralities $c(v)$ of each node $v\in V$}
\ForEach{$s \in V$}{
	\label{line:initloop}
	$\neighborss \leftarrow \degree(s)$\;
	$\S(s) \leftarrow \degree(s)$\;
	\label{line:initloop_end}
}
$\level \leftarrow 2$\;
$\nFinished \leftarrow 0$\;
\While{$\nFinished < n$} {
	\label{line:while}
	\ForEach{$s \in V$} {
		\If{$\level = 2$ }{
			$\newNeighborss \leftarrow \sum_{w \in N(s)} \neighborsw - \degree(s)$\; \label{line:level2}
		}\Else {
			$\newNeighborss \leftarrow \sum_{w \in N(s)} \neighborsw - \oldNeighborss (\degree(s)-1)$\; \label{line:levelgt2}
		}
	}
	\ForEach{$s \in V$} {
		$\oldNeighborss \leftarrow \neighborss$\;
		$\neighborss \leftarrow \newNeighborss$\;
		\If{$\neighborss > 0$} {
			$\S(s) \leftarrow \S(s) + \level \cdot \neighborss $\;
		} \Else {
			$\nFinished \leftarrow \nFinished + 1$\;
		} 
	}
	$\level \leftarrow \level + 1$\;
	\label{line:while_end}
}
\ForEach{$s \in V$} {
	$c(v) \leftarrow (n-1)/\S(v)$\;
}
\Return{$c$}
\end{footnotesize}
\caption{Closeness centrality in trees}
\label{algo:trees}
\end{algorithm2e}
\noindent From this observation, we define a new method to compute the total distance of all nodes, described in Algorithm~\ref{algo:trees}. Instead of computing the BFS tree of each node one by one, at each step we compute the number $\sneig{k}{v}$ of nodes at level $k$ for \textit{all} nodes $v$. First (Lines~\ref{line:initloop}~-~\ref{line:initloop_end}), we compute $\sneig{1}{v}$ for each node (and add that to $S(v)$). Then (Lines~\ref{line:while}~-~\ref{line:while_end}), we consider all the other levels $k$ one by one. For each $k$, we use $\sneig{k-1}{w}$ of the neighbors $w$ of $v$ and $\sneig{k-2}{v}$ to compute $\sneig{k}{v}$ (Line~\ref{line:level2} and~\ref{line:levelgt2}). 
If, for some $k$, $\sneig{k}{v} = 0$, all the nodes have been added to $S(v)$. Therefore, we can stop the algorithm when $\sneig{k}{v} = 0 \ \ \forall v \in V$.
\begin{proposition}
\label{prop:tree}
Algorithm~\ref{algo:trees} requires $O(D \cdot n)$ operations to compute the closeness centrality of all nodes in a tree $T$.
\end{proposition}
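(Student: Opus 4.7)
The plan is to prove both correctness and the $O(D \cdot n)$ bound, with correctness resting almost entirely on the recurrence in Equation~\ref{eq:neighbors} that was already motivated before the algorithm. I would first verify by induction on $k$ that, at the end of the iteration of the while loop corresponding to level $k$, the variable storing $\sneig{k}{v}$ indeed equals the number of nodes at distance exactly $k$ from $v$. The base case $k=1$ is immediate from the initialization (Lines~\ref{line:initloop}--\ref{line:initloop_end}). For the inductive step, the key observation from the preceding discussion is that, because paths in a tree are unique, every node $u$ at distance $k-1$ from some $w \in N(s)$ is at distance $k-2$ or $k$ from $s$, and each node at distance $k-2$ from $s$ is counted exactly $\deg(s)-1$ times in $\sum_{w \in N(s)} \sneig{k-1}{w}$ (from every neighbor of $s$ except the unique one on its path). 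The base case $k=2$ is special only because $s$ itself lies at distance $1$ from \emph{every} neighbor and so is counted $\deg(s)$ times rather than $\deg(s)-1$ — exactly the correction applied on Line~\ref{line:level2}. Once $\sneig{k}{v}$ is correctly maintained, the invariant $S(v) = \sum_{j \leq k} j \cdot \sneig{j}{v}$ gives, at termination, the true total distance and hence $c(v)=(n-1)/S(v)$.

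For the running time, the plan has two ingredients. First, the outer while loop terminates after at most $D$ iterations: for $k > D$, no vertex has any other vertex at distance $k$, so $\sneig{k}{v} = 0$ for every $v$ and the counter \texttt{nFinished} reaches $n$. Second, each iteration costs $O(n)$ on a tree. The second and third foreach loops inside the while body perform $O(1)$ work per vertex, hence $O(n)$ in total. The only non-trivial cost sits on Lines~\ref{line:level2}--\ref{line:levelgt2}, where we compute $\sum_{w \in N(s)} \sneig{k-1}{w}$ for every $s$; this costs $O(\deg(s))$ per vertex, and summing over $s$ the handshake lemma gives $\sum_{s} \deg(s) = 2m = 2(n-1) = O(n)$, since $T$ is a tree. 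Multiplying the $O(D)$ iterations by the $O(n)$ per-iteration cost yields the claimed $O(D \cdot n)$ bound, and the initialization on Lines~\ref{line:initloop}--\ref{line:initloop_end} is absorbed in the same $O(n)$ term.

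There is no real obstacle here: the only subtle point is making sure the base case $k=2$ is handled correctly (the $-\deg(s)$ rather than $-(\deg(s)-1)\sneig{0}{s}$) and, when stating termination, being careful that \emph{some} vertex may still have a nonzero $\sneig{k}{v}$ for $k$ up to its eccentricity, but this is bounded by $D$. Together these give the statement of Proposition~\ref{prop:tree}.
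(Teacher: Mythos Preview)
Your proposal is correct and, for the complexity bound that the proposition is actually about, takes essentially the same approach as the paper: at most $D$ iterations of the while loop, each costing $O(n)$ because $\sum_s \deg(s) = 2m = O(n)$ in a tree. The paper's proof is even terser---it only argues the running time and leaves correctness entirely to the discussion surrounding Equation~\ref{eq:neighbors} preceding the algorithm---so your inclusion of an explicit inductive correctness argument is extra thoroughness rather than a different route.
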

\begin{proof}
The for loop in Lines~\ref{line:initloop}~-~\ref{line:initloop_end} of Algorithm~\ref{algo:trees} clearly takes $O(n)$ time. For each level of the while loop of Lines~\ref{line:while}~-~\ref{line:while_end}, each node scans its neighbors in Line~\ref{line:level2} or Line~\ref{line:levelgt2}. In total, this leads to $O(n)$ operations per level since $m = O(n)$. Since the maximum number of levels that a node can have is equal to the diameter of the tree, the algorithm requires $O(D \cdot n)$ operations.
\qed
\end{proof}

Note that closeness centrality on trees could even be computed in time $O(n)$ in a different manner~\cite{DBLP:conf/stacs/BrandesF05}. We choose to include Algorithm~\ref{algo:trees} 
here nonetheless since it paves the way for an algorithm computing a lower bound in general undirected graphs, described next.

\paragraph{Lower bound for undirected graphs} For general undirected graphs, Eq.~(\ref{eq:neighbors}) is not true anymore -- but a related upper bound $\tsneig{k}{\cdot}$ on $\sneig{k}{\cdot}$
is still useful. Let $\tsneig{k}{s}$ be defined recursively as in Eq.~(\ref{eq:neighbors}): in a tree, $\tsneig{k}{s}=\sneig{k}{s}$, while in this case we prove that $\tsneig{k}{s}$ is an upper bound on $\neig ks$.
Indeed, there could be nodes $x$ for which there are multiple paths between $s$ and $x$ and that are therefore contained in the subtrees of more than one neighbor of $s$. This means that we would count $x$ multiple times when considering $\tsneig{k}{s}$, overestimating the number of nodes at distance $k$. However, we know for sure that at level $k$ there cannot be \textit{more nodes} than in Eq.~(\ref{eq:neighbors}). If, for each node $v$, we assume that the number $\tsneig{k}{v}$ of nodes at distance $k$ is that of Eq.~(\ref{eq:neighbors}), we can apply Prop.~\ref{prop:lbound} and get a lower bound $\lsumnb{v}{\reac v}$ on the total sum for undirected graphs.
The procedure is described in Algorithm~\ref{algo:lbound2}. The computation of $\lsumnb{v}{\reac v}$ works basically like Algorithm~\ref{algo:trees}, with the difference that here we keep track of the number of the nodes found in all the levels up to $k$ (\textsf{nVisited}) and stop the computation when \textsf{nVisited} becomes equal to $\reac v$ (if it becomes larger, in the last level we consider only $\reac v - \textsf{nVisited}$ nodes, as in Prop.~\ref{prop:lbound} (Lines~\ref{line:else1}~-~\ref{line:else1-end}).
\begin{proposition}
\label{prop:graph}
For an undirected graph $G$, computing the lower bound $\lsumnb{v}{\reac v}$ described in Algorithm~\ref{algo:lbound2} takes $O(D \cdot m)$ time. 
\end{proposition}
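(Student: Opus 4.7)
The plan is to bound the cost of Algorithm~\ref{algo:lbound2} by separately bounding the work per level and the number of levels. At a fixed level $k$, for every vertex $s \in V$ the algorithm evaluates the recurrence $\tsneig{k}{s} = \sum_{w \in N(s)} \tsneig{k-1}{w} - \tsneig{k-2}{s}(\degr{s}-1)$, and it updates the partial sum that eventually yields $\lsumnb{v}{\reac v}$. The bookkeeping per vertex (updating $\tsneig{k}{s}$, \textsf{nVisited}, and the running total, plus the capping step of Lines~\ref{line:else1}--\ref{line:else1-end}) takes time proportional to $\degr{s}+O(1)$.

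Summing this cost over all vertices at a fixed level $k$ gives $\sum_{s \in V} (\degr{s}+O(1)) = O(n+m) = O(m)$, where I use that the graph is undirected so $\sum_s \degr{s} = 2m$, and I assume connectedness (or at least $m \geq n-1$ on the connected component under consideration; isolated vertices can be discarded beforehand in linear time).

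Next I would bound the number of outer iterations by $D$. The loop body for a single vertex $v$ stops contributing as soon as \textsf{nVisited}$(v) \geq \reac v$, which by Prop.~\ref{prop:lbound} is guaranteed no later than $k=\ecc(v)$, because the upper bounds $\tsneig{\cdot}{v}$ dominate the true level counts $\sneig{\cdot}{v}$ whose cumulative sum equals $\reac v$ at $k=\ecc(v)$. Since $\ecc(v) \leq D$ for every $v$, once $k > D$ no vertex remains active and the algorithm terminates. Multiplying the per-level cost $O(m)$ by the $O(D)$ levels yields the claimed $O(D \cdot m)$ bound.

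The only subtle point I expect to check carefully is the termination argument: one must verify that the synchronous stopping criterion (all nodes have reached their $\reac v$) is indeed triggered within $D$ levels, and that the capping rule in Lines~\ref{line:else1}--\ref{line:else1-end} correctly accounts for the last level even when $\tsneig{k}{v}$ overshoots. Both follow by applying Prop.~\ref{prop:lbound} vertex-wise, so the analysis is essentially a level-by-level aggregation once this is in place.
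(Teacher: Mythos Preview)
Your proposal is correct and follows essentially the same approach as the paper: bound the per-level work by $O(m)$ via $\sum_s \degr{s}$, and bound the number of levels by $D$ using the fact that $\tsneig{i}{v} \geq \sneig{i}{v}$ so that \textsf{nVisited}$(v)$ reaches $\reac v$ by level $\ecc(v) \leq D$. One minor remark: the domination $\tsneig{i}{v} \geq \sneig{i}{v}$ is established in the discussion preceding Algorithm~\ref{algo:lbound2} rather than in Prop.~\ref{prop:lbound} itself (Prop.~\ref{prop:lbound} \emph{assumes} it), so you may want to adjust that citation.
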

\begin{proof}
Like in Algorithm~\ref{algo:trees}, the number of operations performed by Algorithm~\ref{algo:lbound2} at each level of the while loop is $\O(m)$. At each level $i$, all the nodes at distance $i$ are accounted for (possibly multiple times) in Lines~\ref{line:level2lbound} and~\ref{line:levelgt2lbound}. Therefore, at each level, the variable $\mathsf{nVisited}$ is always greater than or equal to the the number of nodes $v$ at distance $d(v) \leq i$. Since $d(v) \leq D$ for all nodes $v$, the maximum number of levels scanned in the while loop cannot be larger than $D$, therefore the total complexity is $O(D \cdot m)$.
\qed
\end{proof}

\smallskip 
\begin{algorithm2e}
 \begin{footnotesize}
\LinesNumbered
\SetKwFunction{BFSfrom}{BFSfrom}
\SetKwFunction{updateLowerBounds}{updateLowerBounds}
\SetKwFunction{enqueue}{enqueue}
\SetKwFunction{extractMin}{extractMin}
\SetKwData{neighborss}{$\sneig{k-1}{s}$}
\SetKwData{newNeighborss}{$\sneig{k}{s}$}
\SetKwData{neighborsw}{$\sneig{k-1}{w}$}
\SetKwData{oldNeighborss}{$\sneig{k-2}{s}$}
\SetKwData{oldNeighborsw}{$\sneig{k-2}{w}$}
\SetKwData{level}{$k$}
\SetKwData{nFinished}{nFinished}
\SetKwData{nVisited}{nVisited}
\SetKwData{finished}{finished}
\SetKwData{S}{$\tilde{S}^{\mathrm{(un)}}$}
 \SetKwInOut{Input}{Input}
 \SetKwInOut{Output}{Output}
\Input{A graph $G=(V,E)$}
\Output{Lower bounds $\lfarnnb v{\reac v}$ of each node $v\in V$}
\ForEach{$s \in V$}{
	$\neighborss \leftarrow \degree(s)$\;
	$\S(s) \leftarrow \degree(s)$\;
	$\nVisited[s] \leftarrow \degree(s) + 1$\;
	$\finished[s] \leftarrow false$\;
}
$\level \leftarrow 2$\;
$\nFinished \leftarrow 0$\;

\While{$\nFinished < n$} {
	\ForEach{$s \in V$} {
		\If{$\level = 2$ }{
			$\newNeighborss \leftarrow \sum_{w \in N(s)} \neighborsw - \degree(s)$\; \label{line:level2lbound}
		}\Else {
			$\newNeighborss \leftarrow \sum_{w \in N(s)} \neighborsw - \oldNeighborss (\degree(s)-1)$\; \label{line:levelgt2lbound}
		}
	}
	\ForEach{$s \in V$} {
		\If{$\finished[v]$}
		{
			\textbf{{continue}}\;
		}
		$\oldNeighborss \leftarrow \neighborss$\;
		$\neighborss \leftarrow \newNeighborss$\;
		$\nVisited[s] \leftarrow \nVisited[s] + \neighborss $\;
		\If{$\nVisited[s] < \reac v$} {  \label{line:if1}
			$\S(s) \leftarrow \S(s) + \level \cdot \neighborss $\;
		} \Else {\label{line:else1}
			$\S(s) \leftarrow \S(s) + \level (\reac v-(\nVisited[s]-\neighborss)) $\;
			$\nFinished \leftarrow \nFinished + 1$\;
			$\finished[s] \leftarrow true$\;\label{line:else1-end}
		} 
	}
	$\level \leftarrow \level + 1$\;
}
\ForEach{$v \in v$} {
$\lfarnnb v{\reac v} \leftarrow \frac{(n-1)\tilde{S}^{\mathrm{(un)}}}{(\reac v-1)^2}$\;
}
\Return{$\lfarnnb \cdot{\reac \cdot}$}
\end{footnotesize}
\caption{Neighborhood-based lower bound for undirected graphs}
\label{algo:lbound2}
\end{algorithm2e}

\paragraph{Lower bound on directed graphs} In directed graphs, we can simply consider the out-neighbors, without subtracting the number of nodes discovered in the subtrees of the other neighbors in Eq.~(\ref{eq:neighbors}). The lower bound (which we still refer to as $\lsumnb{v}{\reac v}$) is obtained by replacing Eq.~(\ref{eq:neighbors}) with the following in Lines~\ref{line:level2lbound} and~\ref{line:levelgt2lbound} of Algorithm~\ref{algo:lbound2}: 
\begin{equation}
\label{eq:neighbors_dir}
\tsneig{k}{s} = \sum_{w \in \Gamma(s)} \tsneig{k-1}{w}
\end{equation}

\section{The \ubcut\ Function} \label{sec:bfscut}

The \ubcut\ function is based on a simple idea: if the $k$-th biggest farness found until now is $x$, and if we are performing a BFS from vertex $v$ to compute its farness $\farn v$, we can stop as soon as we can guarantee that $\farn v \geq x$.

Informally, assume that we have already visited all nodes up to distance $d$: we can lower bound $S(v)=\sum_{w \in V} d(v,w)$ by setting distance $d+1$ to a number of vertices equal to the number of edges ``leaving'' level $d$, and distance $d+2$ to all the remaining reachable vertices. Then, this bound yields a lower bound on the farness of $v$. As soon as this lower bound is bigger than $x$, the \ubcut\ function may stop; if this condition never occurs, at the end of the BFS we have exactly computed the farness of $x$.

More formally, the following lemma defines a lower bound $\lsumcut dv{\reac v}$ on $S(v)$, which is computable after we have performed a BFS from $v$ up to level $d$, assuming we know the number $\reac v$ of vertices reachable from $v$ (this assumption is lifted in Sect.~\ref{sec:disc}).

\begin{lemma} \label{lem:boundokconn}
Given a graph $G=(V,E)$, a vertex $v \in V$, and an integer $d \geq 0$, let $\ball dv$ be the set of vertices at distance at most $d$ from $v$, $\sball dv=|\ball dv|$, and let $\uneig{d+1}v$ be an upper bound on the number of vertices at distance $d+1$ from $v$ (see Table~\ref{tbl:notation}). Then, 
\[S(v) \geq \lsumcut dv{\reac v}:=\sum_{w \in \ball dv}d(v,w)-\uneig{d+1}v+(d+2)(\reac v-\sball dv).\]
\end{lemma}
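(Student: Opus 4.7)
The plan is to partition the reachable vertices by their BFS level relative to $d$ and lower-bound each vertex's contribution to $S(v)$ by the smallest distance it could possibly have, then collapse the resulting expression into the form stated.

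First I would split $\Reac v$ into three pieces: the set $\ball dv$ of vertices at distance at most $d$ (whose distances are already known exactly from the partial BFS), the set $\neig{d+1}v$ of vertices at distance exactly $d+1$, and the set $T$ of vertices at distance at least $d+2$. Writing $a := |\neig{d+1}v|$ and $b := |T|$, we have $a + b = \reac v - \sball dv$, and by definition of $\uneig{d+1}v$ we have $a \leq \uneig{d+1}v$.

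Next I would lower-bound each vertex's distance by the smallest value permitted by its piece, so that
\begin{equation*}
S(v) \;=\; \sum_{w \in \ball dv} d(v,w) \;+\; \sum_{w \in \neig{d+1}v} d(v,w) \;+\; \sum_{w \in T} d(v,w) \;\geq\; \sum_{w \in \ball dv} d(v,w) \;+\; (d+1)\,a \;+\; (d+2)\,b.
\end{equation*}
Substituting $b = \reac v - \sball dv - a$ and simplifying, the right-hand side becomes
\begin{equation*}
\sum_{w \in \ball dv} d(v,w) \;+\; (d+2)\bigl(\reac v - \sball dv\bigr) \;-\; a.
\end{equation*}

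Finally, since $a \leq \uneig{d+1}v$ implies $-a \geq -\uneig{d+1}v$, we obtain the claimed bound $S(v) \geq \lsumcut dv{\reac v}$. There is no real obstacle here: the only subtlety is recognizing why the $-\uneig{d+1}v$ term appears with a minus sign — namely, the natural lower bound assigns distance $d+2$ to \emph{every} unvisited vertex and then compensates by subtracting $1$ for each of the (at most $\uneig{d+1}v$) vertices that actually sit at level $d+1$ rather than level $d+2$. One should also note that the bound is vacuous if $\sball dv > \reac v - \uneig{d+1}v$ might otherwise occur, but the derivation above makes no sign assumptions on $\reac v - \sball dv$, so the inequality holds in all cases.
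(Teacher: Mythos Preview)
Your proof is correct and follows essentially the same approach as the paper: partition the reachable vertices into those at distance at most $d$, those at distance exactly $d+1$, and those at distance at least $d+2$; lower-bound the last group's distances by $d+2$; then use $\sneig{d+1}v \leq \uneig{d+1}v$ to replace $-a$ by $-\uneig{d+1}v$. The paper's only cosmetic difference is that it substitutes via $\sball{d+1}v = \sneig{d+1}v + \sball dv$ rather than via $b = \reac v - \sball dv - a$, but the algebra is identical.
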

\begin{proof}
The sum of all the distances from $v$ is lower bounded by setting the correct distance to all vertices at distance at most $d$ from $v$, by setting distance $d+1$ to all vertices at distance $d+1$ (there are $\sneig{d+1}v$ such vertices), and by setting distance $d+2$ to all other vertices (there are $\reac v-\sball{d+1}v$ such vertices, where $\reac v$ is the number of vertices reachable from $v$ and $\sball{d+1}v$ is the number of vertices at distance at most $d+1$). More formally,
$\farn v \geq \sum_{w \in \ball dv}d(v,w) + (d+1)\sneig {d+1}v+(d+2)(\reac v-\sball {d+1}v).$

Since $\sball {d+1}v=\sneig {d+1}v+\sball dv$, we obtain that $\farn v \geq \sum_{w \in \ball dv}d(v,w) -\sneig {d+1}v+(d+2)(\reac v-\sball dv)$. We conclude because, by assumption, $\uneig {d+1}v$ is an upper bound on $\sneig {d+1}v$. 
\qed
\end{proof}
\begin{corollary}
For each vertex $v$ and for each $d \geq 0$, 
\[\farn v \geq \lfarncut dv{\reac v}:= \frac{(n-1)\lsumcut dv{\reac v}}{(\reac v -1)^2}.\]
\end{corollary}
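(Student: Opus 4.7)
The plan is to derive this directly from Lemma~\ref{lem:boundokconn} by applying the definition of farness, since the corollary is essentially a restatement of the sum lower bound in terms of the farness function.

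First, I would recall the definition $\farn v = \frac{(n-1)S(v)}{(\reac v - 1)^2}$ from the notation table (where $S(v) = \sum_{w \in \Reac v} d(v,w)$), and observe that the factor $\frac{n-1}{(\reac v - 1)^2}$ is strictly positive whenever $\reac v \geq 2$ (the case $\reac v \leq 1$ being handled by convention, since then $\clos v$ is set to $0$ and $\farn v$ is effectively $+\infty$, making the bound trivial).

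Next, I would invoke Lemma~\ref{lem:boundokconn}, which supplies the inequality $S(v) \geq \lsumcut dv{\reac v}$ for every $d \geq 0$. Multiplying both sides by the positive constant $\frac{n-1}{(\reac v - 1)^2}$ preserves the inequality, yielding
\[
\farn v \;=\; \frac{(n-1)S(v)}{(\reac v - 1)^2} \;\geq\; \frac{(n-1)\,\lsumcut dv{\reac v}}{(\reac v - 1)^2} \;=\; \lfarncut dv{\reac v},
\]
which is exactly the claimed bound.

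Since the work has already been done in Lemma~\ref{lem:boundokconn}, there is no real obstacle here; the only subtle point worth flagging is the degenerate case $\reac v = 1$ (an isolated vertex, or a sink with no reachable neighbours), where $\farn v$ is defined to be such that $\clos v = 0$, and in which the inequality should be interpreted as vacuous or $+\infty \geq \lfarncut dv{1}$. Otherwise the corollary is an immediate rescaling.
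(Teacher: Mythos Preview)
Your proposal is correct and matches the paper's approach: the paper states the corollary immediately after Lemma~\ref{lem:boundokconn} without giving any proof, precisely because it is just a rescaling of that lemma by the positive factor $\frac{n-1}{(\reac v - 1)^2}$, exactly as you write. Your treatment of the degenerate case $\reac v = 1$ is a reasonable extra clarification that the paper omits.
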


It remains to define the upper bound $\uneig{d+1}v$: in the directed case, this bound is simply the sum of the out-degrees of vertices at distance $d$ from $v$. In the undirected case, since at least an edge from each vertex $v \in \neig dv$ is directed towards $\neig{d-1}v$, we may define $\uneig{d+1}v=\sum_{w \in \neig dv} \deg(w)-1$ (the only exception is $d=0$: in this case, $\uneig{1}v=\sneig 1v=\deg(v)$).

\begin{remark} \label{rem:impr}
When we are processing vertices at level $d$, if we process an edge $(x,y)$ where $y$ is already in the BFS tree, we can decrease $\uneig {d+1}v$ by one, obtaining a better bound.
\end{remark}

Assuming we know $\reac v$, all quantities necessary to compute $\lfarncut dv{\reac v}$ are available as soon as all vertices in $\ball dv$ are visited by a BFS. 
This function performs a BFS starting from $v$, continuously updating the upper bound $\lfarncut dv{\reac v} \leq \farn v$ (the update is done whenever all nodes in $\Gamma_d(v)$ have been reached, or Remark~\ref{rem:impr} can be used). As soon as $\lfarncut dv{\reac v} \geq x$, we know that $\farn v \geq \lfarncut dv{\reac v} \geq x$, and we return $+\infty$.

Algorithm~\ref{alg:bfscut} is the pseudocode of the function \ubcut\ when implemented for directed graphs, assuming we know the number $\reac v$ of vertices reachable from each $v$ (for example, if the graph is strongly connected). This code can be easily adapted to all the other cases.

\newcommand{\secondalgo}{
\begin{small}
\begin{algorithm2e}[t]
\caption{The \ubcut$(v)$ function in the case of directed graphs, if $\reac v$ is known for each $v$.}
\label{alg:bfscut}
\SetKwFunction{bfscf}{\bfsc}
\SetKwFunction{preprocf}{\preproc}
\SetKwFunction{kth}{Kth}
\SetKwData{farn}{Farn}
\SetKwData{top}{Top}
$x \gets \farn(\top[k])$; // \farn and \top are global variables, as in Algorithm~\ref{alg:main}. \\
Create queue $Q$;\\
$Q$.enqueue($v$);\\
Mark $v$ as visited;\\
$d \gets 0$; $S \gets 0$;  $\tilde{\gamma} \gets \outdeg(v)$; $nd \gets 1$; \\
\While{$Q$ is not empty}{
    $u \gets Q$.dequeue(); \\
    \If{$d(v,u)>d$}{
        $d \gets d+1$;\\
        $\lfarncut dv{\reac v} \gets \frac{(n-1)\left(S-\tilde{\gamma}+(d+2)(\reac v-nd)\right)}{(\reac v-1)^2}$;\\ \label{line:ftilde}
        \lIf{$\lfarncut dv{\reac v} \geq x$}{\Return $+\infty$} \label{line:bfscutstop}
        $\tilde{\gamma} \gets 0$
    }
    \For{$w$ in adjacency list of $u$}{
        \If{$w$ is not visited}{
            $S \gets S+d(v,w)$;\\
            $\tilde{\gamma} \gets \tilde{\gamma}+\odegr w$;\\
            $nd \gets nd+1$;\\
            $Q$.enqueue($w$); \\
            Mark $w$ as visited
        } \Else { // we use Remark~\ref{rem:impr} \\ 
            $\lfarncut dv{\reac v} \gets \lfarncut dv{\reac v} + \frac{(n-1)}{(\reac v-1)^2}$; \\ \label{line:ftilde2}
            \lIf{$\lfarncut dv{\reac v} \geq x$}{\Return $x$} \label{line:bfscutstop2}
        }
    }
}
\Return $\frac{S(n-1)}{(\reac v -1)^2}$;
\end{algorithm2e}
\end{small}
}
\secondalgo

\section{The \ublb\ Function}
\label{sec:lblb}
\newcommand{\tsneiga}[2]{{\gamma}_{#1}}
\newcommand{\sneiga}[1]{{\Gamma}_{#1}}
\newcommand{\lev}[1]{\mathsf{L}(#1)}
Differently from \ubcut\ function, \ublb\ computes a complete BFS traversal, but uses information acquired during the traversal to update the bounds on the other nodes.
Let us first consider an undirected graph $G$ and let $s$ be the source node from which we are computing the BFS. We can see the distances $d(s,v)$ between $s$ and all the nodes $v$ reachable from $s$ as \textit{levels}: node $v$ is at level $i$ if and only if the distance between $s$ and $v$ is $i$, and we write $v \in \neig{i}{s}$ (or simply $v \in \sneiga{i}$ if $s$ is clear from the context). Let $i$ and $j$ be two levels, $i \leq j$. Then, the distance between any two nodes $v$ at level $i$ and $w$ at level $j$ must be at least $j - i$. Indeed, if $d(v,w)$ was smaller than $j-i$, $w$ would be at level $i+d(v,w) < j$, which contradicts our assumption. It follows directly that $\sum_{w\in V} | d(s,w) - d(s,v) |$ is a lower bound on $S(v)$, for all $v \in \Reac s$:
\begin{lemma}
$\sum_{w\in \Reac s} | d(s, w) - d(s, v) | \leq S(v) \quad \forall v \in \Reac s$.
\end{lemma}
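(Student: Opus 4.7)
The plan is to derive this inequality as a direct consequence of the triangle inequality applied inside the BFS level structure rooted at $s$.

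First I would observe that since $G$ is undirected and $v \in \Reac s$, the vertices $s$ and $v$ lie in the same connected component; hence $\Reac v = \Reac s$, so that $S(v) = \sum_{w \in \Reac v} d(v,w)$ can be rewritten with the summation index ranging over $\Reac s$. This alignment of index sets is the only bookkeeping needed before comparing the two sides.

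Next I would prove the pointwise inequality $d(v,w) \geq |d(s,w) - d(s,v)|$ for every $w \in \Reac s$. Applying the triangle inequality to the triple $(s,v,w)$ gives $d(s,w) \leq d(s,v) + d(v,w)$, i.e.\ $d(v,w) \geq d(s,w) - d(s,v)$; applying it symmetrically (using that in an undirected graph $d(v,w) = d(w,v)$) gives $d(s,v) \leq d(s,w) + d(v,w)$, i.e.\ $d(v,w) \geq d(s,v) - d(s,w)$. Taking the maximum of the two lower bounds yields $d(v,w) \geq |d(s,w) - d(s,v)|$, which is exactly the level-difference intuition already stated informally above the lemma: two nodes separated by $|j-i|$ BFS levels cannot be at graph-distance less than $|j-i|$.

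Finally, summing the pointwise bound over all $w \in \Reac s$ gives
\[
\sum_{w \in \Reac s} |d(s,w) - d(s,v)| \;\leq\; \sum_{w \in \Reac s} d(v,w) \;=\; \sum_{w \in \Reac v} d(v,w) \;=\; S(v),
\]
which is the claim. There is no real obstacle here: the only subtlety is ensuring that the summation sets on the two sides coincide, and that is automatic in the undirected case via $\Reac v = \Reac s$. (In the directed setting this step would fail in general, which is presumably why the lemma explicitly restricts to undirected graphs; a directed analogue would need to be stated with care, e.g.\ by summing only over $w \in \Reac s \cap \Reac v$ or by using the in-distances from $s$.)
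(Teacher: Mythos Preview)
Your proof is correct and follows essentially the same approach as the paper: the paper establishes the pointwise bound $d(v,w)\ge |d(s,w)-d(s,v)|$ via the BFS-level argument immediately preceding the lemma (phrased as ``if $d(v,w)$ was smaller than $j-i$, $w$ would be at level $i+d(v,w)<j$''), which is exactly the triangle inequality you invoke, and then states that the lemma ``follows directly''. Your version is in fact more explicit than the paper's, since you spell out why $\Reac v=\Reac s$ so that the summation sets on both sides agree.
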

To improve the approximation, we notice that the number of nodes at distance 1 from $v$ is exactly the degree of $v$. Therefore, all the other nodes $w$ such that $| d(s,v) - d(s,w)| \leq 1$ must be at least at distance 2 (with the only exception of $v$ itself, whose distance is of course 0). This way we can define the following lower bound on $S(v)$:
%
\begin{equation*}
\begin{aligned}
& 2( \#\{ w \in \Reac s : | d(s,w) - d(s, v) | \leq 1\} - \degree (v) - 1)   +\\
&+ \degree(v)+ \sum\limits_{\substack{w \in \Reac s \\ | d(s,w) - d(s,v) | > 1}} | d(s,w) - d(s,v) |,
\end{aligned}
\end{equation*}

\noindent that is:
\begin{equation}
\begin{aligned}
\label{eq:lbound1}
&2\cdot \sum_{|j - d(s,v)| \leq 1} \tsneiga{j}{}+ \sum_{|j - d(s,v)| >1} \tsneiga{j}{}\cdot |j - d(s,v)| - \degree (v) - 2,
\end{aligned}
\end{equation}
where $\tsneiga{j}{} = |\sneiga{j}{}|$.

Multiplying the bound of Eq.~(\ref{eq:lbound1}) by $\frac{(n-1)}{(r(v)-1)^2}$, we obtain a lower bound on the farness $\farn v$ of node $v$, named $\lfarnlb sv{\reac v}$.
A straightforward way to compute $\lfarnlb sv{\reac v}$ would be to first run the BFS from $s$ and then, for each node $v$, to consider the level difference between $v$ and all the other nodes. This would require $\O(n^2)$ operations, which is clearly too expensive. However, we can notice two things: First, the bounds of two nodes at the same level differ only by their degree. Therefore, for each level $i$, we can compute $2\cdot \sum_{|j - i| \leq 1} \tsneiga{j}{}+ \sum_{|j - i| >1} \tsneiga{j}{}\cdot |j - i| - 2$ only once and then subtract $\degree (v)$ for each node at level $i$. We call the quantity $2\cdot \sum_{|j - i| \leq 1} \tsneiga{j}{}+ \sum_{|j - i| >1} \tsneiga{j}{}\cdot |j - i| - 2$ the level-bound $\lev i$ of level $i$.
Second, we can prove that $\lev i$ can actually be written as a function of $\lev{i-1}$.
\begin{lemma}
\label{lemma:difference}
Let $\lev i := 2\cdot \sum_{|j - i| \leq 1} \tsneiga{j}{}+ \sum_{|j - i| >1} \tsneiga{j}{}\cdot |j - i| - 2$. Also, let $\tsneiga{j}{} = 0$ for $j\leq 0$ and $j>\mathsf{maxD}$, where $\mathsf{maxD} = \max_{v \in \Reac{s}} d(s,v)$.
Then $\lev i - \lev{i-1} = \sum_{j<i-2} \tsneiga{j}{s}-\sum_{j>i+1} \tsneiga{j}{s}$, $\forall i \in \{1, ..., \mathsf{maxD}\}$.
\end{lemma}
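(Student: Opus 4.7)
The plan is to prove the identity by direct algebraic expansion of $\lev i$ and $\lev{i-1}$, splitting the ``close-range'' sum (the one weighted by $2$, covering $|j-i|\leq 1$) from the ``far-range'' sum (weighted by $|j-i|$, covering $|j-i|>1$), and then carefully matching up which index ranges appear on both sides of the subtraction.

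Concretely, I would first write
\[
\lev i = 2\bigl(\tsneiga{i-1}{}+\tsneiga{i}{}+\tsneiga{i+1}{}\bigr) + \sum_{j\leq i-2}(i-j)\tsneiga{j}{} + \sum_{j\geq i+2}(j-i)\tsneiga{j}{} - 2,
\]
and similarly for $\lev{i-1}$ by shifting $i\mapsto i-1$ everywhere. Then I would compute $\lev i-\lev{i-1}$ term by term. In the ``left-tail'' sums, I peel off the boundary term $j=i-2$ from $\sum_{j\leq i-2}(i-j)\tsneiga{j}{}$ (which contributes $2\tsneiga{i-2}{}$) so that the remaining range $j\leq i-3$ matches the range of the corresponding sum in $\lev{i-1}$; on that shared range the coefficients become $(i-j)-(i-1-j)=1$, leaving $\sum_{j\leq i-3}\tsneiga{j}{}$. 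An entirely analogous peeling in the ``right-tail'' sums isolates the boundary term $j=i+1$ and leaves $-\sum_{j\geq i+2}\tsneiga{j}{}$ on the shared range.

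Finally I would collect the boundary contributions: the $\pm 2\tsneiga{i\pm 1}{}$ terms coming out of the close-range sum $2(\tsneiga{i-1}{}+\tsneiga{i}{}+\tsneiga{i+1}{})-2(\tsneiga{i-2}{}+\tsneiga{i-1}{}+\tsneiga{i}{})=2\tsneiga{i+1}{}-2\tsneiga{i-2}{}$ cancel exactly against the boundary pieces $2\tsneiga{i-2}{}$ and $-2\tsneiga{i+1}{}$ produced by the peeling step, and the constants $-2$ cancel. What remains is precisely $\sum_{j\leq i-3}\tsneiga{j}{}-\sum_{j\geq i+2}\tsneiga{j}{}=\sum_{j<i-2}\tsneiga{j}{s}-\sum_{j>i+1}\tsneiga{j}{s}$, which is the claimed identity; the boundary conditions $\tsneiga{j}{}=0$ for $j\leq 0$ or $j>\mathsf{maxD}$ ensure that the formula remains correct at the extreme values $i=1$ and $i=\mathsf{maxD}$ without having to treat them separately.

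I do not expect any serious obstacle here: the statement is a bookkeeping identity, and the only place where one might slip is in aligning the index ranges after the peel-off step, i.e.\ making sure that the boundary term at $j=i-2$ (respectively $j=i+1$) is subtracted exactly once and that the constant $-2$ from each of $\lev i$ and $\lev{i-1}$ cancels. A side benefit of this derivation is that, since the right-hand side of the identity depends only on cumulative counts of the BFS levels, one can maintain two running sums (number of nodes strictly below the current level minus two, and number strictly above the current level plus one) and update $\lev i$ from $\lev{i-1}$ in $\O(1)$ time per level, which is exactly the structural fact needed to make \ublb\ run in $\O(n)$ after the BFS.
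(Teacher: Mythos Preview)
Your proposal is correct and follows essentially the same approach as the paper: both expand $\lev i$ by separating the close-range contribution $2(\tsneiga{i-1}{}+\tsneiga{i}{}+\tsneiga{i+1}{})$ from the two far-range tails, peel off the boundary indices $j=i-2$ and $j=i+1$ so that the remaining ranges match, observe that on the shared range the coefficient difference $|j-i|-|j-i+1|$ collapses to $\pm 1$, and verify that the $2\tsneiga{i+1}{}-2\tsneiga{i-2}{}$ from the close-range part cancels the peeled boundary terms. Your write-up is in fact more explicit than the paper's compressed one-line computation, but the argument is the same.
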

\begin{proof}
Since $\tsneiga{j}{} = 0$ for $j\leq 0$ and $j>\mathsf{maxD}$, we can write $\lev i$ as $2\cdot (\tsneiga{i-1}{}+\tsneiga{i}{}+\tsneiga{i+1}{})+ \sum_{|j - i| >1} \tsneiga{j}{}\cdot |j - i| - 2$, $\forall i \in \{1, ..., \mathsf{maxD}\}$.
The difference between $\lev i$ and $\lev{i-1}$ is: $2\cdot (\tsneiga{i-1}{s} + \tsneiga{i}{s} + \tsneiga{i+1}{s}) + \sum_{|j-i|>1} |j-i|\cdot \tsneiga{j}{s} - 2\cdot (\tsneiga{i-2}{s} + \tsneiga{i-1}{s} + \tsneiga{i}{s}) + \sum_{|j-i+1|>1} |j-i+1|\cdot \tsneiga{j}{s} = 2 \cdot (\tsneiga{i+1}{s} - \tsneiga{i-2}{s}) + 2\cdot \tsneiga{i-2}{s} - 2 \cdot \tsneiga{i+1}{s} + \sum_{j < i-2 \cup j >i+1} (|j-i|-|j-i+1|) \cdot \tsneiga{j}{s} = \sum_{j<i-2} \tsneiga{j}{s}-\sum_{j>i+1} \tsneiga{j}{s}$.
\qed
\end{proof}
\begin{algorithm2e}
 \begin{footnotesize}
\LinesNumbered
\SetKwFunction{BFSfrom}{BFSfrom}
\SetKwFunction{updateLowerBounds}{updateLowerBounds}
\SetKwFunction{enqueue}{enqueue}
\SetKwFunction{extractMin}{extractMin}
\SetKwFunction{seedNode}{seedNode}
\SetKwData{nL}{nL}
\SetKwData{Levels}{L}
\SetKwData{s}{sum}
\SetKwData{maxL}{maxD}
\SetKwData{sumD}{sum$\Gamma$}
 \SetKwInOut{Input}{Input}
 \SetKwInOut{Output}{Output}
\Input{A graph $G=(V,E)$, a source node $s$}
\Output{Lower bounds $\lfarnlb sv{\reac v}$ of each node $v\in \Reac s$}
$d \leftarrow \BFSfrom{s}$\; \label{lbound1:bfs}
$\maxL \leftarrow \max_{v \in V} d(s,v)$\;
$\sumD_{\leq 0} \leftarrow 0 $; $\sumD_{\leq -1} \leftarrow 0 $; $\sumD_{> {\maxL+1}} \leftarrow 0 $\;
\For{$i = 1, 2, ..., \maxL$} {
	$\sneiga{i} \leftarrow \{w \in V : d(s,w) = i \}$\;
	$\tsneiga{i}{} \leftarrow \# \sneiga{i}$\;
	$\sumD_{\leq i} \leftarrow \sumD_{\leq i-1} + \tsneiga{i}{}$\;
	$\sumD_{> i} \leftarrow |V| - \sumD_{\leq i}$\;
} \label{lbound1:end-init}
$\lev 1 \leftarrow \tsneiga{1}{} + \tsneiga{2}{} + \sumD_{> 2} -2$\; \label{lbound1:level1}
\For{$i = 2, ..., \maxL$} {
	$\lev i \leftarrow \lev {i-1} + \sumD_{\leq i-3} - \sumD_{> i+1}$\; \label{lbound1:leveli}
}
\For{$i = 1, ..., \maxL$} {
    \ForEach{$v \in \sneiga{i}$} {
        $\lfarnlb sv{\reac v} \leftarrow (\lev i - \degree(v)) \cdot \frac{(n-1)}{(r(v)-1)^2}$\; \label{lbound1:degree}
    }
}
\Return{$\lfarnlb sv{\reac v} \quad \forall v \in V$}
\end{footnotesize}
\caption{The \ublb\ function for undirected graphs}
\label{algo:lbound1}
\end{algorithm2e}

Algorithm~\ref{algo:lbound1} describes the computation of $\lfarnlb sv{\reac v}$. First, we compute all the distances between $s$ and the nodes in $\Reac s$ with a BFS, storing the number of nodes in each level and the number of nodes in levels $j\leq i$ and $j> i$ respectively (Lines~\ref{lbound1:bfs}~-~\ref{lbound1:end-init}). Then we compute the level bound $\lev 1$ of level $1$ according to its definition (Line~\ref{lbound1:level1}) and those of the other level according to Lemma~\ref{lemma:difference} (Line~\ref{lbound1:leveli}). The lower bound $\lfarnlb sv{\reac v}$ is then computed for each node $v$ by subtracting its degree to $\lev {d(s,v)}$ and normalizing (Line~\ref{lbound1:degree}). The complexity of Lines~\ref{lbound1:bfs}~-~\ref{lbound1:end-init} is that of running a BFS, \ie $\O(n+m)$. Line~\ref{lbound1:leveli} is repeated once for each level (which cannot be more than $n$) and Line~\ref{lbound1:degree} is repeated once for each node in $\Reac s$. Therefore, the following proposition holds.
\begin{proposition}
Computing the lower bound $\lfarnlb sv{\reac v}$ takes $\O(n + m)$ time.
\end{proposition}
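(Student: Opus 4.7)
The plan is to walk through Algorithm~\ref{algo:lbound1} line by line and bound the cost of each stage, showing that no step exceeds $O(n+m)$ and that the total therefore stays within this bound.

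First I would observe that Line~\ref{lbound1:bfs} performs a single BFS from $s$, which is well known to take $O(n+m)$ time and also produces all the distances $d(s,v)$ as a byproduct. Next, the initialization block (through Line~\ref{lbound1:end-init}) partitions the reachable vertices by their BFS level, computes $\tsneiga{i}{}=|\sneiga{i}|$, and builds the prefix sums $\sumD_{\leq i}$ and the suffix sums $\sumD_{>i}$; since every vertex is touched a constant number of times and the number of levels $\mathsf{maxD}$ is at most $n-1$, this costs $O(n)$ overall.

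Then I would handle the level-bound computation. Line~\ref{lbound1:level1} evaluates $\lev 1$ in $O(1)$ once the prefix/suffix sums are available. The key point — which is exactly what Lemma~\ref{lemma:difference} buys us — is that the recurrence $\lev i=\lev{i-1}+\sumD_{\leq i-3}-\sumD_{>i+1}$ lets us compute each subsequent $\lev i$ in $O(1)$ from $\lev{i-1}$ and two precomputed cumulative sums. Since there are at most $\mathsf{maxD}\leq n-1$ levels, the entire loop that fills in $\lev 2,\ldots,\lev{\mathsf{maxD}}$ costs $O(n)$. (This is the step where a naive implementation that recomputed each $\lev i$ from its definition would blow up to $\Theta(n^2)$, so Lemma~\ref{lemma:difference} is doing the real work.)

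Finally, Line~\ref{lbound1:degree} iterates once over every reachable vertex $v$, retrieving the precomputed $\lev{d(s,v)}$ and $\degree(v)$ and performing a constant number of arithmetic operations to obtain $\lfarnlb sv{\reac v}$. This contributes $O(n)$. Summing the BFS cost $O(n+m)$ with the $O(n)$ cost of the initialization, the level-bound recurrence and the per-vertex finalization yields the claimed $O(n+m)$ running time. There is no serious obstacle here beyond making sure the recurrence of Lemma~\ref{lemma:difference} is applied so that no level requires more than constant work; the rest is bookkeeping.
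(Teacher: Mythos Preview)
Your proposal is correct and follows essentially the same approach as the paper: a line-by-line cost analysis of Algorithm~\ref{algo:lbound1}, noting that the BFS and initialization take $\O(n+m)$, that Lemma~\ref{lemma:difference} makes each level bound computable in $\O(1)$ for $\O(n)$ total, and that the final per-vertex loop costs $\O(n)$. Your write-up is somewhat more detailed (in particular you spell out why the naive computation of the $\lev i$ would be quadratic and how the recurrence avoids this), but the structure and the key observations are the same.
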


For directed strongly-connected graphs, the result does not hold for nodes $w$ whose level is smaller than $l(v)$, since there might be a directed edge or a shortcut from $v$ to $w$. Yet, for nodes $w$ such that $d(s,w)>d(s,v)$, it is still true that $d(v,w)\geq d(s,w)-d(s,v)$. For the remaining nodes (apart from the outgoing neighbors of $v$),  we can only say that the distance must be at least 2. The upper bound $\lfarnlb sv{\reac v}$ for directed graphs can therefore be defined as:
\begin{equation}
\begin{aligned}
\label{eq:lbound1_dir}
&2\cdot \#\{ w \in \Reac s :  d(s,w) - d(s,v)  \leq 1\} \\
&+ \sum\limits_{\substack{w \in \Reac s \\  d(s,w) - d(s,v)  > 1}} ( d(s,w) - d(s,v) )  - \degree (v) - 2.
\end{aligned}
\end{equation}

The computation of $\lfarnlb sv{\reac v}$ for directed strongly-connected graphs is analogous to the one 
described in Algorithm~\ref{algo:lbound1}.

\section{The Directed Disconnected Case} \label{sec:disc}

In the directed disconnected case, even if the time complexity of computing strongly connected components is linear in the input size, the time complexity of computing the number of reachable vertices is much bigger (assuming SETH, it cannot be $\O(m^{2-\epsilon})$ \cite{Borassi2016}). For this reason, when computing our upper bounds, we cannot rely on the exact value of $\reac v$: for now, let us assume that we know a lower bound $\alpha(v) \leq \reac v$ and an upper bound $\omega(v) \geq \reac v$. The definition of these bounds is postponed to Sect.~\ref{sec:alphaomega}.

Furthermore, let us assume that we have a lower bound $\lfarn v{\reac v}$ on the farness of $v$, depending on the number $\reac v$ of vertices reachable from $v$: in order to obtain a bound not depending on $\reac v$, the simplest approach is $\farn v \geq \lfarn v{\reac v} \geq \min_{\alpha(v) \leq r \leq \omega(v)} \lfarn vr$. However, during the algorithm, computing the minimum among all these values might be quite expensive, if $\omega(v)-\alpha(v)$ is big. In order to solve this issue, we find a small set $X\subseteq [\alpha(v),\omega(v)]$ such that $\min_{\alpha(v) \leq r \leq \omega(v)} \lfarn vr=\min_{r \in X} \lfarn vr$.

More specifically, we find a condition that is verified by ``many'' values of $r$, and that implies $\lfarn vr \geq \min\left(\lfarn v{r-1},\lfarn v{r+1}\right)$: this way, we may define $X$ as the set of values of $r$ that either do not verify this condition, or that are extremal points of the interval $[\alpha(v),\omega(v)]$ (indeed, all other values cannot be minima of $\lfarn vr$). Since all our bounds are of the form $\lfarn vr=\frac{(n-1)\lsum vr}{(r-1)^2}$, where $\lsum vr$ is a lower bound on $\sum_{w \in \Reac v}d(v,w)$, we state our condition in terms of the function $\lsum vr$. For instance, in the case of the \ubcut\ function, $\lsumcut dvr=\sum_{w \in \ball dv}d(v,w)-\uneig{d+1}v+(d+2)(r-\sball dv)$, as in Lemma~\ref{lem:boundokconn}.

\begin{lemma} \label{lem:disc}
Let $v$ be a vertex, and let $\lsum vr$ be a positive function such that $\lsum v{\reac v}) \leq \sum_{w \in \Reac v} d(v,w)$ (where $\reac v$ is the number of vertices reachable from $v$). Assume that $\lsum v{r+1}-\lsum vr \leq \lsum vr-\lsum v{r-1}$. Then, if $\lfarn vr:=\frac{(n-1)\lsum vr}{(r-1)^2}$ is the corresponding bound on the farness of $v$, $\min\left(\lfarn v{r+1}, \lfarn v{r-1}\right) \leq \lfarn vr$.
\end{lemma}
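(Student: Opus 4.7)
The plan is to argue by contradiction: assume that $\lfarn vr$ is a strict local minimum, that is, $\lfarn v{r-1} > \lfarn vr$ and $\lfarn v{r+1} > \lfarn vr$, and derive a contradiction with the concavity hypothesis $\lsum v{r+1} - \lsum vr \leq \lsum vr - \lsum v{r-1}$. To keep notation compact I would set $a = \lsum v{r-1}$, $b = \lsum vr$, $c = \lsum v{r+1}$, so that the hypothesis reads $a + c \leq 2b$ and the goal is to show $\min\bigl(\tfrac{c}{r^2},\tfrac{a}{(r-2)^2}\bigr) \leq \tfrac{b}{(r-1)^2}$ (after cancelling the common positive factor $n-1$).

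First I would unpack the two strict inequalities coming from the contradiction hypothesis: $\tfrac{a}{(r-2)^2} > \tfrac{b}{(r-1)^2}$ and $\tfrac{c}{r^2} > \tfrac{b}{(r-1)^2}$, which rearrange to
\begin{equation*}
a > b\cdot\frac{(r-2)^2}{(r-1)^2}, \qquad c > b\cdot\frac{r^2}{(r-1)^2}.
\end{equation*}
Adding these and using the elementary identity $r^2 + (r-2)^2 = 2(r-1)^2 + 2$ gives
\begin{equation*}
a + c > b \cdot \frac{r^2 + (r-2)^2}{(r-1)^2} = b\cdot\Bigl(2 + \tfrac{2}{(r-1)^2}\Bigr) > 2b,
\end{equation*}
contradicting $a + c \leq 2b$. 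This yields the desired conclusion.

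There is essentially no hard step: the whole proof is a two-line algebraic manipulation, with the key observation being the convexity-style identity $r^2 + (r-2)^2 > 2(r-1)^2$ (equivalently, the sequence $(r-1)^2$ is strictly convex in $r$). The only minor care needed is to check the edge cases where a denominator might vanish: $(r-2)^2 = 0$ when $r = 2$, but then $\lfarn v{r-1}$ would formally involve division by zero, so the statement is understood for $r \geq 3$ (consistent with the context, where we always have $\alpha(v) \geq 1$ and the boundary values $r = \alpha(v)$ and $r = \omega(v)$ are the ones excluded from the concavity-based pruning and kept in the set $X$). Under this reading the proof goes through without further subtlety.
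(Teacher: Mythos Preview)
Your proof is correct and arguably cleaner than the paper's. The paper takes a more computational route: it introduces the forward and backward differences $d=\lsum v{r+1}-\lsum vr$ and $d'=\lsum vr-\lsum v{r-1}$, then derives via chains of equivalences that $\lfarn v{r+1}\le\lfarn vr$ holds iff $\lsum vr(2r-1)\ge (r-1)^2 d$, and $\lfarn v{r-1}\le\lfarn vr$ holds iff $\lsum vr(2r-1)\le (r-1)^2 d'+2\lsum vr$; since $d\le d'$ and $\lsum vr>0$, the two threshold values satisfy $(r-1)^2 d\le (r-1)^2 d'+2\lsum vr$, so any real number (in particular $\lsum vr(2r-1)$) must satisfy at least one of the two inequalities. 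Your contradiction argument bypasses these threshold computations entirely: by adding the two rearranged strict inequalities and using the identity $r^2+(r-2)^2=2(r-1)^2+2$, you get $a+c>2b$ directly. In effect, you are exploiting the strict convexity of $r\mapsto (r-1)^2$ in one stroke, whereas the paper unpacks the two first-order conditions separately and then compares them. Both arguments use the same ingredients (concavity of $\lsum v\cdot$, positivity, and the quadratic denominator), but yours isolates the conceptual reason---a concave numerator over a strictly convex denominator cannot have a strict interior local minimum---more transparently. Your remark on the $r=2$ edge case is also appropriate and matches how the lemma is actually applied in the paper (the endpoints $\alpha(v),\omega(v)$ are retained in the candidate set $X$ regardless).
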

\begin{proof}
Let us define $d=\lsum v{r+1}-\lsum vr$. Then, $\lfarn v{r+1} \leq \lfarn vr$ if and only if $\frac{(n-1)\lsum v{r+1}}{r^2} \leq \frac{(n-1)\lsum v{r}}{(r-1)^2}$ if and only if $\frac{\lsum v{r}+d}{r^2} \leq \frac{\lsum v{r}}{(r-1)^2}$ if and only if $(r-1)^2(\lsum v{r}+d) \leq r^2\lsum v{r}$ if and only if $\lsum v{r}(r^2-(r-1)^2) \geq (r-1)^2d$ if and only if $\lsum v{r}(2r-1) \geq (r-1)^2d$.

Similarly, if $d'=\lsum v{r}-\lsum v{r-1}$, $\lfarn v{r-1} \leq \lfarn vr$ if and only if $\frac{(n-1)\lsum v{r-1}}{(r-2)^2} \leq \frac{(n-1)\lsum v{r}}{(r-1)^2}$ if and only if $\frac{\lsum v{r}-d'}{(r-2)^2} \leq \frac{\lsum v{r}}{(r-1)^2}$ if and only if $(r-1)^2(\lsum v{r}-d') \leq (r-2)^2\lsum v{r}$ if and only if $\lsum v{r}((r-1)^2-(r-2)^2) \leq (r-1)^2d'$ if and only if $\lsum v{r}(2r-3) \leq (r-1)^2d'$ if and only if $\lsum v{r}(2r-1) \leq (r-1)^2d'+2\lsum v{r}$.

We conclude that, assuming $d \leq d'$, $(r-1)^2d \leq (r-1)^2d' \leq (r-1)^2d+2\lsum v{r}$, and one of the two previous conditions is always satisfied.\qed
\end{proof}

\subsection{The Neighborhood-Based Lower Bound}

In the neighborhood-based lower bound, we computed upper bounds $\tsneig kv$ on $\neig kv$, and we defined the lower bound $\lsumnb v{r(v)} \leq \sum_{w \in \Reac v} d(v,w)$, by
\[
\lsumnb v{\reac v} := \sum_{k=1}^{\diam(G)} k \cdot \min \left \lbrace \tsneig{k}{v},\  \reac v - \sum_{i=0}^{k-1} \tsneig{i}{v},\ 0 \right \rbrace.
\]

The corresponding bound on $f(v)$ is $\lfarnnb v{\reac v}:=\frac{(n-1)\lsumnb v{\reac v}}{(\reac v-1)^2}$: let us apply Lemma~\ref{lem:disc} with $\lsum vr= \lsumnb vr$ and $\lfarn vr = \lfarnnb vr$. We obtain that the local minima of $\lfarnnb v{\reac v}$ are obtained on values $r$ such that $\lsumnb v{r+1}-\lsumnb v{r}>\lsumnb v{r}-\lsumnb v{r-1}$, that is, when $r=\sum_{i=0}^{l} \tsneig{i}{v}$ for some $l$. Hence, our final bound $\lfarnnbr v$ becomes:
\begin{small}
\begin{equation}\label{eq:nblbdisc}
\min\left(\lfarnnb v{\alpha(v)}, \lfarnnb v{\omega(v)}, \min \left\{\lfarnnb v{r}:\alpha(v) < r<\omega(v), r=\sum_{i=0}^{l} \tsneig{i}{v}\right\}\right).
\end{equation}
\end{small}
This bound can be computed with no overhead, by modifying Lines~\ref{line:if1}~-~\ref{line:else1-end} in Algorithm~\ref{algo:lbound2}. Indeed, when $\reac v$ is known, we have two cases: either $\texttt{nVisited[s]}<\reac v$, and we continue, or $\texttt{nVisited[s]} \geq \reac v$, and $\lsumnb v{\reac v}$ is computed. In the disconnected case, we need to distinguish three cases: 
\begin{itemize}
\item if $\texttt{nVisited[v]}<\alpha(v)$, we simply continue the computation;
\item if $\alpha(v)\leq \texttt{nVisited[v]}<\omega(v)$, we compute $\lfarnnb v{\texttt{nVisited[v]}}$, and we update the minimum in Eq.~\ref{eq:nblbdisc} (if this is the first occurrence of this situation, we also have to compute $\lfarnnb v{\alpha(v)}$);
\item if $\texttt{nVisited[v]}\geq \omega(v)$, we compute $\lfarnnb v{\omega(v)}$, and we update the minimum in Eq.~\ref{eq:nblbdisc}.
\end{itemize}

Since this procedure needs time $\O(1)$, it has no impact on the running time of the computation of the neighborhood-based lower bound.

\subsection{The \ubcut\ Function}

Let us apply Lemma~\ref{lem:disc} to the bound used in the \ubcut\ function. In this case, by Lemma~\ref{lem:boundokconn}, $\lsumcut dvr=\sum_{w \in \ball dv}d(v,w)-\uneig{d+1}v+(d+2)(r-\sball dv)$, and $\lsumcut dv{r+1}-\lsumcut dv{r}=d+2$, which does not depend on $r$. Hence, the condition in Lemma~\ref{lem:disc} is always verified, and the only values we have to analyze are $\alpha(v)$ and $\omega(v)$. Hence, the lower bound becomes $\farn v \geq \lfarncut dv{\reac v} \geq \min_{\alpha(v) \leq r \leq \omega(v)} \lfarncut dvr = \min(\lfarncut dv{\alpha(v)}, \lfarncut dv{\omega(v)})$ (which does not depend on $\reac v$).

This means that, in order to adapt the \ubcut\ function (Algorithm~\ref{alg:bfscut}), it is enough to replace Lines~\ref{line:ftilde},~\ref{line:ftilde2} in order to compute both $\lfarncut dv{\alpha(v)}$ and $\lfarncut dv{\omega(v)})$, and to replace Lines~\ref{line:bfscutstop},~\ref{line:bfscutstop2} in order to stop if $\min(\lfarncut dv{\alpha(v)},\lfarncut dv{\omega(v)}) \geq x$.

\subsection{The \ublb\ Function}

In this case, we do not apply Lemma~\ref{lem:disc} to obtain simpler bounds. Indeed, the \ublb\ function improves the bounds of vertices that are quite close to the source of the BFS, and hence are likely to be in the same component as this vertex. Consequently, if we perform a BFS from a vertex $s$, we can simply compute $\lfarnlb sv{\reac v}$ for all vertices in the same strongly connected component as $s$, and for these vertices we know the value $\reac v=\reac s$. The computation of better bounds for other vertices is left as an open problem.

\subsection{Computing $\alpha(v)$ and $\omega(v)$}\label{sec:alphaomega}

It now remains to compute $\alpha(v)$ and $\omega(v)$. This can be done during the preprocessing phase of our algorithm, in linear time. To this purpose, let us precisely define the node-weighted directed acyclic graph $\mathcal{G=(V,E)}$ of strongly connected components (in short, SCCs) corresponding to a directed graph $G=(V,E)$. In this graph, $\mathcal V$ is the set of SCCs of $G$, and, for any two SCCs $C,D \in \mathcal V$, $(C,D)\in\mathcal E$ if and only if there is an arc in $E$ from a node in $C$ to the a node in $D$. For each SCC $C\in \mathcal V$, the weight $\weight(C)$ of $C$ is equal to $|C|$, that is, the number of nodes in the SCC $C$. Note that the graph $\mathcal G$ is computable in linear time.

For each node $v \in C$, $\reac v=\sum_{D\in \Reac C}\weight(D)$, where $\Reac C$ denotes the set of SCCs that are reachable from $C$ in $\mathcal G$. This means that we simply need to compute a lower (respectively, upper) bound $\alpha_{SCC}(C)$ (respectively, $\omega_{SCC}(C)$) on $\sum_{D\in \mathcal R(C)}\weight(D)$, for every SCC $C$. To this aim, we first compute a topological sort $\{C_1, \dots, C_l\}$ of $\mathcal V$ (that is, if $(C_i,C_j) \in \mathcal E$, then $i<j$). Successively, we use a dynamic programming approach, and, by starting from $C_l$, we process the SCCs in reverse topological order, and we set:
\[
\alpha_{SCC}(C)=\weight(C) + \max_{(C, D) \in \mathcal E} \alpha_{SCC}(D) \quad\,\,
\omega_{SCC}(C)=\weight(C) + \sum_{(C, D) \in \mathcal E} \omega_{SCC}(D).
\]
Note that processing the SCCs in reverse topological ordering ensures that the values $\alpha(D)$ and $\omega(D)$ on the right hand side of these equalities are available when we process the SCC $C$. Clearly, the complexity of computing $\alpha(C)$ and $\omega(C)$, for each SCC $C$, is linear in the size of $\mathcal G$, which in turn is smaller than $G$.

Observe that the bounds obtained through this simple approach can be improved by using some ``tricks''. First of all, when the biggest SCC $\tilde{C}$ is processed, we do not use the dynamic programming approach and we exactly compute $\sum_{D\in \mathcal R(\tilde{C})}\weight(D)$ by performing a BFS starting from any node in $\tilde{C}$. This way, not only $\alpha(\tilde{C})$ and $\omega(\tilde{C})$ are exact, but also $\alpha_{SCC}(C)$ and $\omega_{SCC}(C)$ are improved for each SCC $C$ from which it is possible to reach $\tilde{C}$. Finally, in order to compute the upper bounds for the SCCs that are able to reach $\tilde{C}$, we can run the dynamic programming algorithm on the graph obtained from $\mathcal{G}$ by removing all components reachable from $\tilde{C}$, and we can then add $\sum_{D\in \mathcal R(\tilde{C})}\weight(D)$. 

The pseudocode is available in Algorithm~\ref{algo:alphaomega}.

\begin{algorithm2e}
 \begin{footnotesize}
\LinesNumbered
\SetKwFunction{BFSfrom}{BFSfrom}
\SetKwFunction{computeSCCGraph}{computeSCCGraph}
\SetKwData{SCCs}{SCCs}
\SetKwData{k}{k}
\SetKwData{nReachMaxSCC}{nReachMaxSCC}
 \SetKwInOut{Input}{Input}
 \SetKwInOut{Output}{Output}
\Input{A graph $G=(V,E)$}
\Output{Lower and upper bounds $\alpha(v),\omega(v)$ on the number of vertices reachable from $v$}
$(\mathcal{V},\mathcal{E},w) \gets \computeSCCGraph(G)$\;
$\tilde{C} \gets $ the biggest SCC\;
$\alpha_{SCC}(\tilde{C}), \omega_{SCC}(\tilde{C}) \gets $ the number of vertices reachable from $\tilde{C}$\;

\For{$X \in \mathcal V$ in reverse topological order} {
    \lIf{$X == \tilde{C}$}{continue}
    
    $\alpha_{SCC}(X), \omega_{SCC}(X), \omega'_{SCC}(X) \gets 0$
    \For{$Y$ neighbor of $X$ in $\mathcal{G}$} {
        $\alpha_{SCC}(X) \gets \max(\alpha_{SCC}(X), \alpha_{SCC}(Y))$\;
        $\omega_{SCC}(X) \gets \omega_{SCC}(X)+\omega_{SCC}(Y)$\;
        \lIf{$W$ not reachable from $\tilde{C}$}{$\omega'_{SCC}(X) \gets \omega'_{SCC}(X)+\omega_{SCC}(Y)$}
    }
    \lIf{$X$ reaches $\tilde{C}$}{$\omega_{SCC}(X) \gets \omega'_{SCC}(X)+\omega_{SCC}(\tilde{C})$}
    
    $\alpha_{SCC}(X) \gets \alpha_{SCC}(X)+\weight(X)$\;
    $\omega_{SCC}(X) \gets \omega_{SCC}(X)+\weight(X)$\;
}
\For{$v \in V$} {
    $\alpha(v)=\alpha_{SCC}($the component of $v)$\;
    $\omega(v)=\omega_{SCC}($the component of $v)$\;
}
\Return{$\alpha,\omega$}
\end{footnotesize}
\caption{Estimating the number of reachable vertices in directed disconnected graphs.}
\label{algo:alphaomega}
\end{algorithm2e}

\section{Experimental Results} \label{sec:exp}
In this section, we test the four variations of our algorithm on several real-world networks, in order to evaluate their performances. All the networks used in our experiments come from the datasets SNAP (\url{snap.stanford.edu/}), NEXUS (\url{nexus.igraph.org}), LASAGNE (\url{piluc.dsi.unifi.it/lasagne}), LAW (\url{law.di.unimi.it}), KONECT (\url{http://konect.uni-koblenz.de/networks/}, and IMDB (\url{www.imdb.com}). 
The platform for our tests is a shared-memory server with 256 GB RAM and 2x8 Intel(R) Xeon(R) E5-2680 cores (32 threads due to hyperthreading) at 2.7 GHz.
The algorithms are implemented in C++, building on the open-source \textit{NetworKit}
framework~\cite{Staudt2014}. 

\subsection{Comparison with the State of the Art} \label{sec:expsmall}

In order to compare the performance of our algorithm with state-of-the-art approaches, we select 19 directed complex networks, 17 undirected complex networks, 6 directed road networks, and 6 undirected road networks (the undirected versions of the previous ones). The number of nodes of most of these networks ranges between $5\,000$ and $100\,000$. We test four different variations of our algorithm, that provide different implementations of the functions \cb\ and \ub\ (for more information, we refer to Sect.~\ref{sec:overview}):

\begin{description}
\item[\degcut] uses the conservative strategies \cbdeg\ and \ubcut;
\item[\degbound] uses the conservative strategy \cbdeg\ and the aggressive strategy \ublb; 
\item[\nbcut] uses the aggressive strategy \cbnb\ and the conservative strategy \ubcut;
\item[\nbbound] uses the aggressive strategies \cbnb\ and \ublb.
\end{description}

We compare these algorithms with our implementations of the best existing algorithms for top-$k$ closeness centrality.\footnote{Note that the source code of our competitors is not available.} The first one \cite{Olsen2014} is based on a pruning technique and on $\Delta$-BFS, a method to reuse information collected during a BFS from a node to speed up a BFS from one of its in-neighbors; we denote this algorithm as \deltabfs. The second one, \sampl, provides top-$k$ closeness centralities with high probability \cite{Okamoto2008}. It performs some BFSes from a random sample of nodes to estimate the closeness centrality of all the other nodes, then it computes the exact centrality of all the nodes whose estimate is big enough. Note that this algorithm requires the input graph to be (strongly) connected: for this reason, differently from the other algorithms, we have run this algorithm on the largest (strongly) connected component of the input graph. Furthermore, this algorithm offers different tradeoffs between the time needed by the sampling phase and the second phase: in our tests, we try all possible tradeoffs, and we choose the best alternative in each input graph (hence, our results are upper bounds on the real performance of the \sampl\ algorithm). 

In order to perform a fair comparison, we consider the \emph{improvement factor}, which is defined as $\frac{mn}{\mvis}$ in directed graphs, $\frac{2mn}{\mvis}$ in undirected graphs, where $\mvis$ is the number of arcs visited during the algorithm, and $mn$ (resp., $2mn$) is an estimate of the number of arcs visited by the \emph{textbook} algorithm in directed (resp., undirected) graphs (this estimate is correct whenever the graph is connected). Note that the improvement factor does not depend on the implementation, nor on the machine used for the algorithm, and it does not consider parts of the code that need subquadratic time in the worst case. These parts are negligible in our algorithm, because their worst case running time is $\O(n\log n)$ or $\O(mD)$ where $D$ is the diameter of the graph, but they can be significant when considering the competitors. For instance, in the particular case of \deltabfs, we have just counted the arcs visited in BFS and $\Delta$-BFS, ignoring all the operations done in the pruning phases (see \cite{Olsen2014}). 

We consider the geometric mean of the improvement factors over all graphs in the dataset. In our opinion, this quantity is more informative than the arithmetic mean, which is highly influenced by the maximum value: for instance, in a dataset of 20 networks, if all improvement factors are $1$ apart from one, which is $10\,000$, the arithmetic mean is more than $500$, which makes little sense, while the geometric mean is about $1.58$. Our choice is further confirmed by the geometric standard deviation, which is always quite small. 

The results are summarised in Table~\ref{tab:comparisoncomp} for complex networks and Table~\ref{tab:comparisonstreet} for street networks. For the improvement factors of each graph, we refer to Appendix~\ref{app:compext}.

\begin{table}[t]
\centering
\begin{scriptsize}
\tbl{Complex networks: geometric mean and standard deviation of the improvement factors of the algorithm in \cite{Olsen2014} (\deltabfs), the algorithm in \cite{Okamoto2008} (\sampl), and the four variations of the new algorithm (\degcut, \degbound, \nbcut, \nbbound). \label{tab:comparisoncomp}}{
\begin{tabular}{|@{\hspace{3pt}}c@{\hspace{3pt}}|l|r|r|r|r|r|r|}
\hline
 & &  \multicolumn{2}{c|}{\textsc{Directed}} & \multicolumn{2}{c|}{\textsc{Undirected}} & \multicolumn{2}{c|}{\textsc{Both}}\\
$k$ & \textsc{Algorithm} & \multicolumn{1}{c|}{\textsc{GMean}}& \multicolumn{1}{c|}{\textsc{GStdDev}} & \multicolumn{1}{c|}{\textsc{GMean}}& \multicolumn{1}{c|}{\textsc{GStdDev}}& \multicolumn{1}{c|}{\textsc{GMean}}& \multicolumn{1}{c|}{\textsc{GStdDev}}\\
\hline
1 & \deltabfs & 21.24 & 5.68 & 11.11 & 2.91 & 15.64 & 4.46 \\ 
 & \sampl & 1.71 & 1.54 & 2.71 & 1.50 & 2.12 & 1.61 \\ 
 & \degcut & 104.20 & 6.36 & 171.77 & 6.17 & 131.94 & 6.38 \\ 
 & \degbound & 3.61 & 3.50 & 5.83 & 8.09 & 4.53 & 5.57 \\ 
 &\nbcut & \textbf{123.46} & 7.94 & \textbf{257.81} & 8.54 & \textbf{174.79} & 8.49 \\ 
 & \nbbound & 17.95 & 10.73 & 56.16 & 9.39 & 30.76 & 10.81 \\
 \hline
10 & \deltabfs & 21.06 & 5.65 & 11.11 & 2.90 & 15.57 & 4.44 \\ 
 & \sampl & 1.31 & 1.31 & 1.47 & 1.11 & 1.38 & 1.24 \\ 
 & \degcut & 56.47 & 5.10 & 60.25 & 4.88 & 58.22 & 5.00 \\ 
 & \degbound & 2.87 & 3.45 & 2.04 & 1.45 & 2.44 & 2.59 \\ 
 &\nbcut & \textbf{58.81} & 5.65 & \textbf{62.93} & 5.01 & \textbf{60.72} & 5.34 \\ 
 & \nbbound & 9.28 & 6.29 & 10.95 & 3.76 & 10.03 & 5.05 \\ 
 \hline
100 & \deltabfs & 20.94 & 5.63 & 11.11 & 2.90 & 15.52 & 4.43 \\ 
 & \sampl & 1.30 & 1.31 & 1.46 & 1.11 & 1.37 & 1.24 \\ 
 & \degcut & 22.88 & 4.70 & 15.13 & 3.74 & 18.82 & 4.30 \\ 
 & \degbound & 2.56 & 3.44 & 1.67 & 1.36 & 2.09 & 2.57 \\ 
&\nbcut & \textbf{23.93} & 4.83 & \textbf{15.98} & 3.89 & \textbf{19.78} & 4.44 \\ 
 & \nbbound & 4.87 & 4.01 & 4.18 & 2.46 & 4.53 & 3.28 \\
 \hline
\end{tabular}}
\end{scriptsize}
\end{table}

\begin{table}[t]
\centering
\begin{scriptsize}
\tbl{Street networks: geometric mean and standard deviation of the improvement factors of the algorithm in \cite{Olsen2014} (\deltabfs), the algorithm in \cite{Okamoto2008} (\sampl), and the four variations of the new algorithm (\degcut, \degbound, \nbcut, \nbbound).\label{tab:comparisonstreet}}{
\begin{tabular}{|@{\hspace{3pt}}c@{\hspace{3pt}}|l|r|r|r|r|r|r|}
\hline
 & &  \multicolumn{2}{c|}{\textsc{Directed}} & \multicolumn{2}{c|}{\textsc{Undirected}} & \multicolumn{2}{c|}{\textsc{Both}}\\
$k$ & \textsc{Algorithm} & \multicolumn{1}{c|}{\textsc{GMean}}& \multicolumn{1}{c|}{\textsc{GStdDev}} & \multicolumn{1}{c|}{\textsc{GMean}}& \multicolumn{1}{c|}{\textsc{GStdDev}}& \multicolumn{1}{c|}{\textsc{GMean}}& \multicolumn{1}{c|}{\textsc{GStdDev}}\\
\hline
1 & \deltabfs & 4.11 & 1.83 & 4.36 & 2.18 & 4.23 & 2.01 \\ 
 & \sampl & 3.39 & 1.28 & 3.23 & 1.28 & 3.31 & 1.28 \\ 
 & \degcut & 4.14 & 2.07 & 4.06 & 2.06 & 4.10 & 2.07 \\ 
 & \degbound & 187.10 & 1.65 & 272.22 & 1.67 & 225.69 & 1.72 \\ 
 & \nbcut & 4.12 & 2.07 & 4.00 & 2.07 & 4.06 & 2.07 \\ 
&\nbbound & \textbf{250.66} & 1.71 & \textbf{382.47} & 1.63 & \textbf{309.63} & 1.74 \\ 
 \hline
10  & \deltabfs & 4.04 & 1.83 & 4.28 & 2.18 & 4.16 & 2.01 \\ 
 & \sampl & 2.93 & 1.24 & 2.81 & 1.24 & 2.87 & 1.24 \\ 
 & \degcut & 4.09 & 2.07 & 4.01 & 2.06 & 4.05 & 2.07 \\ 
 & \degbound & 172.06 & 1.65 & 245.96 & 1.68 & 205.72 & 1.72 \\ 
 & \nbcut & 4.08 & 2.07 & 3.96 & 2.07 & 4.02 & 2.07 \\ 
&\nbbound & \textbf{225.26} & 1.71 & \textbf{336.47} & 1.68 & \textbf{275.31} & 1.76 \\ 
 \hline
 100 & \deltabfs & 4.03 & 1.82 & 4.27 & 2.18 & 4.15 & 2.01 \\ 
 & \sampl & 2.90 & 1.24 & 2.79 & 1.24 & 2.85 & 1.24 \\ 
 & \degcut & 3.91 & 2.07 & 3.84 & 2.07 & 3.87 & 2.07 \\ 
 & \degbound & 123.91 & 1.56 & 164.65 & 1.67 & 142.84 & 1.65 \\ 
 & \nbcut & 3.92 & 2.08 & 3.80 & 2.09 & 3.86 & 2.08 \\ 
&\nbbound & \textbf{149.02} & 1.59 & \textbf{201.42} & 1.69 & \textbf{173.25} & 1.67\\
 \hline
\end{tabular}}
\end{scriptsize}
\end{table}

On complex networks, the best algorithm is \nbcut: when $k=1$, the improvement factors are always bigger than $100$, up to $258$, when $k=10$ they are close to $60$, and when $k=100$ they are close to $20$. Another good option is \degcut, which achieves results similar to \nbcut, but it has almost no overhead at the beginning (while \nbcut\ needs a preprocessing phase with cost $\O(mD)$). Furthermore, \degcut\ is very easy to implement, becoming a very good candidate for state-of-the-art graph libraries. The improvement factors of the competitors are smaller: \deltabfs\ has improvement factors between $10$ and $20$, and \sampl\ provides almost no improvement with respect to the \emph{textbook} algorithm.

We also test our algorithm on the three complex unweighted networks analysed in \cite{Olsen2014}, respectively called \texttt{web-Google} (\texttt{Web} in \cite{Olsen2014}), \texttt{wiki-Talk} (\texttt{Wiki} in \cite{Olsen2014}), and \texttt{com-dblp} (\texttt{DBLP} in \cite{Olsen2014}). In the \texttt{com-dblp} graph (resp. \texttt{web-Google}), our algorithm \nbcut\ computed the top 10 nodes in about $17$ seconds (resp., less than $2$ minutes) on the whole graph, having $1\,305\,444$ nodes (resp., $875\,713$), while \deltabfs\ needed about $25$ minutes (resp. $4$ hours) on a subgraph of $400\,000$ nodes. In the graph \texttt{wiki-Talk}, \nbcut\ needed $8$ seconds for the whole graph having $2\,394\,385$ nodes, instead of about $15$ minutes on a subgraph with 1 million nodes. These results are available in Table~\ref{tab:compbigapp} in the Appendix.

On street networks, the best option is \nbbound: for $k=1$, the average improvement is about $250$ in the directed case and about $382$ in the undirected case, and it always remains bigger than $150$, even for $k=100$. It is worth noting that also the performance of \degbound\ are quite good, being at least $70\%$ of \nbbound. Even in this case, the \degbound\ algorithm offers some advantages: it is very easy to be implemented, and there is no overhead in the first part of the computation. All the competitors perform relatively poorly on street networks, since their improvement is always smaller than $5$.

Overall, we conclude that the preprocessing function \cbnb\ always leads to better results (in terms of visited edges) than \cbdeg, but the difference is quite small: hence, in some cases, \cbdeg\ could be even preferred, because of its simplicity. Conversely, the performance of \ubcut\ is very different from the performance of \ublb: the former works much better on complex networks, while the latter works much better on street networks. Currently, these two approaches exclude each other: an open problem left by this work is the design of a ``combination'' of the two, that works both in complex networks and in street networks. Finally, the experiments show that the best variation of our algorithm outperforms all competitors in all frameworks considered: both in complex and in street networks, both in directed and undirected graphs.

\paragraph{Harmonic Centrality}
As mentioned in the introduction, all our methods can be easily generalized to any centrality measure in the form $c(v) = \sum_{w \neq v} f(d(v, w))$, where $f$ is a decreasing function such that $f(+\infty) = 0$. We also implemented a version of \degcut, \degbound, \nbcut\ and \nbbound\ for \textit{harmonic centrality}, which is defined as $h(v) = \sum_{w \neq v} \frac{1}{d(v, w)}$.
Also for harmonic centrality, we compute the improvement factors on the textbook algorithm.

For the complex networks used in our experiments, finding the $k$ nodes with highest harmonic centrality is \textit{always faster} than finding the $k$ nodes with highest closeness, for all four methods and $k$ values in $\{ 1, 10, 100\}$. For example, for \nbcut\ and $k=1$, the geometric mean\footnote{We report the geometric mean over both directed and undirected networks.} of the improvement factors is 486.07, whereas for closeness it is 174.79 (as reported in Table~\ref{tab:comparisoncomp}). 

For street networks, the version of harmonic centrality is faster than the version for closeness for \degcut\ and \nbcut, but it is slower for \degbound\ and \nbbound. In particular, the average (geometric mean) improvement factor of \nbbound\ for harmonic centrality is 103.58 for $k=1$, 93.49 for $k=10$ and 62.22 for $k=100$, which is about a factor 3 smaller than the improvement factor of  \nbbound\ for closeness (see Table~\ref{tab:comparisonstreet}). Nevertheless, this is significantly faster than the textbook algorithm.

\subsection{Real-World Large Networks} \label{sec:expbig}

In this section, we run our algorithm on bigger inputs, by considering a dataset containing $23$ directed networks, $15$ undirected networks, and $5$ road networks, with up to $3\,774\,768$ nodes and $117\,185\,083$ edges. On this dataset, we run the fastest variant of our algorithm (\degbound\ in complex networks, \nbbound\ in street networks), using $64$ threads (however, the server used has only $16$ cores and runs $32$ threads with hyperthreading; we account for memory latency in
graph computations by oversubscribing slightly).

Once again, we consider the \emph{improvement factor}, which is defined as $\frac{mn}{\mvis}$ in directed graphs, $\frac{2mn}{\mvis}$ in undirected graphs. It is worth observing that we are able to compute for the first time the $k$ most central nodes of networks with millions of nodes and hundreds of millions of arcs, with $k=1$, $k=10$, and $k=100$. The detailed results are shown in Table~\ref{tab:compbigapp} in the Appendix, where for each network we report the running time and the improvement factor. A summary of these results is available in Table~\ref{tab:compbig}, which contains the geometric means of the improvement factors, with the corresponding standard deviations. 

\begin{table}[t]
\centering
\begin{scriptsize}
\tbl{Big networks: geometric mean and standard deviation of the improvement factors of the best variation of the new algorithm (\degbound\ in complex networks, \nbbound\ in street networks).\label{tab:compbig}}{
\begin{tabular}{|l|r|r|r|r|r|r|r|}
\hline
 & &  \multicolumn{2}{c|}{\textsc{Directed}} & \multicolumn{2}{c|}{\textsc{Undirected}} & \multicolumn{2}{c|}{\textsc{Both}}\\
Input & \multicolumn{1}{c|}{$k$}  &  \multicolumn{1}{c|}{\textsc{GMean}}& \multicolumn{1}{c|}{\textsc{GStdDev}} & \multicolumn{1}{c|}{\textsc{GMean}}& \multicolumn{1}{c|}{\textsc{GStdDev}}& \multicolumn{1}{c|}{\textsc{GMean}}& \multicolumn{1}{c|}{\textsc{GStdDev}}\\
\hline
 & 1 & 742.42 & 2.60 & 1681.93 & 2.88 & 1117.46 & 2.97 \\ 
Street & 10 & 724.72 & 2.67 & 1673.41 & 2.92 & 1101.25 & 3.03 \\ 
 & 100 & 686.32 & 2.76 & 1566.72 & 3.04 & 1036.95 & 3.13 \\ 
 \hline
 & 1 & 247.65 & 11.92 & 551.51 & 10.68 & 339.70 & 11.78 \\ 
Complex & 10 & 117.45 & 9.72 & 115.30 & 4.87 & 116.59 & 7.62 \\
 & 100 & 59.96 & 8.13 & 49.01 & 2.93 & 55.37 & 5.86 \\
 \hline
\end{tabular}}
\end{scriptsize}
\end{table}

For $k=1$, the geometric mean of the improvement factors is always above $200$ in complex networks, and above $700$ in street networks. In undirected graphs, the improvement factors are even bigger: close to $500$ in complex networks and close to $1\,600$ in street networks. For bigger values of $k$, the performance does not decrease significantly: on complex networks, the improvement factors are bigger than or very close to $50$, even for $k=100$. In street networks, the performance loss is even smaller, always below $10\%$ for $k=100$.

Regarding the robustness of the algorithm, we outline that the algorithm always achieves performance improvements bigger than $\sqrt{n}$ in street networks, and that in complex networks, with $k=1$, $64\%$ of the networks have improvement factors above $100$, and $33\%$ of the networks above $1\,000$. In some cases, the improvement factor is even bigger: in the \texttt{com-Orkut} network, our algorithm for $k=1$ is almost $35\,000$ times faster than the \emph{textbook} algorithm.

In our experiments, we also report the running time of our algorithm. Even for $k=100$, a few minutes are sufficient to conclude the computation on most networks, and, in all but two cases, the total time is smaller than $3$ hours. For $k=1$, the computation always terminates in at most $1$ hour and a half, apart from two street networks where it needs less than $2$ hours and a half. Overall, the total time needed to compute the most central vertex in all the networks is smaller than $1$ day. In contrast to this, if we extrapolate the results in Tables~\ref{tab:comparisoncomp} and~\ref{tab:comparisonstreet}, it seems plausible that the fastest competitor OLH would require a month or so.

\section{IMDB Case Study} \label{sec:imdb}

In this section, we apply the new algorithm \nbbound\ to analyze the IMDB graph, where nodes are actors, and two actors are connected if they played together in a movie (TV-series are ignored). The data collected comes from the website \url{http://www.imdb.com}: in line with \url{http://oracleofbacon.org}, we decide to exclude some genres from our database: awards-shows, documentaries, game-shows, news, realities and talk-shows. We analyse snapshots of the actor graph, taken every 5 years from 1940 to 2010, and 2014. The results are reported in Table~\ref{tab:imdb} and Table~\ref{tab:imdb2} in the Appendix.

\paragraph{The Algorithm}
Thanks to this experiment, we can evaluate the performance of our algorithm on increasing snapshots of the same graph. This way, we can have an informal idea on the asymptotic behavior of its complexity. In Figure~\ref{fig:actorperformance}, we have plotted the improvement factor with respect to the number of nodes: if the improvement factor is $I$, the running time is $\O(\frac{mn}{I})$. Hence, assuming that $I=cn$ for some constant $c$ (which is approximately verified in the actor graph, as shown by Figure~\ref{fig:actorperformance}), the running time is linear in the input size.
The total time needed to perform the computation on all snapshots is little more than $30$ minutes for $k=1$, and little more than $45$ minutes for $k=10$. 

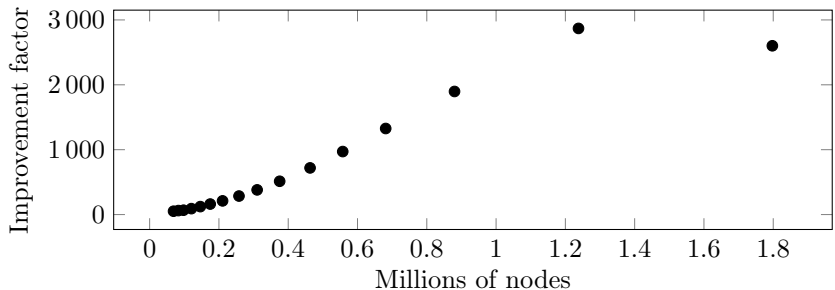
\begin{figure}
\centering
\begin{tikzpicture}
\tikzset{
did/.style={rectangle,draw=none,fill=none,inner sep=0cm},
every node/.style={did}}

\pgfplotsset{cycle list={{mark=*}},height=4.5cm,width=0.8\textwidth}
\begin{axis}[xlabel=Millions of nodes,
			ylabel=Improvement factor,
			y label style={did,at={(0,0.5)}},
			x tick label style={did, minimum height = 0.5cm},
			x label style={did},
			y tick label style={did, inner sep=3pt},
			legend pos={north west}
			]
\pgfkeys{/pgf/number format/.cd, set thousands separator={\,}}

\addplot+[mark=*,only marks] coordinates {
(.069011,51.742991662)
(.083068,61.4642531784)
(.097824,67.5031941787)
(.120430,91.4572612002)
(.146253,122.6341427434)
(.174826,162.0572633208)
(.210527,211.053381443)
(.257896,285.5700800513)
(.310278,380.5246528171)
(.375322,513.3969825689)
(.463078,719.2069886806)
(.557373,971.1081551636)
(.681358,1326.5304767698)
(.880032,1897.313214871)
(1.237879,2869.1380172861)
(1.797446,2601.5196086576)
};
\end{axis}
\end{tikzpicture}
\caption{Growth of performance ratio with respect to the number of nodes ($k=1$).}
\label{fig:actorperformance}
\end{figure}

\paragraph{The Results} In 2014, the most central actor is Michael Madsen, whose career spans 25 years and more than 170 films. Among his most famous appearances, he played as \emph{Jimmy Lennox} in \emph{Thelma \& Louise} (Ridley Scott, 1991), as \emph{Glen Greenwood} in \emph{Free Willy} (Simon Wincer, 1993), as \emph{Bob} in \emph{Sin City} (Frank Miller, Robert Rodriguez, Quentin Tarantino), and as \emph{Deadly Viper Budd} in \emph{Kill Bill} (Quentin Tarantino, 2003-2004).
The second is Danny Trejo, whose most famous movies are  
\emph{Heat} (Michael Mann, 1995), where he played as \emph{Trejo}, \emph{Machete} (Ethan Maniquis, Robert Rodriguez, 2010) and \emph{Machete Kills} (Robert Rodriguez, 2013), where he played as \emph{Machete}.
The third ``actor'' is not really an actor: he is the German dictator Adolf Hitler: he was also the most central actor in 2005 and 2010, and he was in the top 10 since 1990. This a consequence of his appearances in several archive footages, that were re-used in several movies (he counts 775 credits, even if most of them are in documentaries or TV shows, which were eliminated). Among the movies where Adolf Hitler is credited, we find \emph{Zelig} (Woody Allen, 1983), and \emph{The Imitation Game} (Morten Tyldum, 2014). Among the other most central actors, we find many people who played a lot of movies, and most of them are quite important actors. However, this ranking does not discriminate between important roles and marginal roles: for instance, the actress Bess Flowers is not widely known, because she rarely played significant roles, but she appeared in over 700 movies in her 41 years career, and for this reason she was the most central for 30 years, between 1950 and 1980. Finally, it is worth noting that we never find Kevin Bacon in the top 10, even if he became famous for the ``Six Degrees of Kevin Bacon'' game (\url{http://oracleofbacon.org}). In this game the player receives an actor $x$ and has to find a path of length at most $6$ from $x$ to Kevin Bacon in the actor graph. Kevin Bacon was chosen as the goal because he played in several movies, and he was thought to be one of the most central actors: this work shows that, actually, he is quite far from the top. Indeed, his closeness centrality is $0.336$, while the most central actor has centrality $0.354$, the 10th actor has centrality $0.350$, and the 100th actor 
has centrality $0.341$.

\section{Wikipedia Case Study} \label{sec:wiki}

In this section, we apply the new algorithm \nbbound\ to analyze the Wikipedia graph, where nodes are pages, and there is a directed edge from page $p$ to page $q$ if $p$ contains a link to $q$. The data collected comes from DBPedia 3.7 (\url{http://wiki.dbpedia.org/}). We analyse both the standard graph and the reverse graph, which contains an edge from page $p$ to page $q$ if $q$ contains a link to $p$. The $10$ most central pages are available in Table~\ref{tab:wiki}.

\begin{table}[t]
\centering
\tbl{Top $10$ pages in Wikipedia directed graph, both in the standard graph and in the reversed graph.\label{tab:wiki}}{
\begin{tabular}{|r|l|l|}
\hline
Position & Standard Graph & Reversed Graph \\
\hline
\textsc{1st} & 1989 & United States\\
\textsc{2nd} & 1967 & World War II\\
\textsc{3rd} & 1979 & United Kingdom\\
\textsc{4th} & 1990 & France\\
\textsc{5th} & 1970 & Germany\\
\textsc{6th} & 1991 & English language\\
\textsc{7th} & 1971 & Association football\\
\textsc{8th} & 1976 & China\\
\textsc{9th} & 1945 & World War I\\
\textsc{10th} & 1965 & Latin\\
\hline
\end{tabular}}
\end{table}

\paragraph{The Algorithm}
In the standard graph, the improvement factor is $1\,784$ for $k=1$, $1\,509$ for $k=10$, and $870$ for $k=100$. The total running time is about $39$ minutes for $k=1$, $45$ minutes for $k=10$, and less than $1$ hour and $20$ minutes for $k=100$. In the reversed graph, the algorithm performs even better: the improvement factor is $87\,918$ for $k=1$, $71\,923$ for $k=10$, and $21\,989$ for $k=100$. The total running times are less than $3$ minutes for both $k=1$ and $k=10$, and less than $10$ minutes for $k=100$.

\paragraph{The Results}
If we consider the standard graph, the results are quite unexpected: indeed, all the most central pages are years (the first is \emph{1989}). However, this is less surprising if we consider that these pages contain a lot of links to events that happened in that year: for instance, the out-degree of \emph{1989} is $1\,560$, and the links contain pages from very different topics: historical events, like the fall of Berlin wall, days of the year, different countries where particular events happened, and so on. A similar argument also works for other years: indeed, the second page is \emph{1967} (with out-degree $1\,438$), and the third is \emph{1979} (with out-degree $1\,452$). Furthermore, all the 10 most central pages have out-degree at least $1\,269$. Overall, we conclude that the central page in the Wikipedia standard graph are not the ``intuitively important'' pages, but they are the pages that have a biggest number of links to pages with different topics, and this maximum is achieved by pages related to years.

Conversely, if we consider the reversed graph, the most central page is \emph{United States}, confirming a common conjecture. Indeed, in \url{http://wikirank.di.unimi.it/}, it is shown that the United States are the center according to harmonic centrality, and many other measures (however, in that work, the ranking is only approximated). A further evidence for this conjecture comes from the Six Degree of Wikipedia game (\url{http://thewikigame.com/6-degrees-of-wikipedia}), where a player is asked to go from one page to the other following the smallest possible number of link: a hard variant of this game forces the player not to pass the \emph{United States} page, which is considered to be central. In this work, we show that this conjecture is true. The second page is \emph{World War II}, and the third is \emph{United Kingdom}, in line with the results obtained by other centrality measures (see \url{http://wikirank.di.unimi.it/}), especially for the first two pages.

Overall, we conclude that most of the central pages in the reversed graph are nations, and that the results capture our intuitive notion of ``important'' pages in Wikipedia. Thanks to this new algorithm, we can compute these pages in a bit more than 1 hour for the original graph, and less than 10 minutes for the reversed one.

\section{Conclusions}
In this paper we have presented a hardness result on the computation of the most central vertex in a graph, according to closeness centrality. Then, we have presented a very simple algorithm for the exact computation of the $k$ most central vertices. Even if the time complexity of the new algorithm is equal to the time complexity of the textbook algorithm (which, in any case, cannot be improved in general), we have shown that in practice the former improves the latter by several orders of magnitude. 
We have also shown that the new algorithm outperforms the state of the art (whose time complexity is still equal to the complexity of the textbook algorithm), and we have computed for the first time the most central nodes in networks with millions of nodes and hundreds of millions of edges. Finally, we have considered as a case study several snapshots of the IMDB actor network, and the Wikipedia graph.

\section*{Acknowledgments}
This work is partially supported by German Research Foundation (DFG) grant ME-3619/3-1 (FINCA) within the Priority Programme 1736 \textit{Algorithms for Big Data} and by by the Italian Ministry of Education, University, and Research (MIUR) under PRIN 2012C4E3KT national research project AMANDA -- \textit{Algorithmics for MAssive and Networked DAta}.

\bibliographystyle{ACM-Reference-Format-Journals}
\bibliography{library}

\clearpage
\appendix

\section*{Appendix}

\section{Comparison with the State of the Art: Detailed Results} \label{app:compext}
\begin{table}[ht]
\centering
\begin{scriptsize}
\setlength{\tabcolsep}{2.7pt}
\tbl{Detailed comparison of the improvement factors, with $k=1$. \label{tab:compext}}{
\begin{tabular}{|l|r|r|r|r|r|r|}
\multicolumn{7}{c}{\textbf{Directed Street}} \\
\hline
Network & \deltabfs & \sampl & \degcut & \degbound & \nbcut & \nbbound \\ 
\hline
 faroe-islands & 4.080 & 3.742 & 4.125 & 338.011 & 4.086 & 437.986 \\ 
 liechtenstein & 2.318 & 2.075 & 2.114 & 130.575 & 2.115 & 137.087 \\ 
 isle-of-man & 2.623 & 3.740 & 2.781 & 224.566 & 2.769 & 314.856 \\ 
 malta & 5.332 & 4.351 & 4.147 & 73.836 & 4.141 & 110.665 \\ 
 belize & 2.691 & 3.969 & 2.606 & 253.866 & 2.595 & 444.849 \\ 
 azores & 13.559 & 3.038 & 19.183 & 230.939 & 19.164 & 266.488 \\
\hline
\multicolumn{7}{c}{\textbf{Undirected Street}} \\
\hline
Network & \deltabfs & \sampl & \degcut & \degbound & \nbcut & \nbbound \\ 
\hline
faroe-islands & 4.126 & 3.276 & 4.118 & 361.593 & 3.918 & 444.243 \\ 
 liechtenstein & 2.318 & 2.027 & 2.107 & 171.252 & 2.122 & 183.240 \\ 
 isle-of-man & 2.613 & 3.661 & 2.767 & 266.734 & 2.676 & 370.194 \\ 
 malta & 4.770 & 4.164 & 3.977 & 122.729 & 3.958 & 232.622 \\ 
 belize & 2.565 & 3.945 & 2.510 & 340.270 & 2.481 & 613.778 \\ 
 azores & 22.406 & 2.824 & 18.654 & 589.985 & 18.810 & 727.528 \\
\hline
\multicolumn{7}{c}{\textbf{Directed Complex}} \\
\hline
Network & \deltabfs & \sampl & \degcut & \degbound & \nbcut & \nbbound \\ 
\hline
polblogs & 3.201 & 1.131 & 31.776 & 1.852 & 31.974 & 5.165 \\ 
 out.opsahl-openflights & 13.739 & 1.431 & 73.190 & 2.660 & 73.888 & 18.255 \\ 
 ca-GrQc & 9.863 & 1.792 & 36.673 & 3.630 & 38.544 & 6.307 \\ 
 out.subelj\_jung-j\_jung-j & 125.219 & 1.203 & 79.559 & 1.024 & 79.882 & 1.897 \\ 
 p2p-Gnutella08 & 5.696 & 1.121 & 66.011 & 4.583 & 81.731 & 6.849 \\ 
 out.subelj\_jdk\_jdk & 116.601 & 1.167 & 74.300 & 1.023 & 74.527 & 1.740 \\ 
 wiki-Vote & 9.817 & 2.760 & 261.242 & 1.479 & 749.428 & 395.278 \\ 
 p2p-Gnutella09 & 5.534 & 1.135 & 41.214 & 4.650 & 43.236 & 6.101 \\ 
 ca-HepTh & 7.772 & 2.121 & 40.068 & 3.349 & 42.988 & 5.217 \\ 
 freeassoc & 33.616 & 1.099 & 12.638 & 2.237 & 12.700 & 2.199 \\ 
 ca-HepPh & 7.682 & 2.836 & 10.497 & 3.331 & 10.516 & 4.387 \\ 
 out.lasagne-spanishbook & 13.065 & 2.553 & 1871.296 & 7.598 & 6786.506 & 3160.750 \\ 
 out.cfinder-google & 16.725 & 1.782 & 38.321 & 2.665 & 25.856 & 3.020 \\ 
 ca-CondMat & 7.382 & 3.526 & 409.772 & 5.448 & 517.836 & 29.282 \\ 
 out.subelj\_cora\_cora & 14.118 & 1.700 & 14.098 & 1.345 & 14.226 & 2.299 \\ 
 out.ego-twitter & 2824.713 & 1.000 & 1870.601 & 28.995 & 3269.183 & 278.214 \\ 
 out.ego-gplus & 722.024 & 1.020 & 3481.943 & 236.280 & 3381.029 & 875.111 \\ 
 as-caida20071105 & 20.974 & 3.211 & 2615.115 & 1.737 & 2837.853 & 802.273 \\ 
 cit-HepTh & 4.294 & 3.045 & 16.259 & 1.514 & 16.398 & 3.290 \\ 
\hline
\multicolumn{7}{c}{\textbf{Undirected Complex}} \\
\hline
Network & \deltabfs & \sampl & \degcut & \degbound & \nbcut & \nbbound \\ 
\hline
HC-BIOGRID & 5.528 & 1.581 & 15.954 & 3.821 & 14.908 & 3.925 \\ 
 facebook\_combined & 10.456 & 3.726 & 56.284 & 18.786 & 56.517 & 98.512 \\ 
 Mus\_musculus & 18.246 & 1.743 & 70.301 & 3.253 & 104.008 & 7.935 \\ 
 Caenorhabditis\_elegans & 11.446 & 2.258 & 86.577 & 2.140 & 110.677 & 9.171 \\ 
 ca-GrQc & 6.567 & 1.904 & 38.279 & 3.551 & 41.046 & 6.824 \\ 
 as20000102 & 19.185 & 2.402 & 1550.351 & 3.213 & 1925.916 & 498.000 \\ 
 advogato & 8.520 & 2.018 & 315.024 & 18.181 & 323.163 & 142.654 \\ 
 p2p-Gnutella09 & 3.744 & 2.336 & 90.252 & 1.708 & 100.427 & 13.846 \\ 
 hprd\_pp & 6.543 & 2.397 & 392.853 & 2.091 & 407.261 & 63.953 \\ 
 ca-HepTh & 7.655 & 2.075 & 42.267 & 3.308 & 46.326 & 5.593 \\ 
 Drosophila\_melanogaster & 5.573 & 2.346 & 69.457 & 1.822 & 75.456 & 6.904 \\ 
 oregon1\_010526 & 20.474 & 3.723 & 1603.739 & 2.703 & 1798.822 & 399.071 \\ 
 oregon2\_010526 & 17.330 & 4.748 & 1138.475 & 2.646 & 1227.105 & 520.955 \\ 
 Homo\_sapiens & 6.689 & 2.700 & 1475.113 & 1.898 & 1696.909 & 130.381 \\ 
 GoogleNw & 15.591 & 8.389 & 107.902 & 15763.000 & 15763.000 & 15763.000 \\ 
 dip20090126\_MAX & 2.883 & 3.826 & 5.833 & 6.590 & 5.708 & 7.392 \\ 
 com-amazon.all.cmty & 415.286 & 2.499 & 5471.982 & 3.297 & 8224.693 & 373.294 \\ 
 \hline
\end{tabular}}
\end{scriptsize}
\end{table}
\begin{table}[ht]
\setlength{\tabcolsep}{3pt}
\centering
\begin{scriptsize}
\tbl{Detailed comparison of the improvement factors, with $k=10$. \label{tab:compext10}}{
\begin{tabular}{|l|r|r|r|r|r|r|}
\multicolumn{7}{c}{\textbf{Directed Street}} \\
\hline
Network & \deltabfs & \sampl & \degcut & \degbound & \nbcut & \nbbound \\ 
\hline
 faroe-islands & 3.713 & 2.884 & 4.037 & 290.626 & 4.025 & 361.593 \\ 
 liechtenstein & 2.318 & 2.002 & 2.104 & 111.959 & 2.106 & 116.713 \\ 
 isle-of-man & 2.623 & 2.933 & 2.711 & 209.904 & 2.720 & 288.123 \\ 
 malta & 5.325 & 3.861 & 4.094 & 70.037 & 4.086 & 101.546 \\ 
 belize & 2.690 & 3.638 & 2.592 & 244.275 & 2.580 & 416.210 \\ 
 azores & 13.436 & 2.644 & 19.043 & 222.073 & 19.045 & 254.206 \\
\hline
\multicolumn{7}{c}{\textbf{Undirected Street}} \\
\hline
Network & \deltabfs & \sampl & \degcut & \degbound & \nbcut & \nbbound \\ 
\hline
 faroe-islands & 3.702 & 2.594 & 4.046 & 320.588 & 3.848 & 388.713 \\ 
 liechtenstein & 2.316 & 1.965 & 2.097 & 142.047 & 2.114 & 150.608 \\ 
 isle-of-man & 2.612 & 2.889 & 2.695 & 241.431 & 2.636 & 323.185 \\ 
 malta & 4.768 & 3.615 & 3.920 & 115.574 & 3.910 & 208.192 \\ 
 belize & 2.564 & 3.634 & 2.496 & 323.257 & 2.469 & 563.820 \\ 
 azores & 22.392 & 2.559 & 18.541 & 539.032 & 18.712 & 653.372 \\
\hline
\multicolumn{7}{c}{\textbf{Directed Complex}} \\
\hline
Network & \deltabfs & \sampl & \degcut & \degbound & \nbcut & \nbbound \\ 
\hline
 polblogs & 3.199 & 1.039 & 13.518 & 1.496 & 13.544 & 2.928 \\ 
 out.opsahl-openflights & 13.739 & 1.130 & 32.297 & 1.984 & 32.405 & 6.867 \\ 
 ca-GrQc & 9.863 & 1.356 & 25.238 & 3.096 & 25.786 & 4.565 \\ 
 out.subelj\_jung-j\_jung-j & 124.575 & 1.000 & 79.284 & 1.024 & 79.657 & 1.884 \\ 
 p2p-Gnutella08 & 5.684 & 1.064 & 12.670 & 3.241 & 12.763 & 3.599 \\ 
 out.subelj\_jdk\_jdk & 116.228 & 1.000 & 74.106 & 1.023 & 74.363 & 1.730 \\ 
 wiki-Vote & 9.812 & 1.205 & 166.941 & 1.453 & 174.775 & 25.411 \\ 
 p2p-Gnutella09 & 5.532 & 1.084 & 16.293 & 3.624 & 16.265 & 4.213 \\ 
 ca-HepTh & 7.772 & 1.586 & 31.314 & 3.013 & 32.604 & 4.356 \\ 
 freeassoc & 33.414 & 1.034 & 10.612 & 2.210 & 10.704 & 2.178 \\ 
 ca-HepPh & 7.682 & 2.077 & 10.322 & 3.042 & 10.340 & 4.010 \\ 
 out.lasagne-spanishbook & 13.063 & 1.483 & 303.044 & 1.067 & 351.262 & 94.351 \\ 
 out.cfinder-google & 16.725 & 1.413 & 36.364 & 2.665 & 24.765 & 3.017 \\ 
 ca-CondMat & 7.382 & 2.318 & 91.209 & 3.507 & 93.548 & 7.027 \\ 
 out.subelj\_cora\_cora & 13.699 & 1.287 & 12.763 & 1.334 & 12.909 & 2.072 \\ 
 out.ego-twitter & 2689.884 & 1.000 & 1817.032 & 28.157 & 2872.213 & 218.411 \\ 
 out.ego-gplus & 722.024 & 1.000 & 951.983 & 201.949 & 1085.361 & 482.204 \\ 
 as-caida20071105 & 20.974 & 1.615 & 997.996 & 1.371 & 1266.443 & 448.729 \\ 
 cit-HepTh & 4.030 & 2.179 & 11.361 & 1.486 & 11.423 & 2.832 \\ 
\hline
\multicolumn{7}{c}{\textbf{Undirected Complex}} \\
\hline
Network & \deltabfs & \sampl & \degcut & \degbound & \nbcut & \nbbound \\ 
\hline
 HC-BIOGRID & 5.528 & 1.240 & 10.714 & 3.102 & 10.036 & 3.058 \\ 
 facebook\_combined & 10.456 & 1.292 & 9.103 & 2.236 & 9.371 & 2.694 \\ 
 Mus\_musculus & 18.246 & 1.316 & 18.630 & 2.279 & 20.720 & 3.288 \\ 
 Caenorhabditis\_elegans & 11.445 & 1.405 & 58.729 & 1.904 & 68.905 & 7.605 \\ 
 ca-GrQc & 6.567 & 1.340 & 26.050 & 3.052 & 26.769 & 5.011 \\ 
 as20000102 & 19.185 & 1.529 & 196.538 & 1.314 & 209.674 & 52.210 \\ 
 advogato & 8.520 & 1.405 & 131.173 & 2.043 & 132.207 & 11.155 \\ 
 p2p-Gnutella09 & 3.744 & 1.632 & 79.093 & 1.623 & 87.357 & 12.941 \\ 
 hprd\_pp & 6.543 & 1.436 & 47.945 & 1.837 & 47.866 & 8.620 \\ 
 ca-HepTh & 7.655 & 1.546 & 32.612 & 2.961 & 34.407 & 4.677 \\ 
 Drosophila\_melanogaster & 5.573 & 1.672 & 50.840 & 1.646 & 54.637 & 5.743 \\ 
 oregon1\_010526 & 20.474 & 1.451 & 418.099 & 1.282 & 429.161 & 109.549 \\ 
 oregon2\_010526 & 17.330 & 1.560 & 364.277 & 1.302 & 371.929 & 71.186 \\ 
 Homo\_sapiens & 6.689 & 1.599 & 81.496 & 1.620 & 82.250 & 15.228 \\ 
 GoogleNw & 15.591 & 1.320 & 23.486 & 1.252 & 23.053 & 2.420 \\ 
 dip20090126\_MAX & 2.881 & 1.836 & 4.055 & 4.556 & 4.065 & 4.498 \\ 
 com-amazon.all.cmty & 414.765 & 1.618 & 3407.016 & 3.279 & 3952.370 & 199.386 \\ 
 \hline
\end{tabular}}
\end{scriptsize}
\end{table}
\begin{table}[ht]
\centering
\setlength{\tabcolsep}{2.7pt}
\begin{scriptsize}
\tbl{Detailed comparison of the improvement factors, with $k=100$. \label{tab:compext100}}{
\begin{tabular}{|l|r|r|r|r|r|r|}
\multicolumn{7}{c}{\textbf{Directed Street}} \\
\hline
Network & \deltabfs & \sampl & \degcut & \degbound & \nbcut & \nbbound \\ 
\hline
 faroe-islands & 3.713 & 2.823 & 3.694 & 150.956 & 3.691 & 168.092 \\ 
 liechtenstein & 2.318 & 1.998 & 2.078 & 84.184 & 2.086 & 86.028 \\ 
 isle-of-man & 2.620 & 2.902 & 2.551 & 139.139 & 2.567 & 167.808 \\ 
 malta & 5.282 & 3.850 & 3.933 & 56.921 & 3.942 & 76.372 \\ 
 belize & 2.688 & 3.617 & 2.526 & 184.718 & 2.516 & 268.634 \\ 
 azores & 13.334 & 2.628 & 18.380 & 194.724 & 18.605 & 220.013 \\ 
\hline
\multicolumn{7}{c}{\textbf{Undirected Street}} \\
\hline
Network & \deltabfs & \sampl & \degcut & \degbound & \nbcut & \nbbound \\ 
\hline
 faroe-islands & 3.702 & 2.548 & 3.693 & 159.472 & 3.523 & 171.807 \\ 
 liechtenstein & 2.311 & 1.959 & 2.072 & 96.782 & 2.095 & 99.768 \\ 
 isle-of-man & 2.607 & 2.847 & 2.533 & 153.859 & 2.468 & 183.982 \\ 
 malta & 4.758 & 3.605 & 3.745 & 89.929 & 3.730 & 137.538 \\ 
 belize & 2.562 & 3.629 & 2.428 & 226.582 & 2.406 & 323.257 \\ 
 azores & 22.345 & 2.548 & 18.092 & 411.760 & 18.384 & 476.253 \\
\hline
\multicolumn{7}{c}{\textbf{Directed Complex}} \\
\hline
Network & \deltabfs & \sampl & \degcut & \degbound & \nbcut & \nbbound \\ 
\hline
 polblogs & 3.198 & 1.037 & 3.951 & 1.245 & 3.961 & 1.731 \\ 
 out.opsahl-openflights & 13.739 & 1.124 & 5.524 & 1.456 & 5.553 & 1.740 \\ 
 ca-GrQc & 9.863 & 1.339 & 11.147 & 2.353 & 10.407 & 2.926 \\ 
 out.subelj\_jung-j\_jung-j & 123.393 & 1.000 & 78.473 & 1.021 & 78.798 & 1.787 \\ 
 p2p-Gnutella08 & 5.684 & 1.063 & 6.611 & 2.935 & 7.750 & 3.278 \\ 
 out.subelj\_jdk\_jdk & 114.210 & 1.000 & 73.522 & 1.020 & 73.755 & 1.669 \\ 
 wiki-Vote & 9.812 & 1.186 & 61.375 & 1.236 & 60.475 & 9.436 \\ 
 p2p-Gnutella09 & 5.531 & 1.083 & 6.370 & 3.109 & 7.650 & 3.508 \\ 
 ca-HepTh & 7.772 & 1.570 & 16.135 & 2.477 & 16.747 & 3.135 \\ 
 freeassoc & 33.266 & 1.032 & 6.314 & 2.154 & 6.428 & 2.138 \\ 
 ca-HepPh & 7.682 & 2.032 & 9.605 & 2.549 & 9.619 & 3.340 \\ 
 out.lasagne-spanishbook & 13.063 & 1.467 & 56.689 & 1.043 & 80.069 & 33.271 \\ 
 out.cfinder-google & 16.725 & 1.392 & 13.521 & 2.655 & 12.298 & 2.722 \\ 
 ca-CondMat & 7.382 & 2.288 & 16.884 & 2.602 & 16.950 & 2.824 \\ 
 out.subelj\_cora\_cora & 13.231 & 1.280 & 11.171 & 1.315 & 11.350 & 1.870 \\ 
 out.ego-twitter & 2621.659 & 1.000 & 1574.836 & 26.893 & 1908.731 & 110.236 \\ 
 out.ego-gplus & 722.024 & 1.000 & 522.333 & 181.754 & 522.576 & 236.280 \\ 
 as-caida20071105 & 20.974 & 1.606 & 17.971 & 1.216 & 18.694 & 5.479 \\ 
 cit-HepTh & 3.969 & 2.143 & 8.867 & 1.466 & 9.068 & 2.662 \\ 
\hline
\multicolumn{7}{c}{\textbf{Undirected Complex}} \\
\hline
Network & \deltabfs & \sampl & \degcut & \degbound & \nbcut & \nbbound \\ 
\hline
HC-BIOGRID & 5.528 & 1.236 & 4.452 & 2.154 & 4.345 & 1.999 \\ 
 facebook\_combined & 10.456 & 1.292 & 3.083 & 1.470 & 3.074 & 1.472 \\ 
 Mus\_musculus & 18.245 & 1.305 & 7.940 & 1.944 & 9.518 & 2.631 \\ 
 Caenorhabditis\_elegans & 11.445 & 1.391 & 11.643 & 1.463 & 12.296 & 3.766 \\ 
 ca-GrQc & 6.567 & 1.331 & 11.311 & 2.346 & 10.389 & 3.105 \\ 
 as20000102 & 19.185 & 1.512 & 7.318 & 1.174 & 7.956 & 3.593 \\ 
 advogato & 8.520 & 1.398 & 32.629 & 1.706 & 33.166 & 7.784 \\ 
 p2p-Gnutella09 & 3.744 & 1.625 & 11.378 & 1.374 & 11.867 & 3.695 \\ 
 hprd\_pp & 6.543 & 1.422 & 21.053 & 1.547 & 22.191 & 3.468 \\ 
 ca-HepTh & 7.655 & 1.539 & 16.406 & 2.454 & 17.030 & 3.301 \\ 
 Drosophila\_melanogaster & 5.573 & 1.655 & 29.115 & 1.487 & 30.979 & 4.614 \\ 
 oregon1\_010526 & 20.474 & 1.443 & 13.300 & 1.163 & 14.611 & 6.569 \\ 
 oregon2\_010526 & 17.330 & 1.530 & 18.203 & 1.173 & 21.758 & 7.258 \\ 
 Homo\_sapiens & 6.689 & 1.577 & 19.350 & 1.445 & 20.182 & 3.080 \\ 
 GoogleNw & 15.591 & 1.320 & 16.224 & 1.172 & 16.506 & 2.010 \\ 
 dip20090126\_MAX & 2.880 & 1.815 & 2.789 & 2.602 & 2.784 & 2.546 \\ 
 com-amazon.all.cmty & 414.765 & 1.605 & 1368.675 & 3.236 & 1654.150 & 97.735 \\
 \hline
\end{tabular}}
\end{scriptsize}
\end{table}

\mbox{}

\clearpage

\section{Real-World Large Networks Experiments: Detailed Results}
\vspace{-1em}
\begin{table}[!h]
\centering
\setlength{\tabcolsep}{2.5pt}
\begin{scriptsize}
\tbl{Detailed comparison of the improvement factors on big networks. \label{tab:compbigapp}}{
\begin{tabular}{|l|r|r|r|r|r|r|r|r|}
\multicolumn{9}{c}{\textbf{Directed Street}} \\ 
\hline
 &  &  & \multicolumn{2}{c|}{$k=1$} & \multicolumn{2}{c|}{$k=10$} & \multicolumn{2}{c|}{$k=100$} \\ 
Input & Nodes & Edges & Impr. & Time & Impr. & Time & Impr. & Time \\ 
\hline
egypt & 1054242 & 2123036 & 144.91 & 0:03:55 & 132.86 & 0:04:25 & 116.74 & 0:04:48 \\ 
new\_zealand & 2759124 & 5562944 & 447.55 & 0:02:34 & 443.95 & 0:02:35 & 427.31 & 0:02:38 \\ 
india & 16230072 & 33355834 & 1370.32 & 0:43:42 & 1369.05 & 0:44:17 & 1326.31 & 0:45:05 \\ 
california & 16905319 & 34303746 & 1273.66 & 0:54:56 & 1258.12 & 0:56:00 & 1225.73 & 0:56:02 \\ 
north\_am & 35236615 & 70979433 & 1992.68 & 2:25:58 & 1967.87 & 2:29:25 & 1877.78 & 2:37:14 \\ 
\hline
\multicolumn{9}{c}{\textbf{Undirected Street}} \\ 
\hline
 &  &  & \multicolumn{2}{c|}{$k=1$} & \multicolumn{2}{c|}{$k=10$} & \multicolumn{2}{c|}{$k=100$} \\ 
Input & Nodes & Edges & Impr. & Time & Impr. & Time & Impr. & Time \\ 
\hline
egypt & 1054242 & 1159808 & 344.86 & 0:01:54 & 340.30 & 0:01:54 & 291.71 & 0:02:11 \\ 
new\_zealand & 2759124 & 2822257 & 811.75 & 0:02:47 & 786.52 & 0:03:02 & 734.20 & 0:03:02 \\ 
india & 16230072 & 17004400 & 2455.38 & 0:44:21 & 2484.70 & 0:44:38 & 2422.40 & 0:44:21 \\ 
california & 16905319 & 17600566 & 2648.08 & 0:39:15 & 2620.17 & 0:42:04 & 2504.86 & 0:44:19 \\ 
north\_am & 35236615 & 36611653 & 7394.88 & 1:13:37 & 7530.80 & 1:15:01 & 7263.78 & 1:10:28 \\ 
\hline
\multicolumn{9}{c}{\textbf{Directed Complex}} \\ 
\hline
 &  &  & \multicolumn{2}{c|}{$k=1$} & \multicolumn{2}{c|}{$k=10$} & \multicolumn{2}{c|}{$k=100$} \\ 
Input & Nodes & Edges & Impr. & Time & Impr. & Time & Impr. & Time \\ 
\hline
cit-HepTh & 27769 & 352768 & 16.34 & 0:00:01 & 11.41 & 0:00:01 & 9.06 & 0:00:02 \\ 
cit-HepPh & 34546 & 421534 & 23.68 & 0:00:01 & 19.88 & 0:00:01 & 14.41 & 0:00:02 \\ 
p2p-Gnut31 & 62586 & 147892 & 194.19 & 0:00:01 & 44.24 & 0:00:01 & 19.34 & 0:00:04 \\ 
soc-Eps1 & 75879 & 508837 & 243.14 & 0:00:01 & 43.75 & 0:00:01 & 33.60 & 0:00:05 \\ 
soc-Slash0811 & 77360 & 828161 & 1007.70 & 0:00:00 & 187.46 & 0:00:00 & 21.09 & 0:00:18 \\ 
twitter\_comb & 81306 & 2684592 & 1024.32 & 0:00:01 & 692.96 & 0:00:01 & 145.68 & 0:00:05 \\ 
Slash090221 & 82140 & 549202 & 177.82 & 0:00:02 & 162.30 & 0:00:02 & 108.53 & 0:00:03 \\ 
gplus\_comb & 107614 & 24476570 & 1500.35 & 0:00:04 & 235.17 & 0:00:04 & 62.54 & 0:02:19 \\ 
soc-sign-eps & 131828 & 840799 & 225.91 & 0:00:03 & 161.58 & 0:00:03 & 39.26 & 0:00:16 \\ 
email-EuAll & 265009 & 418956 & 4724.80 & 0:00:00 & 3699.48 & 0:00:00 & 1320.22 & 0:00:01 \\ 
web-Stanford & 281903 & 2312497 & 13.59 & 0:04:00 & 8.70 & 0:04:00 & 7.47 & 0:07:15 \\ 
web-NotreD & 325729 & 1469679 & 1690.08 & 0:00:02 & 132.83 & 0:00:02 & 66.88 & 0:00:49 \\ 
amazon0601 & 403394 & 3387388 & 10.81 & 0:14:54 & 8.87 & 0:14:54 & 6.84 & 0:22:04 \\ 
web-BerkStan & 685230 & 7600595 & 3.95 & 1:36:21 & 3.67 & 1:36:21 & 3.47 & 1:49:12 \\ 
web-Google & 875713 & 5105039 & 228.61 & 0:01:51 & 96.63 & 0:01:51 & 38.69 & 0:10:29 \\ 
youtube-links & 1138494 & 4942297 & 662.78 & 0:01:33 & 200.68 & 0:01:33 & 125.72 & 0:07:02 \\ 
in-2004 & 1382870 & 16539643 & 43.68 & 0:41:45 & 29.89 & 0:41:45 & 16.68 & 1:48:42 \\ 
trec-wt10g & 1601787 & 8063026 & 33.86 & 0:36:01 & 20.39 & 0:36:01 & 16.73 & 1:10:54 \\ 
soc-pokec & 1632803 & 22301964 & 21956.64 & 0:00:17 & 2580.43 & 0:06:14 & 1106.90 & 0:12:35 \\ 
zhishi-hudong & 1984484 & 14682258 & 30.37 & 1:25:38 & 27.71 & 1:25:38 & 24.95 & 1:53:27 \\ 
zhishi-baidu & 2141300 & 17632190 & 44.05 & 1:17:52 & 38.61 & 1:17:52 & 23.17 & 3:08:05 \\ 
wiki-Talk & 2394385 & 5021410 & 34863.42 & 0:00:08 & 28905.76 & 0:00:08 & 9887.18 & 0:00:18 \\ 
cit-Patents & 3774768 & 16518947 & 9454.04 & 0:02:07 & 8756.77 & 0:02:07 & 8340.18 & 0:02:13 \\ 
\hline
\multicolumn{9}{c}{\textbf{Undirected Complex}} \\ 
\hline
 &  &  & \multicolumn{2}{c|}{$k=1$} & \multicolumn{2}{c|}{$k=10$} & \multicolumn{2}{c|}{$k=100$} \\ 
Input & Nodes & Edges & Impr. & Time & Impr. & Time & Impr. & Time \\ 
\hline
ca-HepPh & 12008 & 118489 & 10.37 & 0:00:00 & 10.20 & 0:00:00 & 9.57 & 0:00:01 \\ 
CA-AstroPh & 18772 & 198050 & 62.47 & 0:00:00 & 28.87 & 0:00:01 & 14.54 & 0:00:01 \\ 
CA-CondMat & 23133 & 93439 & 247.35 & 0:00:00 & 84.48 & 0:00:00 & 17.06 & 0:00:01 \\ 
email-Enron & 36692 & 183831 & 365.92 & 0:00:00 & 269.80 & 0:00:00 & 41.95 & 0:00:01 \\ 
loc-brightkite & 58228 & 214078 & 308.03 & 0:00:00 & 93.85 & 0:00:01 & 53.49 & 0:00:02 \\ 
flickrEdges & 105938 & 2316948 & 39.61 & 0:00:23 & 17.89 & 0:00:55 & 15.39 & 0:01:16 \\ 
gowalla & 196591 & 950327 & 2412.26 & 0:00:01 & 33.40 & 0:01:18 & 28.13 & 0:01:33 \\ 
com-dblp & 317080 & 1049866 & 500.83 & 0:00:10 & 300.61 & 0:00:17 & 99.64 & 0:00:52 \\ 
com-amazon & 334863 & 925872 & 37.76 & 0:02:21 & 31.33 & 0:02:43 & 18.68 & 0:04:34 \\ 
com-lj.all & 477998 & 530872 & 849.57 & 0:00:07 & 430.72 & 0:00:13 & 135.14 & 0:00:45 \\ 
com-youtube & 1134890 & 2987624 & 2025.32 & 0:00:32 & 167.45 & 0:06:44 & 110.39 & 0:09:16 \\ 
soc-pokec & 1632803 & 30622564 & 46725.71 & 0:00:18 & 8664.33 & 0:02:16 & 581.52 & 0:18:12 \\ 
as-skitter & 1696415 & 11095298 & 185.91 & 0:19:06 & 164.24 & 0:21:53 & 132.38 & 0:27:06 \\ 
com-orkut & 3072441 & 117185083 & 23736.30 & 0:02:32 & 255.17 & 2:54:58 & 69.23 & 15:02:06 \\ 
youtube-u-g & 3223585 & 9375374 & 11473.14 & 0:01:07 & 91.17 & 2:07:23 & 66.23 & 2:54:12 \\
\hline
\end{tabular}}
\end{scriptsize}
\end{table}
\FloatBarrier
\newpage

\section{IMDB Case Study: Detailed Results}
\begin{table*}[!h]
\begin{scriptsize}
\centering
\setlength{\tabcolsep}{1.2pt}
\tbl{Detailed ranking of the IMDB actor graph. \label{tab:imdb}}{
\begin{tabular}{|l|c|c|c|c|}
\hline
 & \textbf{1940} & \textbf{1945} & \textbf{1950} & \textbf{1955}\\ 
\hline
\textsc{1} &           Semels, Harry (I)  &               Corrado, Gino  &               Flowers, Bess  &               Flowers, Bess \\ 
\textsc{2} &               Corrado, Gino  &               Steers, Larry  &               Steers, Larry  &            Harris, Sam (II) \\ 
\textsc{3} &               Steers, Larry  &               Flowers, Bess  &               Corrado, Gino  &               Steers, Larry \\ 
\textsc{4} &              Bracey, Sidney  &           Semels, Harry (I)  &            Harris, Sam (II)  &               Corrado, Gino \\ 
\textsc{5} &              Lucas, Wilfred  &              White, Leo (I)  &           Semels, Harry (I)  &          Miller, Harold (I) \\ 
\textsc{6} &              White, Leo (I)  &            Mortimer, Edmund  &           Davis, George (I)  &            Farnum, Franklyn \\ 
\textsc{7} &           Martell, Alphonse  &               Boteler, Wade  &             Magrill, George  &             Magrill, George \\ 
\textsc{8} &           Conti, Albert (I)  &             Phelps, Lee (I)  &             Phelps, Lee (I)  &               Conaty, James \\ 
\textsc{9} &               Flowers, Bess  &                 Ring, Cyril  &                 Ring, Cyril  &           Davis, George (I) \\ 
\textsc{10} &                 Sedan, Rolfe  &               Bracey, Sidney  &              Moorhouse, Bert  &               Cording, Harry \\ 
\hline
\hline
 & \textbf{1960} & \textbf{1965} & \textbf{1970} & \textbf{1975}\\ 
\hline
\textsc{1} &               Flowers, Bess  &               Flowers, Bess  &               Flowers, Bess  &               Flowers, Bess \\ 
\textsc{2} &            Harris, Sam (II)  &            Harris, Sam (II)  &            Harris, Sam (II)  &            Harris, Sam (II) \\ 
\textsc{3} &            Farnum, Franklyn  &            Farnum, Franklyn  &              Tamiroff, Akim  &              Tamiroff, Akim \\ 
\textsc{4} &          Miller, Harold (I)  &          Miller, Harold (I)  &            Farnum, Franklyn  &               Welles, Orson \\ 
\textsc{5} &                 Chefe, Jack  &              Holmes, Stuart  &          Miller, Harold (I)  &              Sayre, Jeffrey \\ 
\textsc{6} &              Holmes, Stuart  &              Sayre, Jeffrey  &              Sayre, Jeffrey  &          Miller, Harold (I) \\ 
\textsc{7} &               Steers, Larry  &                 Chefe, Jack  &          Quinn, Anthony (I)  &            Farnum, Franklyn \\ 
\textsc{8} &               Par\`is, Manuel  &               Par\`is, Manuel  &         O'Brien, William H.  &             Kemp, Kenner G. \\ 
\textsc{9} &         O'Brien, William H.  &         O'Brien, William H.  &              Holmes, Stuart  &          Quinn, Anthony (I) \\ 
\textsc{10} &               Sayre, Jeffrey  &            Stevens, Bert (I)  &            Stevens, Bert (I)  &          O'Brien, William H. \\ 
\hline
\hline
 & \textbf{1980} & \textbf{1985} & \textbf{1990} & \textbf{1995}\\ 
\hline
\textsc{1} &               Flowers, Bess  &               Welles, Orson  &               Welles, Orson  &        Lee, Christopher (I) \\ 
\textsc{2} &            Harris, Sam (II)  &               Flowers, Bess  &             Carradine, John  &               Welles, Orson \\ 
\textsc{3} &               Welles, Orson  &            Harris, Sam (II)  &               Flowers, Bess  &          Quinn, Anthony (I) \\ 
\textsc{4} &              Sayre, Jeffrey  &          Quinn, Anthony (I)  &        Lee, Christopher (I)  &           Pleasence, Donald \\ 
\textsc{5} &          Quinn, Anthony (I)  &              Sayre, Jeffrey  &            Harris, Sam (II)  &               Hitler, Adolf \\ 
\textsc{6} &              Tamiroff, Akim  &             Carradine, John  &          Quinn, Anthony (I)  &             Carradine, John \\ 
\textsc{7} &          Miller, Harold (I)  &             Kemp, Kenner G.  &           Pleasence, Donald  &               Flowers, Bess \\ 
\textsc{8} &             Kemp, Kenner G.  &          Miller, Harold (I)  &              Sayre, Jeffrey  &             Mitchum, Robert \\ 
\textsc{9} &            Farnum, Franklyn  &            Niven, David (I)  &               Tovey, Arthur  &            Harris, Sam (II) \\ 
\textsc{10} &             Niven, David (I)  &               Tamiroff, Akim  &                Hitler, Adolf  &               Sayre, Jeffrey \\ 
\hline
\hline
 & \textbf{2000} & \textbf{2005} & \textbf{2010} & \textbf{2014}\\ 
 \hline
\textsc{1} &        Lee, Christopher (I)  &               Hitler, Adolf  &               Hitler, Adolf  &         Madsen, Michael (I) \\ 
\textsc{2} &               Hitler, Adolf  &        Lee, Christopher (I)  &        Lee, Christopher (I)  &                Trejo, Danny \\ 
\textsc{3} &           Pleasence, Donald  &                Steiger, Rod  &              Hopper, Dennis  &               Hitler, Adolf \\ 
\textsc{4} &               Welles, Orson  &      Sutherland, Donald (I)  &          Keitel, Harvey (I)  &           Roberts, Eric (I) \\ 
\textsc{5} &          Quinn, Anthony (I)  &           Pleasence, Donald  &            Carradine, David  &             De Niro, Robert \\ 
\textsc{6} &                Steiger, Rod  &              Hopper, Dennis  &      Sutherland, Donald (I)  &               Dafoe, Willem \\ 
\textsc{7} &             Carradine, John  &          Keitel, Harvey (I)  &               Dafoe, Willem  &          Jackson, Samuel L. \\ 
\textsc{8} &      Sutherland, Donald (I)  &          von Sydow, Max (I)  &          Caine, Michael (I)  &          Keitel, Harvey (I) \\ 
\textsc{9} &             Mitchum, Robert  &          Caine, Michael (I)  &               Sheen, Martin  &            Carradine, David \\ 
\textsc{10} &                Connery, Sean  &                Sheen, Martin  &                    Kier, Udo  &         Lee, Christopher (I) \\
\hline
\end{tabular}}
\end{scriptsize}
\end{table*}
\begin{table*}[!h]
\begin{scriptsize}
\centering
\tbl{Detailed improvement factors on the IMDB actor graph. \label{tab:imdb2}}{
\begin{tabular}{|l|c|c|c|c|}
\hline
\textsc{Year} & \textbf{1940} & \textbf{1945} & \textbf{1950} & \textbf{1955}\\ 
\hline
\textsc{Nodes} & 69\,011 & 83\,068 & 97\,824 & 120\,430\\ 
\textsc{Edges} & 3\,417\,144 & 5\,160\,584 & 6\,793\,184 & 8\,674\,159\\ 
\textsc{Impr ($k=1$)} & 51.74 & 61.46 & 67.50 & 91.46 \\ 
\textsc{Impr ($k=10$)} & 32.95 & 40.73 & 44.72 & 61.52 \\ 
\hline
\hline
\textsc{Year} & \textbf{1960} & \textbf{1965} & \textbf{1970} & \textbf{1975}\\ 
\hline
\textsc{Nodes} & 146\,253 & 174\,826 & 210\,527 & 257\,896\\ 
\textsc{Edges} & 11\,197\,509 & 12\,649\,114 & 14\,209\,908 & 16\,080\,065\\ 
\textsc{Impr ($k=1$)} & 122.63 & 162.06 & 211.05 & 285.57\\ 
\textsc{Impr ($k=10$)} & 80.50 & 111.51 & 159.32 & 221.07\\ 
\hline
\hline
\textsc{Year} & \textbf{1980} & \textbf{1985} & \textbf{1990} & \textbf{1995}\\
\hline
\textsc{Nodes} & 310\,278 & 375\,322 & 463\,078 & 557\,373\\ 
\textsc{Edges} & 18\,252\,462 & 20\,970\,510 & 24\,573\,288 & 28\,542\,684\\ 
\textsc{Impr ($k=1$)} & 380.52 & 513.40 & 719.21 & 971.11 \\ 
\textsc{Impr ($k=10$)} & 296.24 & 416.27 & 546.77 & 694.72 \\ 
\hline
\hline
\textsc{Year} & \textbf{2000} & \textbf{2005} & \textbf{2010} & \textbf{2014}\\ 
\hline
\textsc{Nodes} & 681\,358 & 880\,032 & 1\,237\,879 & 1\,797\,446\\ 
\textsc{Edges} & 33\,564\,142 & 41\,079\,259 & 53\,625\,608 & 72\,880\,156\\ 
\textsc{Impr ($k=1$)} & 1326.53 & 1897.31 & 2869.14 & 2601.52\\
\textsc{Impr ($k=10$)} & 838.53 & 991.89 & 976.63 & 1390.32\\
\hline
\end{tabular}}
\end{scriptsize}
\end{table*}
\end{document}